\documentclass[letterpaper]{scrartcl} 

\usepackage[]{brandl}
\usepackage{bm,esvect,tabu}
\usepackage{tabularx}

\theoremstyle{definition}
\newtheorem{swf}{Function}

\newcolumntype{Y}{>{\centering\arraybackslash}X}

\newcommand{\citeauthorhloff}[1]{{\protect\NoHyper\citeauthor{#1}\protect\endNoHyper}}

\newcommand{\sgn}{\mathrm{sign}}
\DeclareMathOperator{\ev}{\mathbb{E}}% 

\newcommand{\baru}{\ensuremath{\mathrm{BARU}}\xspace}

\newcommand{\p}[1][]{% 
\ifthenelse{\equal{#1}{}}{{\bm\succcurlyeq}}{{\bm\succcurlyeq_{#1}}}% 
}

\newcommand{\s}[1][]{% 
\ifthenelse{\equal{#1}{}}{{\succcurlyeq}}{{\succcurlyeq_{#1}}}% 
}

\newcommand{\srel}[1][]{% 
\ifthenelse{\equal{#1}{}}{\succcurlyeq}{\succcurlyeq_{#1}}% 
}

\title{Belief-Averaging and Relative Utilitarianism}

\author{Florian Brandl\thanks{University of Bonn, Germany, \texttt{florian.brandl@uni-bonn.de}}\\[1ex]
\large (\href{http://brandlf.com/docs/baru.pdf}{click for most recent revision})}

\date{}

\begin{document}
	
\maketitle

\begin{abstract}
	We consider social welfare functions when the preferences of individual agents and society maximize subjective expected utility in the tradition of Savage. A system of axioms is introduced whose unique solution is the social welfare function that averages the agents’ beliefs and sums up their utility functions, normalized to have the same range. The first distinguishing axiom requires positive association of society's preferences with the agents' preferences for acts about which beliefs agree. The second is a weakening of Arrow’s independence of irrelevant alternatives that only applies to non-redundant acts.
\par\vskip\baselineskip
\textbf{Keywords:} Uncertainty, subjective expected utility, Pareto optimality, relative utilitarianism.
\end{abstract}

\citet{Hars55a} studied how a society should rank risky alternatives.
His aggregation theorem shows that if the agents as well as society are von Neumann-Morgenstern expected utility maximizers and the preferences of society satisfy the Pareto principle, then the utility function of society is a linear combination of the agents' utility functions.
Among others, two streams of literature originated from here.

The first studies the implications of the Pareto principle when the social alternatives are not lotteries but uncertain acts.
The approach of \citet{Sava54a} is to view acts as maps from states of the world to outcomes.
A preference relation over acts maximizes subjective expected utility if it ranks acts by their expected utility for a utility function over outcomes and an idiosyncratic belief that assigns probabilities to states.
\citet{Mong95a} showed that if agents and society are \emph{subjective} expected utility maximizers, the Pareto principle implies that the preferences of society have to coincide with those of one of the agents unless all agents hold the same belief (in which case we are back to \citeauthorhloff{Hars55a}'s setting) or the same utility function.
\citeauthorhloff{Mong95a} noted, however, that the Pareto principle also applies to \emph{spurious unanimities} where all agents prefer one act to another because of differences in beliefs \emph{and} differences in utility functions.
These differences can cancel out each other and lead to a unanimous preference with agents having incompatible reasons for preferring the first act. 
\citet{Mong97a} argues that the Pareto principle in its full force is thus not compelling when applied to subjective expected utility maximizers.
\citet*{GSS04a} introduced a restricted Pareto condition, which avoids spurious unanimities.
It applies only to acts that induce the same distribution over outcomes for all agents' beliefs.
Such acts correspond to lotteries in \citeauthorhloff{Hars55a}'s setting.
Thus, by \citeauthorhloff{Hars55a}'s aggregation theorem, the restricted Pareto condition implies that the utility function of society is a linear combination of the agents' utility functions.
Surprisingly, it also implies that the belief of society is an affine combination of the agents' beliefs.

The second stream stays with decisions under risk and deals with a question left open by \citeauthorhloff{Hars55a}'s work: if we follow \citeauthorhloff{Hars55a}'s utilitarian doctrine, how do we assign weights to the agents' utility functions?
The approach pursued by \citet{DhMe99a} considers social welfare functions in the tradition of \citet{Arro51a}, which map every preference profile to a preference relation for society.
\citeauthorhloff{DhMe99a} impose a set of axioms on social welfare functions that, roughly speaking, constitute weakenings of the conditions in \citeauthorhloff{Arro51a}'s impossibility theorem.
They show that only \emph{relative utilitarianism} satisfies all of their axioms: normalize the agents' utility functions so that their range is the unit interval and sum them up.\footnote{Several other characterizations of relative utilitarianism, for example, by \citet{Karn98a,Dhil98a,Sega00a,BoCh17a}, exist. 
The discussion focuses on the characterization of \citet{DhMe99a} since it is most relevant to the present paper.}
Their result is multi-profile since the axioms relate the preferences of society for different preference profiles to each other.
By contrast, the results of \citeauthorhloff{Hars55a}, \citeauthorhloff{Mong95a}, and \citeauthorhloff{GSS04a} discussed above are single-profile since Pareto conditions talk about a single preference profile considered in isolation.

This paper seeks to combine both streams.
That is, it takes a multi-profile approach to decisions under uncertainty.
We consider social welfare functions when the preferences of the agents as well as the society are subjective expected utility-maximizing.
The goal is to determine a social welfare function that can be justified on axiomatic grounds. 
To this end, two axioms are introduced.
First, assume that society is indifferent between two acts and one of the agents is completely indifferent (that is, indifferent between all acts). 
Suppose this agent changes its preferences so that its new belief induces the same distribution over outcomes as the belief underlying society's preferences before the change for either of the two acts.
Then, after the change, society's preference over the two acts should coincide with the focal agent's new preference.
This axiom, called \emph{restricted monotonicity}, captures the idea of positive association between the agents' preferences and society's preferences: if an agent changes its preferences in favor of an act, society's preferences should change likewise.
More precisely, restricted monotonicity applies the restricted Pareto condition of \citet{GSS04a} to society's preferences before the change and the focal agent's new preferences.
 
Second, recall that {\protect\NoHyper\cites{Arro51a}\protect\endNoHyper} independence of irrelevant alternatives requires that society's preference over any pair of alternatives---acts in our case---depends only on the agents' preferences over these two acts.
In the present setting, it forces the social welfare function to ignore the agents' expected utilities for the two acts apart from how they are ordered.
The second axiom weakens \citeauthorhloff{Arro51a}'s condition so that society's preference may also depend on the agents' expected utilities for the two acts.
(In fact, it is even weaker.)
Referring to \citeauthorhloff{DhMe99a}'s independence of redundant alternatives axiom by which it is inspired, it is called \emph{independence of redundant acts}.

The main result of this paper shows that these two conditions, together with four undiscriminating axioms, characterize \emph{belief-averaging and relative utilitarianism}: average the agents' beliefs and sum up their utility functions, normalized to the unit interval.
The resulting belief and utility function determine society's preferences.
This extends \citeauthorhloff{GSS04a}'s linear aggregation result to a multi-profile framework and \citeauthorhloff{DhMe99a}'s relative utilitarianism to subjective expected utility maximizers.
\Cref{tab:relatedresults} summarizes how the present paper relates to the described works.

\begin{table}
	\centering
	\begin{tabularx}{.9\textwidth}{lYY}
		& Risk & Uncertainty\\
		\midrule
			Single-profile & linear aggregation of utilities \citep{Hars55a} & linear aggregation of beliefs and utilities \citep{GSS04a}\\
			Multi-profile & relative utilitarianism \citep{DhMe99a} & belief-averaging and relative utilitarianism\\
	\end{tabularx}
	\caption{Placement of the present paper relative to previous works. The papers in the ``Risk'' (``Uncertainty'') column assume that agents as well as society are (subjective) expected utility maximizers. The single-profile results use only the (restricted) Pareto condition. The multi-profile results rely on several axioms, some of which relate the preferences of society for different preference profiles to each other.}
	\label{tab:relatedresults}
\end{table}

\paragraph{Discussion of the axioms}

Restricted monotonicity is defined as follows. 
Suppose we apply a social welfare function to some preference profile  and the thus-derived preferences of society rank two acts $f$ and $g$ as equally desirable.
Restricted monotonicity applies when a completely indifferent agent (if one exists) changes its preferences. 
If (i) this agent prefers $f$ to $g$ after changing preferences and (ii) for either of $f$ and $g$, the belief underlying the agent's new preferences induces the same distribution over outcomes as the belief underlying society's preferences for the original profile, then society should prefer $f$ to $g$ as well after the preference change.\footnote{Two beliefs induce the same distribution over outcomes for an act $f$ if the push-forward measure on the set of outcomes under $f$ is the same for both beliefs. Equivalently, both beliefs agree on the sigma-algebra over the set of states induced by $f$.}
 
Part (ii) above makes restricted monotonicity apply only to acts that are risky alternatives for society's belief for the original profile and the new belief of the previously indifferent agent.
That is, acts for which differences in those two beliefs are irrelevant and the utility functions are the only source of preference heterogeneity. 
The rationale for part (ii) is the same as that of \citet{GSS04a} for restricting the Pareto condition to acts that induce the same distribution over outcomes for all agents' beliefs: it avoids accidental preference agreements through differences in beliefs and utility functions.
To see why restricted monotonicity is less substantiated if differences in beliefs matter, consider the following example.

\begin{example}\label{ex:1}
	There are two states of the world $\omega_1$ and $\omega_2$, and three possible outcomes $a$, $b$, and $c$. 
	Call $f$ the act that results in $a$ in the state $\omega_1$ and in $b$ in the state $\omega_2$ and $g$ the act that gives outcome $c$ for both states.
	Consider two agents, Agent 1 and Agent 2, were the belief and utility function of Agent 1 are as in \Cref{tab:example} and Agent 2 is completely indifferent.\footnote{The structure of this example is very similar to the example of \citet{GSS04a} of the two gentlemen who are contemplating having a duel.} 
	Calculating expected utilities shows that Agent 1 is indifferent between $f$ and $g$. 
	Thus, let us assume that society is indifferent between $f$ and $g$ in this case.
	An unrestricted monotonicity condition (without part (ii)) would then demand that society prefers $f$ to $g$ if Agent 2 changed it preferences to those in \Cref{tab:example}.
	This conclusion is questionable.
		Agent~1 has a high expected utility for $f$ in anticipation of $f$ giving outcome $a$ and a high utility for $a$.
		Agent~2 has a high expected utility for $f$ because of a high probability on $f$ resulting in outcome $b$ and a high utility for $b$. 
		So both, their probability rankings and their utility rankings are reversed.
		Only because these differences cancel out do they both arrive at a high expected utility for $f$.
		In other words, they have different and incompatible reasons for preferring $f$.\footnote{For example, there is no belief (possibly that of an impartial observer) so that if both agents held that belief, both of them would prefer $f$ to $g$. In fact, for any belief with a probability of more than 20\% for $\omega_1$ and more than 10\% for $\omega_2$, both agents would prefer $g$ to $f$.}
		It is thus dubious to insist on monotonicity in this example.
\end{example}

\begin{table}
	\centering
	\begin{tabular}{lccccccc}
	\cmidrule[\heavyrulewidth]{1-8}
	& \multicolumn{2}{c}{belief} & \multicolumn{3}{c}{utility function} & \multicolumn{2}{c}{expected utilities}\\
	& $\omega_1$ & $\omega_2$ & $a$ & $b$ & $c$ & $f$ & $g$\\
	\cmidrule{1-8}
	Agent 1 & 90\% & 10\% & 1 & 0 & 0.9 & 0.9 & 0.9\\
	Agent 2 & 10\% & 90\% & 0 & 1 & 0.8 & 0.9 & 0.8\\
	\cmidrule[\heavyrulewidth]{1-8}
	\end{tabular}
	\caption{Numerical values for \Cref{ex:1}.
	$\omega_1$ and $\omega_2$ are the two states of the world and $a$, $b$, and $c$ are possible outcomes.
	$f$ is an act that yields $a$ if the state is $\omega_1$ and $b$ if the state is $\omega_2$; $g$ yields $c$ in both states.
	The last two rows give the beliefs, utility functions, and expected utilities of Agent 1 and Agent 2.}
	\label{tab:example}
\end{table}

This is not to say that any monotonicity condition stronger than the one defined above is undesirable. 
The approach here is merely to be cautious about the assumptions that are imposed.
Note however that when requiring that society is indifferent between all acts if all agents are (which we call faithfulness), an unrestricted monotonicity condition implies the (unrestricted) Pareto principle and falls prey to the impossibility of \citet{Mong97a}.

The restricted monotonicity axiom and faithfulness together imply the restricted Pareto condition of \citeauthorhloff{GSS04a}.
Hence, the belief and utility function of the society have to be linear combinations of the agents' beliefs and utility functions.
The restricted Pareto condition allows the weights in both linear combinations to be arbitrary functions of \emph{all} agents' preferences.
The additional strength of restricted monotonicity implies that the weight of an agent in either linear combination depends only on the agent's own preferences.

{\protect\NoHyper\cites{Arro51a}\protect\endNoHyper} independence of irrelevant alternatives prescribes that the preferences of society over any two acts must depend only on the agents' preferences over the two acts, and not on their preferences over other acts.
It precludes that society's ranking of two acts depends more flexibly on the agents' expected utilities for the two acts.
For example, consider two agents whose expected utilities for three acts $f$, $g$, and $h$ are $1$, $0$, and $\alpha$ (for the first agent) and $0$, $1$, and $\alpha$ (for the second agent). 
Independence of irrelevant alternatives asserts that the preferences of this two-agent society over the three acts are the same for every value of $\alpha$ strictly between $0$ and $1$.
However, it does not seem unreasonable that $h$ is society's least preferred act if $\alpha$ is close to $0$ and the most preferred act if $\alpha$ is close to $1$.
This is ruled out by independence of irrelevant alternatives. 
Independence of redundant acts weakens independence of irrelevant alternatives so that it does not apply to profiles with different values for~$\alpha$.

Since the agents report preference relations that identify utility functions only up to scaling, we need to capture the above intuition on the level of preferences.
Call a set of acts $\mathcal A'$ \emph{co-redundant} for a profile if every act is unanimously indifferent to some act in $\mathcal A'$.\footnote{In the formal definition in \Cref{sec:axioms}, we will additionally require that each agent's preferences over $\mathcal A'$ are subjective expected utility-maximizing for a \emph{non-atomic} belief.}
Every act outside a co-redundant set $\mathcal A'$ is redundant in the sense that $\mathcal A'$ contains an act that all agents consider equally desirable.
Independence of redundant acts demands that if $\mathcal A'$ is co-redundant for two profiles and every agent has the same preferences over the acts in $\mathcal A'$ in both profiles, then society's preferences over $\mathcal A'$ are the same for both profiles.
Thus, society's preferences over the co-redundant acts must not depend on the agents' preferences over the redundant acts.
This condition restricts independence of irrelevant alternatives to pairs of profiles that are equal except for preferences over redundant acts.
Independence of redundant acts extends {\protect\NoHyper\cites{DhMe99a}\protect\endNoHyper} independence of redundant alternatives to decisions under uncertainty.

It is convenient to consider the situation in utility space.
A preference profile induces a map from acts to $\mathbb R^n$ (when $n$ is the number of agents) by mapping each act to the vector of expected utilities with one component for every agent.
A set of acts is co-redundant for a profile if and only if removing all other acts does not change the image of this map.
Hence, for two profiles that agree on a co-redundant set, this map has the same image and every act in the co-redundant set maps to the same point.
Independence of redundant acts requires that society ranks the co-redundant acts in the same order for both profiles.
Thus, society's preferences over the co-redundant acts can only depend on the agents' expected utilities and not on other features of the profiles.
The rationale for this condition is that the agents' utilities should be traded off against each other in the same way across profiles.
In this sense, independence of redundant acts is a consistency requirement on social welfare functions.
\Cref{fig:ira} illustrates the preceding discussion.

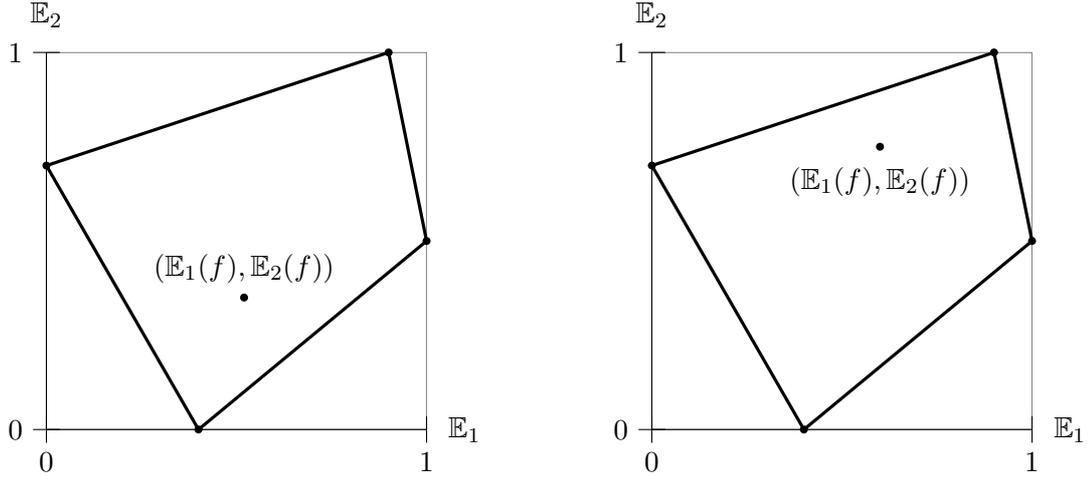
\begin{figure}[t]
			\centering
			\hfill
			\begin{tikzpicture}[scale=5]
			  \draw[gray,very thin,step=1] (0,0) grid (1,1);
			 
			  \draw (0,0) -- (1,0);
			  \foreach \x/\xtext in {0/0, 1/1}
			    \draw[shift={(\x,0)}] (0pt,1pt) -- (0pt,-1pt) node[below] {$\xtext$};
				 \node at (1.10,0) {$\ev_1$};
			 
			  \draw (0,0) -- (0,1);
			  \foreach \y/\ytext in {0/0, 1/1}
			    \draw[shift={(0,\y)}] (1pt,0pt) -- (-1pt,0pt) node[left] {$\ytext$};
				 
				\node at (0,1.1) {$\ev_2$};

			  \node[circle, fill=black, inner sep=0pt, minimum size=3pt, label = below:] at (.4,0) {};
			  \node[circle, fill=black, inner sep=0pt, minimum size=3pt, label = above:] at (.9,1) {};
			  \node[circle, fill=black, inner sep=0pt, minimum size=3pt, label = left:] at (0,.7) {};
			  \node[circle, fill=black, inner sep=0pt, minimum size=3pt, label = left:] at (1,.5) {};
			  \node[circle, fill=black, inner sep=0pt, minimum size=3pt, label={[label distance=0cm]90:{$(\ev_1(f),\ev_2(f))$}}] at (.52,.35) {};

			  \draw[shorten >=-.15pt,very thick,triangle 90 cap-triangle 90 cap] (.4,0) -- (1,.5);
			  \draw[shorten >=-.15pt,very thick,triangle 90 cap-triangle 90 cap] (1,.5) -- (.9,1);
			  \draw[shorten >=-.15pt,very thick] (.4,0) -- (0,.7);
			  \draw[shorten >=-.15pt,very thick,triangle 90 cap-triangle 90 cap] (0,.7) -- (.9,1);
			\end{tikzpicture}
			\hfill
			\begin{tikzpicture}[scale=5]
			  \draw[gray,very thin,step=1] (0,0) grid (1,1);
			 
			  \draw (0,0) -- (1,0);
			  \foreach \x/\xtext in {0/0, 1/1}
			    \draw[shift={(\x,0)}] (0pt,1pt) -- (0pt,-1pt) node[below] {$\xtext$};
				 \node at (1.10,0) {$\ev_1$};
			 
			  \draw (0,0) -- (0,1);
			  \foreach \y/\ytext in {0/0, 1/1}
			    \draw[shift={(0,\y)}] (1pt,0pt) -- (-1pt,0pt) node[left] {$\ytext$};
				 
				\node at (0,1.1) {$\ev_2$};

			  \node[circle, fill=black, inner sep=0pt, minimum size=3pt, label = below:] at (.4,0) {};
			  \node[circle, fill=black, inner sep=0pt, minimum size=3pt, label = above:] at (.9,1) {};
			  \node[circle, fill=black, inner sep=0pt, minimum size=3pt, label = left:] at (0,.7) {};
			  \node[circle, fill=black, inner sep=0pt, minimum size=3pt, label = left:] at (1,.5) {};
			  \node[circle, fill=black, inner sep=0pt, minimum size=3pt, label={[xshift=0cm, yshift=-0.85cm]:{$(\ev_1(f),\ev_2(f))$}}] at (.6,.75) {};

			  \draw[shorten >=-.15pt,very thick,triangle 90 cap-triangle 90 cap] (.4,0) -- (1,.5);
			  \draw[shorten >=-.15pt,very thick,triangle 90 cap-triangle 90 cap] (1,.5) -- (.9,1);
			  \draw[shorten >=-.15pt,very thick] (.4,0) -- (0,.7);
			  \draw[shorten >=-.15pt,very thick,triangle 90 cap-triangle 90 cap] (0,.7) -- (.9,1);

			\end{tikzpicture}
			\hfill
	\caption{Illustration of independence of redundant acts. 
	Given the preferences of two agents, each act corresponds to a point in $\mathbb R^2$ with the horizontal (vertical) coordinate equal to the expected utility of the first (second) agent.
	Assume the image of all acts under this (in general non-injective) map is the indicated quadrangle on the left when normalizing the utilities to the unit interval.
	A set of acts $\mathcal A'$ is co-redundant if its image is the entire quadrangle since then each act is unanimously indifferent to an act in $\mathcal A'$.
	If the agents change their preferences over acts that are not in $\mathcal A'$ so that still every act gets mapped to a point in the quadrangle, then $\mathcal A'$ is also co-redundant for the new preference profile.
	Independence of redundant acts implies that the preferences of society over $\mathcal A'$ must not change. 
	Such a preference change is shown on the right for an act $f$.}
	\label{fig:ira}
\end{figure}

On top of restricted monotonicity and independence of redundant acts, we make four assumptions about the social welfare function: if all agents are completely indifferent, so is society (\emph{faithfulness}), no agent can impose its belief on society (\emph{no belief imposition}), the society's preferences depend continuously on the agents' preferences (\emph{continuity}), and all agents are treated symmetrically (\emph{anonymity}).  
Together these six conditions characterize belief-averaging and relative utilitarianism.
The conceptual novelty of the paper is the restricted monotonicity condition. 
Our result shows that it is the right axiom to add to \citeauthorhloff{DhMe99a}'s independence of redundant alternatives (acts) to derive an analog of relative utilitarianism for subjective expected utility maximizers.

The proof is modular and yields two intermediary results.
First, we show that restricted monotonicity, faithfulness, no belief imposition, and continuity necessitate linear aggregation with the weight of each agent's belief and utility function being a continuous function of its own preferences.
However, the weights of an agent cannot depend on the other agents' preferences.
Adding independence of redundant acts, the axioms characterize weighted belief-averaging and weighted utilitarian social welfare functions: assign two positive (and possibly different) weights to every agent---one for the belief and one for the utility function---and derive the preferences of any society from the weighted mean of its members' beliefs and the weighted sum of their utility functions.
Lastly, anonymity implies that all agents' weights have to be equal and we get belief-averaging and relative utilitarianism.
\Cref{sec:necessity} discusses the logical independence of the axioms for all three results.
\Cref{sec:literature} gives an overview of related work.
The proofs are in the appendix.

\section{Preferences and Social Welfare Functions}

Let $\Omega$ be a set of \textbf{states} of the world and $\mathcal E$ be a sigma-algebra over $\Omega$.
We refer to elements of $\mathcal E$ as events.
A probability measure $\pi$ on $(\Omega,\mathcal E)$ is \textbf{non-atomic} if for every $E\in\mathcal E$ with $\pi(E) > 0$, there is $F\subset E$, $F\in\mathcal E$, with $0<\pi(F)<\pi(E)$.
A \textbf{belief} is a non-atomic and countably additive probability measure; $\Pi$ denotes the set of all beliefs.
Let $O$ be a set of \textbf{outcomes} endowed with a sigma-algebra.
We assume throughout that $|O| \ge 4$.
An \textbf{act} is a measurable function $f\colon\Omega\rightarrow O$ that maps states to outcomes; $\mathcal A$ is the set of all acts.
A subset of acts $\mathcal A'$ is called \textbf{regular} if $\mathcal A' = \mathcal A(\mathcal E',O') = \{f\in\mathcal A\colon\text{$f$ is $\mathcal E'$-measurable and } f(\Omega)\subset O'\}$ for some sub-sigma-algebra $\mathcal E'\subset\mathcal E$ and subset of outcomes $O'\subset O$.
By $f_*$ we denote the \textbf{push-forward} of $f$ mapping beliefs to probability measures over outcomes.\footnote{For a belief $\pi\in\Pi$ and a measurable set of outcomes $O'\subset O$, $(f_*\pi)(O') = \pi(f^{-1}(O'))$.}
A \textbf{utility function} is a Borel-measurable and bounded function $u\colon O\rightarrow \mathbb R$.
We denote by $\mathcal U$ the set of all utility functions that are normalized to the unit interval, that is, $\inf\{u(x)\colon x\in O\} = 0$ and $\sup\{u(x)\colon x\in O\} = 1$;
$\bar{\mathcal U}$ consists of $\mathcal U$ plus the utility function that is constant equal to $0$.

A \textbf{preference relation} $\s\subset\mathcal A\times\mathcal A$ is a binary relation over acts.
Its strict and symmetric part are $\succ$ and $\sim$, respectively.
The relation with empty strict part is called \textbf{complete indifference}. 
A relation $\s$ maximizes \textbf{subjective expected utility} if for some belief $\pi$ and some utility function $u$,
\[
	f\srel g \text{ if and only if } \int_\Omega (u\circ f) d\pi \ge \int_\Omega (u\circ g) d\pi 
\]
for all acts $f,g$.
In that case, $\pi$ and $u$ represent $\s$.
We denote by $\bar{\mathcal R}$ the set of all preference relations that maximize subjective expected utility; $\mathcal R$ consists of $\bar{\mathcal R}$ minus complete indifference.
For every preference relation in $\mathcal R$, the belief in a representation is unique.
The  utility function is unique up to positive affine transformations.
Hence, for every $\s\in\mathcal R$, there is a unique belief $\pi\in\Pi$ and a unique utility function $u\in\mathcal U$ that represent $\s$.
Complete indifference can be represented by an arbitrary belief $\pi$ and the utility function $u\in\bar{\mathcal U} -\mathcal U$ that is constant equal to 0.
In either case, we write $\ev_{\s}(f) = \int_\Omega (u\circ f) d\pi$ for the expected utility of an act $f$ under $\pi$ and $u$ (which does not depend on the choice of $\pi$ if $\s$ is complete indifference).

Let $I$ be a finite set of \textbf{agents} with $|I| \ge 3$.
Symbols in bold face refer to tuples indexed by $I$.
Every agent has a preference relation $\s[i]\in\bar{\mathcal R}$.
Denote by $\pi_i\in\Pi$ and $u_i\in \bar {\mathcal U}$ a belief and utility function representing $\s[i]$. 
A \textbf{preference profile} $\p\in\bar{\mathcal R}^I$ specifies the preferences of each agent.
We write $\p[-i]$ for the profile obtained by deleting the relation of agent $i$ and $\p[\sim i]$ for the profile obtained by replacing the relation of agent $i$ by complete indifference.
Both notations extend to sets of agents in the obvious way.
An agent is \textbf{concerned} in a profile $\p$ if $\s[i]$ is not complete indifference.
By $I_{\p}$ we denote the set of concerned agents.
A \textbf{social welfare function} (SWF) $\Phi\colon \bar{\mathcal R}^I \rightarrow \bar{\mathcal R}$ maps every preference profile to an element of $\bar{\mathcal R}$.
$\Phi$ is belief-averaging and relative utilitarianism (\baru) if for all $\p\in\bar{\mathcal R}^I$, $\Phi(\p)$ is represented by
\begin{align*}
	\frac1{|I_{\p}|}\sum_{i\in I_{\p}} \pi_i \qquad\text{ and }\qquad \sum_{i\in I_{\p}} u_i\tag{belief-averaging and relative utilitarianism}
\end{align*}
The table in \Cref{sec:referencetable} provides an overview of the notation.

\section{Axioms for Social Welfare Functions}\label{sec:axioms}

We introduce six axioms for SWFs.
The first two, restricted monotonicity and independence of redundant acts, carry the most power in the sense that they rule out other SWFs that have been proposed.

Suppose society is indifferent between two acts $f$ and $g$ for some profile where agent $i$ is completely indifferent.
If $i$ changes its preferences so that its new belief induces the same distribution over outcomes as the belief of society for the original profile, then society should prefer $f$ to $g$ for the new profile if and only if $i$ does.
Formally, for all $i \in I$, $\p\in\bar{\mathcal R}^I$, $\s = \Phi(\p)$, $\s[\sim i] = \Phi(\p[\sim i])$, and $\pi_{\sim i}$ and $\pi_i$ any beliefs representing $\s[\sim i]$ and $\s[i]$,
\begin{align*}
	f \sim_{\sim i} g\text{, } f\srel[i] g\text{, } f_*\pi_{\sim i} = f_*\pi_i\text{, and } g_*\pi_{\sim i} = g_*\pi_i \text{ implies } f \srel g
\end{align*}
Moreover, a strict preference between $f$ and $g$ for agent $i$ in the antecedent implies a strict preference is the consequent.
Note that if $\s[\sim i]$ is complete indifference, then $\s = \s[i]$ since we can choose $\pi_{\sim i} = \pi_i$.

Independence of redundant acts prescribes that if two profiles agree on a set of acts that makes every other act redundant, then the corresponding preferences of society over that set also agree.
We say that a regular set of acts $\mathcal A' = \mathcal A(\mathcal E',O')$ is \textbf{co-redundant} for a profile $\p\in\bar{\mathcal R}^I$ if 
\begin{enumerate}[label=(\roman*)]
	\item \label{item:cored1} every act is unanimously indifferent to some act in $\mathcal A'$; that is, for every $f\in\mathcal A$, there is $g\in\mathcal A'$ such that $f\sim_i g$ for all $i\in I$, and
	\item \label{item:cored2} for all $i \in I$, $\pi_i|_{\mathcal E'}$ is non-atomic on $(\Omega,\mathcal E')$ 
\end{enumerate}
Then, $\Phi$ satisfies independence of redundant acts if for all $\p,\p'\in\bar{\mathcal R}^I$ and every $\mathcal A'\subset \mathcal A$ that is co-redundant for $\p$ and $\p'$, 
\begin{align*}
	\p|_{\mathcal A'} = \p'|_{\mathcal A'} \text{ implies } \Phi(\p)|_{\mathcal A'} = \Phi(\p')|_{\mathcal A'}
	\tag{independence of redundant acts} 
\end{align*}
By part~\ref{item:cored1} of co-redundancy, $\mathcal A'$ and $\mathcal A$ induce the same utility profiles for both $\p$ and $\p'$.
That is, $\{(\ev_{\s[i]}(g))_{i\in I}\colon g\in\mathcal A'\} = \{(\ev_{\s[i]}(f))_{i\in I}\colon f\in\mathcal A\}\subset[0,1]^I$ and similarly for $\p'$.
The point of part~\ref{item:cored2} is that $\s[i]|_{\mathcal A'}$ admits a unique subjective expected utility representation if $i \in I_{\p}$.
\baru would not satisfy independence of redundant acts if this restriction were omitted.\footnote{For beliefs that are not non-atomic, subjective expected utility representations need not be unique up to positive affine transformations of the utility function. If some profile had two different representations, 
$\baru$ may depend on the chosen representation. 
Applying this conclusion to the preferences over the acts $\mathcal A'$, we see that \baru would not satisfy independence of redundant acts if condition \ref{item:cored2} of co-redundancy was omitted.}

Here is a more concise formulation of a strengthening of independence of redundant acts, which explicitly refers to beliefs and normalized utility functions.
For every sub-sigma-algebra $\mathcal E'\subset\mathcal E$, every $O'\subset O$, and any two profiles $\p,\p'\in\bar{\mathcal R}^I$, 
\begin{equation*}
	\makebox[\displaywidth]{$\displaystyle
	 \pi_i|_{\mathcal E'} = \pi_i'|_{\mathcal E'} \text{ is non-atomic on $(\Omega,\mathcal E')$ and } u_i|_{O'} = u_i'|_{O'} \text{ for all $i\in I$ implies } \pi|_{\mathcal E'} = \pi'|_{\mathcal E'} \text{ and } u|_{O'} = u'|_{O'}
	 $}
\end{equation*} 
for some $\pi,\pi'\in\Pi$ and $u,u'\in \bar {\mathcal U}$ representing $\Phi(\p)$ and $\Phi(\p')$.
The antecedent in the definition of independence of redundant acts and part~\ref{item:cored1} of co-redundancy imply that the agents' normalized utilities for outcomes in $O'$ are the same in $\p$ and $\p'$. 
In the stronger formulation, we stipulate this by referring explicitly to normalized utility functions.
This definition is more utilitarian in spirit. 

The remaining four axioms are mostly standard.
Faithfulness requires that society should be completely indifferent whenever all agents are completely indifferent. 
\begin{align*}
	I_{\p} = \emptyset \text{ implies that } \Phi(\p) \text{ is complete indifference}  \tag{faithfulness}
\end{align*}

The axioms above allow for SWFs that ignore the beliefs of some of the agents.
To rule this out, it suffices to assume that no agent can impose its belief on society.
That is, the belief of society is not identical to that of any agent unless the rest of society would arrive at that belief if the agent was completely indifferent.
For all $i \in I$ and $\p\in\bar{\mathcal R}^I$ such that $\Phi(\p[-i])$, $\Phi(\p)$, and $\s[i]$ are not complete indifference, and $\pi_{\sim i}$, $\pi$, $\pi_i$ are the corresponding beliefs,
\begin{align*}
	\pi_{\sim i}\neq \pi_i \text{ implies } \pi\neq\pi_i 	\tag{no belief imposition}
	\label{eq:nondegenerate}
\end{align*}

Continuity requires that small changes in the agents' preferences can only lead to small changes in the preferences of society.
To make this precise, we equip $\bar{\mathcal R}$ with a topology.
The uniform metric $\sup\{|\ev_{\s}(f) - \ev_{\s'}(f)|\colon f\in\mathcal A\}$ induces a topology on $\mathcal R$.
The topology on $\bar{\mathcal R}$ is that of $\mathcal R$ plus the entire set $\bar{\mathcal R}$ (which is thus the only neighborhood of the relation expressing complete indifference).\footnote{This topology on $\bar{\mathcal R}$ in not Hausdorff. In fact, it even violates the T1 separation axiom, which requires that for any two $\s,\s'\in\bar{\mathcal R}$, there are open sets $X,Y$ of $\bar{\mathcal R}$ so that $\s\in X-Y$ and $\s'\in Y-X$. This fails since the only open set containing complete indifference is $\bar{\mathcal R}$. The topology does satisfy the T0 separation axiom, which only requires that some open set contains $\s$ and not $\s'$ \emph{or} $\s'$ and not $\s$.}
Thus, the closure of $\mathcal R$ is $\bar{\mathcal R}$.
The set of profiles $\bar{\mathcal R}^I$ has the product topology of $\bar{\mathcal R}$.
\begin{align*}
	\Phi\text{ is continuous}\tag{continuity}
\end{align*}

Lastly, anonymity prescribes that relabeling the agents does not change society's preferences.
For all $\p\in\bar{\mathcal R}^I$,
\begin{align*}
	\Phi(\p) = \Phi(\p\circ \eta)\text{ for all permutations $\eta$ on $I$}\tag{anonymity}
\end{align*}

\section{Characterization of Belief-Averaging and Relative Utilitarianism}\label{sec:results}

Conceptually, our main result is a characterization of \baru. 
It determines society's preferences for every profile from the average of the concerned agents' beliefs and the sum of their utility functions.
At first sight, this might seem like an incremental extension of \citeauthorhloff{GSS04a}'s linear aggregation result at the cost of much stronger axioms.
This is misleading since their result, like Harsanyi's aggregation theorem, leaves open a choice of weights for \emph{every} profile.
The resulting indeterminacy can go as far as making the result trivially true.
If the agents' utility functions span the space of all utility functions, any utility function can be written as a linear combination of the agents' utility functions.
Then saying that society's utility function is a linear combination of the agents' utility functions imposes no restrictions.
This situation is generic for profiles with at least as many agents as outcomes. 
A theory of preference aggregation that leaves it to the user to choose weights for individuals in every instance of preference aggregation is thus incomplete and sometimes offers no guidance at all.
 
The multi-profile aims at a complete theory.
It allows formulating the idea that preference aggregation should be coherent across different instances by imposing cross-profile axioms.
\Cref{thm:main} shows that with our system of axioms, all agents' weights have to be equal for \emph{all} profiles, and, thus, leaves no free parameters.
The difference between \Cref{thm:main} and \citeauthorhloff{GSS04a}'s result is analogous to the difference between \citeauthorhloff{DhMe99a}'s characterization of relative utilitarianism and \citeauthorhloff{Hars55a}'s social aggregation theorem.

The proof of \Cref{thm:main} proceeds as follows.
First, we only consider the implications of restricted monotonicity, faithfulness, no belief imposition, and continuity.
The former two imply the restricted Pareto condition of \citet{GSS04a} and thus linear aggregation of beliefs and utility functions (with positive weights for utility functions in generic cases by the strict part of restricted monotonicity).
The additional strength of restricted monotonicity lies in the fact that if an agent changes its preferences away from complete indifference, society’s belief for the new profile is an affine combination of the agent’s new belief and that of society for the original profile.
Assuming all beliefs are affinely independent, it follows that the relative weights of the other agents cannot change. 
Thus, their relative weights are independent of the preferences of the agent who changes its preferences.
The analogous statement holds for utility functions.

Now, given any profile, we can apply this conclusion to every agent and the profile obtained by making the agent completely indifferent.
Some algebra shows that the magnitude of the weight for the belief and the utility function of an agent (relative to that of other agents) can only depend on the agent's own preferences.
The signs may however depend on the preferences of the other agents.
For the weights of utility functions any dependence on other agents' preferences vanishes since they have to be positive.
Continuity allows getting the same conclusion for beliefs.
If the weight for an agent's belief ever were to change sign, by continuity, it would have to be zero in some profile (with two concerned agents).
But then the second agent would get to impose its belief, which is ruled out.
We conclude that the weight for an agent's belief cannot change sign.
If it was negative regardless of the agent's preferences, we could find a profile where the belief of society assigns a negative probability to some event, which cannot be. 
In summary, we have that an agent's weight only depends on its own preferences.
This result only applies to profiles with sufficiently non-degenerate utility functions: the utility functions of any pair and any triple of concerned agents are linearly independent.

\begin{restatable}{proposition}{monotonicity}\label{prop:monotonicity}
	Let $\Phi$ be a social welfare function satisfying restricted monotonicity, faithfulness, no belief imposition, and continuity.
	Then, there are continuous functions $\nu_i,\omega_i\colon\mathcal R\rightarrow \mathbb R_{++}$ for $i\in I$ such that for all profiles $\p\in\bar{\mathcal R}^I$ with non-degenerate utilities, $\Phi(\p)$ is represented by
	\begin{align*}
		\frac{1}{\sum_{i\in I_{\p}} \nu_i(\s[i])}\sum_{i\in I_{\p}} \nu_i(\s[i])\pi_i \qquad \text{ and }\qquad \sum_{i\in I_{\p}} \omega_i(\s[i])u_i
	\end{align*}
\end{restatable}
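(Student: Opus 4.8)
The plan is to first extract a single‑profile linear‑aggregation result from the ``Paretian'' axioms and then exploit the additional force of restricted monotonicity to show that each agent's aggregation weights depend only on that agent's own preference. As observed in the introduction, restricted monotonicity together with faithfulness implies the restricted Pareto condition of \citet{GSS04a}; by their aggregation theorem, for every $\p\in\bar{\mathcal R}^I$ the relation $\Phi(\p)$ is represented by an affine combination $\sum_{i\in I_{\p}}\alpha_i(\p)\,\pi_i$ of the concerned agents' beliefs, with weights summing to $1$, and a linear combination $\sum_{i\in I_{\p}}\beta_i(\p)\,u_i$ of their utility functions, where the strict part of restricted monotonicity forces $\beta_i(\p)>0$ for $i\in I_{\p}$. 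I will work first on the profiles that have non-degenerate utilities \emph{and} affinely independent beliefs, where the weights $\alpha_i(\p)$ and $\beta_i(\p)$ are uniquely determined; the extra restriction is lifted at the end, since such profiles are dense among those with non-degenerate utilities (perturb the beliefs while keeping the utilities fixed) and $\Phi$ is continuous.

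Next, fix such a $\p$ and a concerned agent $i$, and let $\pi_{\sim i},u_{\sim i}$ represent $\Phi(\p[\sim i])$ and $\pi,u$ represent $\Phi(\p)$. On acts that are measurable with respect to a sub-$\sigma$-algebra on which $\pi_i$ and $\pi_{\sim i}$ agree, the antecedent of restricted monotonicity is satisfied, and it says that $\Phi(\p)$ ranks such acts exactly as agent $i$ does within each $\Phi(\p[\sim i])$-indifference class. An argument in the style of \citet{GSS04a}'s linear-aggregation proof then yields $u=a\,u_i+b\,u_{\sim i}+c$ on $O$ with $a>0$, and shows that $\pi$ agrees with $\pi_i$ on every $\sigma$-algebra on which $\pi_i$ and $\pi_{\sim i}$ agree. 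Combining this with the representations of $\p$ and $\p[\sim i]$ and using affine (respectively linear) independence, one obtains $\alpha_j(\p)=(1-\alpha_i(\p))\,\alpha_j(\p[\sim i])$ and $\beta_j(\p)=\lambda\,\beta_j(\p[\sim i])$ for a single scalar $\lambda>0$ and every $j\neq i$. In particular, the ratios $\alpha_j(\p):\alpha_{j'}(\p)$ and $\beta_j(\p):\beta_{j'}(\p)$ for $j,j'\neq i$ do not change when agent $i$'s preference changes. Degenerate sub-cases---$u_{\sim i}$ affinely dependent on $u_i$, or too few events on which $\pi_i$ and $\pi_{\sim i}$ agree---are dispatched by perturbing to a nearby generic profile and invoking continuity.

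Applying the previous step to each agent in turn (here $|I|\ge3$ always leaves a third agent to anchor a ratio), elementary algebra shows that the magnitude of $\alpha_i(\p)$ relative to $\sum_{l\in I_{\p}}|\alpha_l(\p)|$, and of $\beta_i(\p)$ relative to $\sum_{l\in I_{\p}}\beta_l(\p)$, depends on $\s[i]$ alone. Since the $\beta_i$ are positive, this produces continuous functions $\omega_i\colon\mathcal R\to\mathbb R_{++}$ with $\Phi(\p)$ represented by $\sum_{i\in I_{\p}}\omega_i(\s[i])u_i$. For beliefs the argument pins down only $|\alpha_i(\p)|$ from $\s[i]$, leaving the sign possibly dependent on the other agents; but if that sign ever flipped, continuity would force $\alpha_i(\p)=0$ for some profile with exactly two concerned agents $i,j$, so $\pi(\p)=\pi_j$, contradicting no belief imposition (there $\pi_j\neq\pi_i$)---and the same argument with a third concerned agent also excludes $\alpha_i(\p)=0$. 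Hence $\alpha_i$ keeps a fixed sign, and that sign cannot be uniformly negative, since then an event on which $\pi_i$ is small and $\pi_{\sim i}$ large would be assigned negative probability by $\pi(\p)$. So all $\alpha_i(\p)>0$, the belief aggregation is a genuine weighted average, and $\Phi(\p)$ is represented by $\frac1{\sum_{i\in I_{\p}}\nu_i(\s[i])}\sum_{i\in I_{\p}}\nu_i(\s[i])\pi_i$ with $\nu_i\colon\mathcal R\to\mathbb R_{++}$. Continuity of $\nu_i$ and $\omega_i$ follows by varying one agent's preference with the others held fixed and using continuity of $\Phi$ and uniqueness of the decomposition, and continuity of $\Phi$ then carries the belief identity from affinely-independent-belief profiles to all profiles with non-degenerate utilities.

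I expect the marginal-effect step to be the crux: restricted monotonicity constrains $\Phi$ only when one indifferent agent becomes concerned, and only for acts whose induced outcome distributions match a particular belief, so turning it into the exact identities $\alpha_j(\p)=(1-\alpha_i(\p))\alpha_j(\p[\sim i])$ and $\beta_j(\p)\propto\beta_j(\p[\sim i])$ needs both a careful Harsanyi-type argument on the restricted domain of acts and a verification that enough events with $\pi_i(E)=\pi_{\sim i}(E)$ are available (a non-atomicity/Lyapunov-type input), plus clean handling of the non-generic configurations. The sign bookkeeping in the last step---exactly where no belief imposition and $|I|\ge3$ enter---is the other delicate point.
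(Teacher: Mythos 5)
Your proposal follows essentially the same route as the paper: derive linear aggregation from the restricted Pareto condition implied by restricted monotonicity and faithfulness, use the marginal identity $\phi(\p)=(1-\alpha)\phi(\p[\sim i])+\alpha\pi_i$ (and its utility analogue) to show each agent's weight depends only on its own preferences, settle the signs of the belief weights via continuity, no belief imposition, and nonnegativity of probabilities, and extend by density and continuity to all profiles with non-degenerate utilities. The only cosmetic difference is in the sign step, where the paper's formal argument propagates a negative sign to a neighborhood and exhibits an event with negative social probability there, rather than first invoking a zero crossing; this is the same idea, and your sketch matches the paper's own informal outline.
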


The last step is to show that additionally imposing independence of redundant acts implies that both weights of every agent have to be constant. 
By applying independence of redundant acts to a suitable profile with two concerned agents, one can show that the weights of an agent cannot depend on its belief.
To conclude that they are independent of the utility function as well, we consider a profile in which every act is unanimously indifferent to an act with a range of only three outcomes $\{x_0,x_1,x^*\}$. 
Hence, the set of acts with range $\{x_0,x_1,x^*\}$ is co-redundant.
If the focal agent changes its utility for other outcomes in a way that does not change the image of the profile in utility space, this set remains co-redundant and we can apply independence of redundant acts and conclude that the preferences of society over acts in the co-redundant set do not change.
If the utility functions of the two agents are linearly independent, this implies that the weights of the focal agent remain the same.
Lastly, we construct a path between any pair of utility functions so that neighboring utility functions result in the same weights by the preceding conclusion.

We obtain a characterization of weighted belief-averaging and weighted utilitarian SWFs.
These functions assign two positive weights to every agent, one for their belief and one for their utility function, and determine society's preferences for every profile from the weighted average of the concerned agents' beliefs and the weighted sum of their utility functions.
Crucially, the weights are constant across all profiles, that is, they cannot depend on an agent's own preferences or the preferences of any other agent.
For technical reasons discussed in \Cref{rem:commonutility}, \Cref{prop:main}, as well as the ensuing \Cref{thm:main}, exclude profiles where all agents have equal or completely opposed utility functions, called common utility profiles.

\begin{restatable}{proposition}{main}\label{prop:main}
	Let $\Phi$ be a social welfare function satisfying restricted monotonicity, independence of redundant acts, faithfulness, no belief imposition, and continuity. 
	Then, there are $\bm v,\bm w \in\mathbb R_{++}^I$ such that $\Phi(\p)$ is represented by 
	\begin{align*}
		\frac1{\sum_{I_{\p}} v_i}\sum_{I_{\p}} v_i\pi_i\qquad\text{ and }\qquad \sum_{I_{\p}} w_i u_i
	\end{align*}
	for all non-common utility profiles $\p\in\bar{\mathcal R}^I$; $\bm v$ and $\bm w$ are unique up to multiplication by a positive constant. 
	Conversely, any social welfare function with such a representation for all profiles satisfies the axioms.
\end{restatable}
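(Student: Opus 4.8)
The plan is to start from \Cref{prop:monotonicity}, which already gives, for every non-degenerate profile $\p$, a representation of $\Phi(\p)$ by the weighted belief average $\frac{1}{\sum_{i\in I_{\p}}\nu_i(\s[i])}\sum_{i\in I_{\p}}\nu_i(\s[i])\pi_i$ and the weighted utility sum $\sum_{i\in I_{\p}}\omega_i(\s[i])u_i$ for continuous $\nu_i,\omega_i\colon\mathcal R\rightarrow\mathbb R_{++}$. The remaining work is: (a) use independence of redundant acts to show $\nu_i$ and $\omega_i$ are constant; (b) extend the representation from non-degenerate to non-common profiles; (c) prove uniqueness; and (d) check the converse. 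Throughout (a) it suffices to argue on profiles with exactly two concerned agents $i,k$, where ``non-degenerate'' reduces to: $u_i,u_k$ affinely independent and $\pi_i\neq\pi_k$ --- precisely the two-agent profiles that are not common utility profiles.

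For (a), I would first show that $\nu_i,\omega_i$ do not depend on agent $i$'s belief. Fix a sub-sigma-algebra $\mathcal E'\subsetneq\mathcal E$ with $\pi_i|_{\mathcal E'}$ and $\pi_k|_{\mathcal E'}$ non-atomic and distinct, and put $\mathcal A'=\mathcal A(\mathcal E',O)$. A Lyapunov-type argument --- the $\mathbb R^2$-valued measure $(\pi_i,\pi_k)$ and its restriction to $\mathcal E'$ have convex range --- shows that the set of attainable expected-utility pairs equals the convex hull of $\{(u_i(x),u_k(x))\colon x\in O\}$ for both $\mathcal A$ and $\mathcal A'$, so $\mathcal A'$ is co-redundant; and it stays co-redundant, with $\p|_{\mathcal A'}$ unchanged, if agent $i$'s belief is perturbed off $\mathcal E'$ while $\pi_i|_{\mathcal E'}$ and $u_i$ are held fixed, since co-redundancy constrains the profile only through these data. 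Independence of redundant acts then gives $\Phi(\p)|_{\mathcal A'}=\Phi(\p')|_{\mathcal A'}$ for the perturbed profile $\p'$; comparing the representations of the two sides restricted to $\mathcal A'$, uniqueness of the (non-atomic) belief together with $\pi_i|_{\mathcal E'}\neq\pi_k|_{\mathcal E'}$ forces $\nu_i$ to be unchanged, and affine independence of $u_i,u_k$ forces $\omega_i$ to be unchanged. A connectedness argument --- joining any two beliefs of agent $i$ in finitely many such steps, each fixing the belief on a suitable non-atomic sub-sigma-algebra --- then yields that $\nu_i,\omega_i$ depend only on $u_i$.

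Still in (a), I would next remove the dependence on $u_i$. Take a profile in which every act is unanimously indifferent to an act with range $\{x_0,x_1,x^*\}$, so that $\mathcal A'=\mathcal A(\mathcal E,\{x_0,x_1,x^*\})$ is co-redundant, with $u_k$ and the three outcomes chosen so that the points $(u_i(x_\ell),u_k(x_\ell))$, $\ell\in\{0,1,*\}$, span the image of the profile in $[0,1]^2$ and are affinely independent, and with $\pi_i\neq\pi_k$. If agent $i$ changes its utility only on $O\setminus\{x_0,x_1,x^*\}$ so as to leave that image unchanged --- possible since $|O|\ge 4$ leaves room, keeping the new utility values inside the anchor triangle --- then $\mathcal A'$ remains co-redundant and $\p|_{\mathcal A'}=\p'|_{\mathcal A'}$, so independence of redundant acts and the same comparison of representations as before force $\nu_i$ and $\omega_i$ to be unchanged. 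Joining any two utility functions in $\mathcal U$ by a finite path along which consecutive members differ only off some three-outcome set while preserving the relevant image --- re-choosing the anchor triple whenever an anchor must itself move --- then shows that $\nu_i\equiv v_i$ and $\omega_i\equiv w_i$ are constant, which is exactly the representation claimed for non-degenerate profiles.

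For (b), given a non-common profile $\p$ I would approximate it in the coordinatewise metric, keeping $I_{\p}$ fixed and all indifferent agents indifferent, by non-degenerate profiles $\p_n$ obtained from generic small perturbations of the concerned agents' utilities; continuity gives $\Phi(\p_n)\to\Phi(\p)$, while the represented data $\frac1{\sum v_i}\sum v_i\pi_i^{(n)}$ and $\sum w_iu_i^{(n)}$ converge to $\frac1{\sum v_i}\sum v_i\pi_i$ and $\sum w_iu_i$, which therefore represent $\Phi(\p)$ (the common utility profiles being exactly those with no nearby non-degenerate profile of the same concerned set; cf.\ \Cref{rem:commonutility}). Uniqueness in (c) follows by evaluating at two-concerned-agent profiles with distinct beliefs and affinely independent utilities, where the representation of $\Phi(\p)$ is unique, pinning down $v_i\!:\!v_j$ and $w_i\!:\!w_j$ for every pair. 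For the converse (d), faithfulness is immediate; writing (for $i\in I_{\p}$) $\pi=\frac1V(V'\pi_{\sim i}+v_i\pi_i)$ and $u=u_{\sim i}+w_iu_i$ with $V=\sum_{I_{\p}}v_j$ and $V'=V-v_i$, the hypotheses $f\sim_{\sim i}g$, $f_*\pi_{\sim i}=f_*\pi_i$, $g_*\pi_{\sim i}=g_*\pi_i$ collapse the expected-utility computation to $\ev_{\Phi(\p)}(f)-\ev_{\Phi(\p)}(g)=w_i\bigl(\ev_{\s[i]}(f)-\ev_{\s[i]}(g)\bigr)$, which gives restricted monotonicity together with its strict part; $\pi-\pi_i=\frac{V'}{V}(\pi_{\sim i}-\pi_i)$ gives no belief imposition; independence of redundant acts holds because part~\ref{item:cored2} of co-redundancy pins down each concerned agent's utility on $O'$ and belief on $\mathcal E'$, hence $\sum w_iu_i|_{O'}$ and $\frac1{\sum v_i}\sum v_i\pi_i$ restricted to $\mathcal E'$; and continuity is read off the explicit formulas. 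The main obstacle is part (a): manufacturing profiles that carry a three-outcome co-redundant set, carrying out the belief and utility perturbations so that co-redundancy is provably preserved (this is where non-atomicity, Lyapunov's theorem, and $|O|\ge 4$ do the work), and then stitching the resulting pointwise invariances of $\nu_i,\omega_i$ into global constancy via connectedness of $\mathcal R$; the extension in (b) and the exact identification of the excluded degenerate profiles are the secondary difficulties.
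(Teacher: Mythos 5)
Your step (b) contains a genuine gap. The topology on $\bar{\mathcal R}$ is not Hausdorff: the only neighborhood of complete indifference is all of $\bar{\mathcal R}$, so continuity of $\Phi$ at $\p$ is vacuous whenever $\Phi(\p)$ is complete indifference, and a sequence $\Phi(\p^n)$ converging (in $\mathcal R$) to the relation $\s^*$ represented by the limiting data \emph{also} converges to complete indifference. Hence ``$\Phi(\p^n)\to\Phi(\p)$ and the represented data converge, therefore they represent $\Phi(\p)$'' does not follow: $\Phi(\p)$ could still be complete indifference. Your argument, as written, would apply equally to common utility profiles --- these \emph{can} be approximated by non-degenerate profiles with the same concerned set, since non-degenerate profiles are dense, so your parenthetical characterization of the excluded profiles is false --- and would therefore contradict the counterexample in \Cref{rem:commonutility}. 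Closing this gap is where the paper spends most of the proof of \Cref{prop:main}: a case analysis showing that $\Phi(\p)$ and the candidate $\bar\Phi(\p)$ are complete indifference simultaneously, by applying \Cref{lem:utilitylinear1} and \Cref{lem:utilitylinear2} (i.e., restricted monotonicity) to the subprofiles $\p[\sim i]$. That argument needs either utility dimension at least $2$ or an unconcerned agent, which is precisely why common utility profiles must be excluded; continuity alone cannot separate them.

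Steps (a), (c), and (d) follow the paper's route: belief-independence of the weights via a sub-sigma-algebra on which the old and new beliefs agree and are non-atomic (the paper's \Cref{lem:subalgebra} shows a single such step always connects any two beliefs, so your finite-chain/connectedness argument is unnecessary, though it would also need essentially that lemma to guarantee each step exists); utility-independence via a three-outcome co-redundant set $\mathcal A(\mathcal E,\{x_0,x_1,x^*\})$ and a finite chain of modifications within the anchor triangle; uniqueness from two-agent profiles; and the same verification of the converse. One point you gloss over in (a): to establish co-redundancy you must show that \emph{every} act, not just every simple act, is unanimously indifferent to an act in $\mathcal A'$. Because the two coordinates of the expected-utility map integrate against \emph{different} beliefs, the image of $\mathcal A$ is not obviously the convex hull of $\{(u_i(x),u_k(x))\colon x\in O\}$; the paper resolves this by choosing the companion agent's utility function simple and invoking \Cref{lem:simple}. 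This is repairable, but it is not free.
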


From here, it is straightforward to conclude that if the SWF is also anonymous, then the weights of all agents have to be equal.
Choosing them to be 1 is without loss of generality and gives \baru.

\begin{restatable}{theorem}{cormain}\label{thm:main}
	Let $\Phi$ be a SWF satisfying restricted monotonicity, independence of redundant acts, faithfulness, no belief imposition, continuity, and anonymity.
	Then, $\Phi$ is \baru on all non-common utility profiles.
\end{restatable}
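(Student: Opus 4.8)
The plan is to feed \Cref{prop:main} into anonymity and collapse the remaining freedom. \Cref{prop:main} gives $\bm v,\bm w\in\mathbb R_{++}^I$ such that for every non-common utility profile $\p$ the relation $\Phi(\p)$ is represented by $\frac{1}{\sum_{I_{\p}}v_i}\sum_{I_{\p}}v_i\pi_i$ and $\sum_{I_{\p}}w_iu_i$. So it suffices to prove $v_i=v_j$ and $w_i=w_j$ for every pair $i,j\in I$: writing $v,w$ for the common values, the belief simplifies to $\frac{1}{|I_{\p}|}\sum_{I_{\p}}\pi_i$ (the constant $v$ cancels) and $\sum_{I_{\p}}w_iu_i=w\sum_{I_{\p}}u_i$ represents the same relation as $\sum_{I_{\p}}u_i$; hence $\Phi(\p)$ is exactly \baru on non-common utility profiles.

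To pin down $v_i=v_j$ and $w_i=w_j$ for a fixed pair $i\ne j$, I would build a witness profile. Using $|O|\ge 4$, choose $u_i,u_j\in\mathcal U$ so that $\{u_i,u_j,\bm 1\}$ is linearly independent in the space of real functions on $O$ (it is enough that the pairs $(u_i(x),u_j(x))$ for three distinct outcomes $x$ form a non-collinear triple in $\mathbb R^2$), and pick distinct beliefs $\pi_i\ne\pi_j$ in $\Pi$. Let $\p$ be the profile in which agent $i$ holds the relation represented by $(\pi_i,u_i)$, agent $j$ the relation represented by $(\pi_j,u_j)$, and every other agent is completely indifferent. Then $I_{\p}=\{i,j\}$, and since linear independence rules out both $u_i=u_j$ and $u_i=\bm 1-u_j$, the profile $\p$—and likewise $\p\circ\eta$ for $\eta$ the transposition of $i,j$—is non-common. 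By \Cref{prop:main}, $\Phi(\p)$ is represented by $\bar\pi=\frac{v_i\pi_i+v_j\pi_j}{v_i+v_j}$ and $\bar u=w_iu_i+w_ju_j$, while $\Phi(\p\circ\eta)$ is represented by $\frac{v_i\pi_j+v_j\pi_i}{v_i+v_j}$ and $w_iu_j+w_ju_i$; anonymity forces $\Phi(\p)=\Phi(\p\circ\eta)$.

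Now I read off the two equalities. Linear independence makes $\bar u$ non-constant, so $\Phi(\p)\in\mathcal R$ and its belief is unique; thus $v_i\pi_i+v_j\pi_j=v_i\pi_j+v_j\pi_i$, i.e.\ $(v_i-v_j)(\pi_i-\pi_j)=0$ as signed measures, and $\pi_i\ne\pi_j$ yields $v_i=v_j$. The two representing utilities of $\Phi(\p)$ agree up to a positive affine transformation, so $w_iu_i+w_ju_j=\alpha(w_iu_j+w_ju_i)+\beta\bm 1$ for some $\alpha>0$; rearranged, $(w_i-\alpha w_j)u_i+(w_j-\alpha w_i)u_j-\beta\bm 1=0$, and linear independence of $\{u_i,u_j,\bm 1\}$ forces $w_i=\alpha w_j$ and $w_j=\alpha w_i$, whence $\alpha^2=1$, $\alpha=1$, and $w_i=w_j$. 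As $i,j$ were arbitrary, $\bm v$ and $\bm w$ are constant and the theorem follows. I expect no real obstacle once \Cref{prop:main} is in hand; the two points that need care are arranging the witness utilities so that $\{u_i,u_j,\bm 1\}$ is linearly independent—which does double duty, keeping the profile non-common and pinning $\alpha=1$—and extracting $w_i=w_j$ from the affine (rather than literal) agreement of the two utility representations. The converse, that \baru satisfies all six axioms, is the special case $\bm v=\bm w=\bm 1$ of the converse in \Cref{prop:main} together with the evident anonymity of \baru.
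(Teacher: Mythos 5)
Your proposal is correct and follows essentially the same route as the paper: invoke \Cref{prop:main}, apply anonymity to a transposition of a two-concerned-agent profile with $\pi_i\neq\pi_j$ and non-equivalent utilities, and read off $v_i=v_j$ and $w_i=w_j$ from the resulting identities. Your treatment is in fact slightly more careful than the paper's terse "these equalities can hold only if...", since you explicitly handle the positive-affine ambiguity in the utility representation via linear independence of $\{u_i,u_j,\bm 1\}$, which also guarantees the witness profile is non-common.
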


\begin{remark}[Common utility profiles]\label{rem:commonutility}
	A profile is a common utility profile if there is a utility function $u\in\mathcal U$ such that the utility function of every agent is either $u$ or (equivalent to) $-u$.
	\Cref{thm:main} is false without the restriction to non-common utility profiles.
	 Consider the SWF that coincides with \baru on all non-common utility profiles and yields complete indifference on all common utility profiles.
	 This function satisfies all axioms.\footnote{Restricted monotonicity allows that even if all agents have the same preferences, society's preference is complete indifference or completely reverses the agents' common preferences.}
	 One possibility to avoid the restriction to common utility profiles is a framework with a variable set of agents.
	 \Cref{thm:main} characterizes \baru on every non-common utility profile for an arbitrary set of agents. 
	  Requiring that adding a completely indifferent agent to a profile does not change society's preferences gives the characterization on all profiles.
\end{remark}

\begin{remark}[Identical beliefs]\label{rem:identicalbeliefs}
	Consider the restriction of the framework to profiles where all agents have the same fixed belief $\pi\in\Pi$.
	On this domain, the preferences of each agent over acts are determined by its utility function. 
	Its preferences over acts are thus given by its preferences over the lotteries $f_*\pi$ over outcomes, where $f$ is an act.
	Hence, we get expected utility preferences over lotteries.
	 
	The proofs of our results are still valid on this subdomain (\Cref{rem:identicalbeliefsproof} at the end of the Appendix gives the details).\footnote{If all agents have the same belief, the axioms simplify.
	The restriction on the beliefs in restricted monotonicity becomes vacuous. 
	Independence of redundant acts can be weakened to apply only to regular sets $\mathcal A(\mathcal E,O')$ and Condition~\ref{item:cored2} can be omitted.
	No belief imposition becomes vacuous. 
	Faithfulness, continuity, and anonymity remain the same.}
	Thus, we obtain a characterization of relative utilitarianism similar to that of \citet{DhMe99a}. 
	Restricted monotonicity translates to a condition stronger than their monotonicity axiom.
	Its strict part also implies their non-triviality and no ill will axioms.
	Our continuity is stronger than theirs.
	Independence of redundant acts (independence of redundant alternatives), faithfulness (individualism), and anonymity are the same. 
	An advantage of our proof is that it gives a characterization of weighted utilitarianism (\Cref{prop:identicalbelief}) without anonymity.
\end{remark}

\begin{remark}[Weakening the axioms to profiles with two concerned agents]\label{rem:axiomrestriction}
	\Cref{thm:main} remains valid if we require all axioms except restricted monotonicity (and, of course, faithfulness) to hold only for profiles with two concerned agents.
	Independence of redundant acts, continuity, and anonymity are used only for such profiles.
	No belief imposition is used for other profiles, but it is not hard to check that this could be avoided.
	Anonymity is only used for transpositions.  
\end{remark}

\begin{remark}[Variations of restricted monotonicity]\label{rem:restrictedmonotonicity}
	We discuss two strengthenings of restricted monotonicity.
	Both are satisfied by \baru.
	
	First, one could allow $f \srel[\sim i] g$ in the antecedent. 
	This would remove the restriction to common utility profiles in \Cref{thm:main} since it implies that society's utility function is $u$ in common utility profiles where more agents have utility function $u$ than $-u$.
	
	Second, one could drop the restriction that the focal agent is completely indifferent before changing preferences. 
	More precisely, for any $i\in I$ and any two profiles $\p,\p'\in\bar{\mathcal R}^I$ with $\p[-i] = \p[-i]'$,
	\begin{align*}
		f \sim g, f\sim_i g, f\srel[i]' g, f_*\pi = f_*\pi_i = f_*\pi_i'\text{, and } g_*\pi = g_*\pi_i = g_*\pi_i'\text{ implies } f\srel' g
	\end{align*}
	where $\s = \Phi(\p)$ and $\s' = \Phi(\p')$.	
	This requires that before and after changing preferences, $i$'s belief induces the same distribution over outcomes as society's belief before the change for either of $f$ and $g$.
	One can check that \baru with complete indifference for common utility profiles (see \Cref{rem:commonutility}) satisfies this property. 
	Hence, it does not allow extending \Cref{thm:main} to all profiles.
\end{remark}

\section{Independence of the Axioms}\label{sec:necessity}

We give six examples of SWFs to discuss the independence of the axioms for the results in \Cref{sec:results}.\footnote{\Cref{ex:mon} and \Cref{ex:cont} are adapted from \citet{DhMe99a}.}

\newenvironment{exampleproof}% 
{\begin{addmargin}[0.8cm]{0cm}\small\smallskip}% 
	{\end{addmargin}}
	
\begin{swf}[Weights determined by maximum product of utilities]\label{ex:mon}
		For any profile $\p$, derive society's belief and utility function as follows.
		Society's belief is the average of the beliefs of the concerned agents.
		Consider the closure of the image of the profile in utility space, that is, the closure of $\{(\ev_{\s[i]}(f))_{i\in I_{\p}}\colon f\in\mathcal A\}\subset [0,1]^{I_{\p}}$, and let $(u^i)_{i\in I_{\p}} \in\mathbb R^{I_{\p}}$ be the unique point that maximizes the product of utilities.
		Let society's utility function be the linear combination of the agents' utility functions where the weight of agent $i\in I_{\p}$ is $\Pi_{j\in I_{\p} - \{i\}} u^j$.
		\begin{exampleproof}
			This SWF satisfies independence of redundant acts since the weights for an agent are the same in any two profiles with the same image in utility space.
			It is straightforward to check that faithfulness, no belief imposition, continuity, and anonymity are also satisfied.
			The SWF fails restricted monotonicity since the weight of an agent's utility function can depend on the utility functions of all agents.
			However, the restricted Pareto condition holds since society's belief and utility function are linear combinations of the agents' beliefs and utility functions.
		\end{exampleproof}
\end{swf}

\begin{swf}[Weights determined by distance to fixed preferences]\label{ex:ira}
	Fix some $\s[0]\in\mathcal R$ and let $d(\s[0],\s) = \sup\{\ev_{\s[0]}(f) - \ev_{\s}(f)|\colon f\in\mathcal A\}$ for all $\s$.
	Consider the SWF as in \Cref{prop:monotonicity} with weight functions $\omega_i(\s[i]) = \nu_i(\s[i]) = 2 - d(\s[0],\s[i])$ for every agent $i$.
	\begin{exampleproof}
		The weight functions depend only on an agent's own preferences and are continuous, non-zero, and the same for all agents.
		Hence, this SWF satisfies restricted monotonicity, faithfulness, no-belief imposition, continuity, and anonymity. 
		It fails independence of redundant acts by \Cref{prop:main} since the weights are not constant. 
	\end{exampleproof}
\end{swf}

\begin{swf}[\baru with a phantom agent]\label{ex:faithful}
	Fix some $\s[0]\in\mathcal R$.
	For every profile $\p$, let society's belief be the average of $\pi_0$ and $\pi_i$ for $i\in I_{\p}$; let society's utility function be the sum of $u_0$ and $u_i$ for $i\in I_{\p}$.
	\begin{exampleproof}
		This SWF satisfies restricted monotonicity, independence of redundant acts, no belief imposition, continuity, and anonymity. 
		It violates faithfulness since society's preferences are $\s[0]$ if all agents are completely indifferent.
	\end{exampleproof}
\end{swf}

\begin{swf}[Belief imposition by agent 1]\label{ex:impo}
	Consider the SWF that is defined as \baru except that society's belief is $\pi_1$ whenever agent 1 is concerned.
	\begin{exampleproof}
		 This SWF satisfies restricted monotonicity, independence of redundant acts, faithfulness, and continuity.
		 It fails no belief imposition and anonymity.
	\end{exampleproof}
\end{swf}

\begin{swf}[Special weight for shared preferences]\label{ex:cont}
	 Let $(\alpha_n)$ be a strictly increasing sequence of positive numbers. 
	 Consider the SWF that for a profile $\p$ is represented by $\frac1{\sum_{\s\in\mathcal R} \alpha_{n(\s)}}\sum_{\s\in\mathcal R} \alpha_{n(\s)}\pi_i$ and $\sum_{\s\in\mathcal R} \alpha_{n(\s)}u_i$, where $n(\s)$ is the number of agents in $\p$ with preferences $\s$.
	 \begin{exampleproof}
		This SWF satisfies restricted monotonicity, independence of redundant acts, faithfulness, no belief imposition, and anonymity. 
	 	It is not continuous by \Cref{prop:monotonicity}.
		More explicitly, the weights of an agent are not continuous at the preferences of other agents (unless $\alpha_n = n$ for all $n$).  
	 \end{exampleproof}
\end{swf}

\begin{swf}[Double weight for agent 1]\label{ex:anonymity}
	Consider the SWF that is defined as \baru except that the belief and utility function of agent 1 get weight 2.
	\begin{exampleproof}
		\Cref{prop:main} shows that this SWF satisfies restricted monotonicity, independence of redundant acts, faithfulness, no belief imposition, and continuity. 
		It fails anonymity. 
	\end{exampleproof} 
\end{swf}

Functions~\ref{ex:mon},~\ref{ex:faithful},~\ref{ex:impo}, and~\ref{ex:cont} show that restricted monotonicity, faithfulness, no belief imposition, and continuity cannot be dropped in \Cref{prop:monotonicity} and \Cref{prop:main}.
\Cref{ex:ira} shows that independence of redundant acts is necessary for \Cref{prop:main}.
Functions~\ref{ex:mon},~\ref{ex:ira},~\ref{ex:faithful},~\ref{ex:cont}, and~\ref{ex:anonymity} show that restricted monotonicity, independence of redundant acts, faithfulness, continuity, and anonymity are required for \Cref{thm:main}.
Note that \Cref{ex:impo} is not anonymous.
It is open whether \Cref{thm:main} holds without no belief imposition.  

Our proof requires $|O|\ge 4$ and $|I| \ge 3$ since it relies on profiles with three linearly independent utility functions. 
It is open if the results hold when $|O| = 3$ or $|I| = 2$.

\section{Related Literature on Preference Aggregation Under Uncertainty}\label{sec:literature}

Most closely related to the present paper are three multi-profile results for decision-making under uncertainty. 
\cite{Spru19a} studies preference aggregation when agents are subjective expected utility maximizers but societies need not be.
He characterizes \emph{ex-ante} relative utilitarianism, which ranks acts according to the sum of the agents' expected utilities (when normalizing their utility functions to the unit interval).
That is, according to $\sum_{i\in I}\ev_{\s[i]}(f) = \sum_{i\in I} \int_{\Omega} (u_i\circ f)d\pi_i$.
By contrast, \baru derives a belief of society and ranks acts by the sum of the agents' expected utilities under the belief of society instead of their own belief.
Hence, it yields the ranking induced by $\int_{\Omega}\left(\sum_{i\in I} u_i\circ f\right)d\pi$, where $\pi = \frac1{|I|}\sum_{i\in I}\pi_i$.
The difference is whether the expectation is taken with respect to an agent's own belief or the aggregate belief.
\citeauthorhloff{Spru19a} assumes the full Pareto condition, independence of inessential expansions (a strengthening of independence of redundant acts),\footnote{Independence of inessential expansions requires that if two profiles agree on a set of acts $\mathcal A'$ so that $\mathcal A'$ contains a most-preferred and a least-preferred act for every agent, then the preferences of society over acts in $\mathcal A'$ are the same for both profiles. 
It is stronger than independence of redundant acts since it also applies sets of acts that are not co-redundant.} belief irrelevance (the ranking of constant acts is independent of the beliefs), and that the preferences of society satisfy Savage's sure-thing principle and depend continuously on the agents' preferences.
As a high-level summary, one could say that the assumptions about the SWF are stronger whereas those about the preferences of society are weaker.

\citet{Diet19a} assumes that agents as well as societies are subjective expected utility maximizers as does the present paper (albeit that he works in the framework of \citet{AnAu63a} instead of that of Savage).
He requires that preference aggregation is consistent with Bayesian updating.
That is, preference aggregation and Bayesian updating commute.
This asserts that the preferences of society do not depend on whether new information arrives before or after the aggregation.
\citeauthorhloff{Diet19a} shows that consistency with Bayesian updating, continuity, and a weakening of the restricted Pareto condition imply geometric-linear aggregation: society's belief is a \emph{geometric} mean of the agents' beliefs and society's utility function is a linear combination of the agents' utility functions.
The weight of an agent in either of these combinations can depend on the profile of utility functions, but not on the beliefs.

Early work on collective decision-making under uncertainty by \citet{HyZe79b} studies functions that output a collective \emph{choice} instead of a preference relation of society.
Again, all preferences are subjective expected utility-maximizing.
They show that no social choice function satisfies the following properties: acts in the choice set are expected utility-maximizing for a belief and utility function of society, this belief (utility function) of society depends only on the agents' beliefs (utility functions), Pareto-dominated acts are never chosen, and no agent can dictate the belief.

Much of the literature on single-profile results is inspired by \citet{Mong95a} and \citet{GSS04a} and focuses on the Pareto condition.
Recall that \citeauthorhloff{Mong95a} proved that the Pareto condition can hold only if all agents have the same belief.
Several authors have examined to what extent \citeauthorhloff{Mong95a}'s impossibility is robust to deviations from Savage's subjective expected utility framework.
\citet{Mong98a} showed that it persists in \citeauthorhloff{AnAu63a}'s model of subjective expected utility as well as if utilities may be state-dependent as long as the preferences identify a unique belief.
More radical deviations, such as replacing Savage's sure thing principle by state-wise dominance, have been considered by \citet{BDM04a,GMV08a,Fleu09a,ChHa14a,MoPi15a}.
The general verdict from these works is that the impossibility pops up even under weaker assumptions about individual and collective preferences.  

\citet{GSS14a} consider no-betting-Pareto dominance, which is in between the full and the restricted Pareto condition.
It states that a society should prefer one act to another if Pareto dominance in the usual sense holds and there is a belief such that if all agents held that belief, they would also unanimously prefer the first act to the second.
They argue that no-betting-Pareto dominance characterizes situations in which agents can benefit from trade.

\citet{AlGa16a} assume that societies have max-min expected utility preferences \citep[cf.][]{GiSc89a}, where acts are compared based on their minimal expected utility within a set of beliefs.
They show that the restricted Pareto condition and a Pareto condition on the beliefs imply that the utility functions are aggregated linearly and the set of beliefs of society are convex combinations of the agents' beliefs.

\section{Discussion}

We conclude the paper with some remarks.

\paragraph{Spurious unanimities and complementary ignorance}
The motivation for restricting the monotonicity condition to acts about which beliefs agree is, as for the restricted Pareto condition, to avoid spurious unanimities.
\citet{MoPi19a} demonstrate that if the agents have different private information, even these restricted conditions can be vulnerable to a sort of spurious unanimities they call \emph{complementary ignorance}.
An example similar to theirs illustrates the issue.

Consider two agents who can choose between two bets on the winner of a race between three horses. 
The first bet pays \$1 to each agent if horse 1 or horse 2 wins and \$0 otherwise.
The second bet pays \$1 to each agent if horse 3 wins and \$0 otherwise.
Assume further that both agents have a uniform prior belief over the winner.
Now the first agent privately learns that horse 1 cannot race and updates their belief accordingly so that the agent assigns probability one-half each to horse 2 and horse 3 winning. 
This agent is then indifferent between both bets.
Likewise, the second agent privately learns that horse 2 cannot race, updates their belief, and is thus also indifferent between both bets.
Note that either bet induces the same distribution over outcomes (payoff \$1 or \$0) for both agents' updated beliefs.
Hence, the restricted monotonicity (or Pareto) condition applies and stipulates that, collectively, the agents should be indifferent between both bets.
However, given both agents' private information, it is certain that horse 3 will win, thus making the second bet preferable.

We conclude that if the agents have different private information, even the restricted monotonicity condition may still be too strong.
Geometric aggregation of beliefs as studied by \citet{Diet19a} is not susceptible to complementary ignorance (in the example, the geometric mean of the agents' beliefs assigns probability 1 to horse 3 winning). 
Note, however, that if the agents can exchange their private information and have a common interest in doing so (as in our example), the problem of complementary ignorance vanishes. 
Moreover, heterogeneity in beliefs may not be due to differences in information. 
For example, when not assuming common knowledge of all agents' rationality, an agent may not change their belief even if the agent knew the other agents' beliefs.
The well-known example of two gentlemen contemplating a duel due to \citet{GSS04a} is plausibly such a case.

\paragraph{Equal range of utilities}
Relative utilitarianism normalizes the agents' utility functions so that the difference between the highest and lowest utility outcome is the same across all agents. 
For this to be a tenable conclusion, the set of feasible outcomes needs to contain a best-possible and a wort-possible outcome for every agent. 
Then we can plausibly postulate that every agent experiences the same difference in satisfaction between their most-preferred and least-preferred outcome.
\citet{DhMe99a} eloquently sum up this issue when stating
``[$\dots$] to apply RU meaningfully, one has to, and it suffices to, consider a set of alternatives sufficiently encompassing as to include, besides the actual alternatives of interest, each person's best and worst alternative within the ``universal'' set $A$, limited only by feasibility and justice. This leads in turn to a concept of ``absolute utility'': the ``correct'' scaling of an individual's utility is determined solely by his own preferences and by the philosophy of the state adopted.''

One cause for this specific normalization is anonymity.
If we do not assume assume anonymity, society's utility function can be a \emph{weighted} sum of the agents $(0,1)$-normalized utility functions (cf. \Cref{prop:main}).
This is equivalent to normalizing each agent's utility function to the interval $(0,w_i)$ (where $w_i$ is the weight of the agent) and taking the unweighted sum of the utility functions instead.
It can account for situations in which outcomes have far-reaching consequences for some agents and only minor consequences for other agents. 
But then one has to decide about the weights given to the agents, or, equivalently, the range of their experienced satisfaction, which is a non-trivial ethical challenge.

What remains even without assuming anonymity is that every agent gets the same weight across all decision situations. 
We have discussed in the introduction and when sketching the proof of \Cref{prop:main} that this is a consequence of independence of redundant acts.
So long as the set of outcomes is fixed (as we assume), this conclusion seems reasonable, perhaps even desirable.
If, however, one considers situations involving different sets of outcomes, it loses its appeal.
And so does independence of redundant acts since it conflicts with the requirement that the collective preferences do not change when restricting the set of outcomes.

\paragraph{Interpretation of beliefs}
	Two of the axioms are not formulated purely on the level of preferences: restricted monotonicity and no belief imposition reference the belief in the expected utility representation of a preference relation.
	While it is possible to state these axioms in terms of preferences only, such a formulation would be clumsy and obscure the intuitions.
	So for these axioms to be meaningful, it is not enough that the preferences can be \emph{represented} by expected utility maximization (that is, satisfy Savage's postulates~1 through~7).
	The representing beliefs and utility functions also need to be the decision maker's \emph{rationale} for the preference relation.

	One prerequisite for the latter is that the utility functions are state-independent as assumed in Savage's framework.
	Whether state-independent utilities are a tenable assumption depends on the context.
	State-independence requires that the decision-maker is only affected by the outcomes and not by the state.
	This is true in good approximation for, say, investment decisions where the decision-maker is only interested in the realized asset value and not the causes for the outcome.
	By contrast, if the decision-maker cares about the causes for an outcome, as, for example, when the decision-maker's health depends on the state, state-independence may fail.
	We refer to \citet{DrRu99a} for a critical discussion of state-(in)dependent utilities.
	If utilities were allowed to be state-dependent, the belief in the expected utility representation would not be unique, thus making it ambiguous how to interpret an axiom that references a single belief. 
	To regain the uniqueness of beliefs, one needs additional structural assumptions on the preferences \citep[see, e.g.,][]{KSV83a}.

	A more general critique is that beliefs and utility functions are nothing but mathematical objects used to concisely represent preferences.
	This is a general critique of subjective expected utility maximization and not specific to this paper. 
	\citet[][Section 1.3]{GSS14a} and \citet[][Section 5.1]{GGSS14a} address the subject is some detail and we will not engage in it further.

\paragraph{Properties of belief-averaging and relative utilitarianism}
One can ask which axiomatic properties \baru satisfies beyond those assumed in \Cref{thm:main} and which ones it violates.
As mentioned, restricted monotonicity and faithfulness together imply the restricted Pareto condition of \citet{GSS04a}. 
It is also clear from the impossibility result of \citet{Mong97a} that \baru does not satisfy the full Pareto condition (except for profiles where all agents have the same belief).

It is less obvious that the axioms imply that the beliefs and the utility functions are aggregated separately. 
That is, society's belief does not depend on the agents' utility functions and likewise for society's utility function.
The SWFs characterized in \Cref{prop:monotonicity} do not, in general, have this property. 
So the separability relies on the independence of redundant acts axiom.
Similarly, \baru is neutral in the sense that it does not distinguish between the outcomes. 
A priori, it could not be ruled out that, say, a larger weight is given to agents who rank acts that yield certain outcomes at the top of their preferences.
Independence of redundant acts again plays a key role in establishing neutrality.

\paragraph{Alternative interpretations of the model}
We have interpreted the formal framework as modeling collective decisions under uncertainty. 
An alternative interpretation is the aggregation of inter-temporal preferences.
Suppose there is a continuous time axis and at each point in time, some outcome is realized.
Moreover, there are multiple agents, each of which has preferences over the outcome streams ranking them by their discounted utility for some discounting function and some utility function on the outcomes.
The agents' preferences over the outcome streams are then to be aggregated into a collective preference.
For example, the members of a household may try to decide on a household consumption plan over some (possibly indefinite) time horizon.
\Citet{JaYa14a} discuss further examples and employ a similar model.

This situation fits our framework as follows.
Each point in time corresponds to a state.
Hence, outcome streams are identified with acts and discounting functions with beliefs.
\begin{align*}
	\text{states} \quad&\longleftrightarrow\quad \text{points in time}\\
	\text{acts} \quad&\longleftrightarrow\quad \text{outcome streams}\\
	\text{beliefs} \quad&\longleftrightarrow\quad \text{discounting functions}
\end{align*}
We point out that the analogue of averaging beliefs is averaging \emph{normalized} discounting functions.
Since beliefs are probability measures, the corresponding normalization is to scale discounting functions so that their integral over time is 1 (so discounting functions have to be in $L^1$).\footnote{This is also the normalization we would have derived for beliefs had we allowed them to be finite positive measures instead of probability measures.}

There is a vast literature on aggregation of inter-temporal preferences and no attempt is made to cover it here.
Whether our axioms should hold and \baru is suited for the aggregation task is a separate issue.
For example, there is an inherent ordering on points in time, which the current framework with an abstract state space does not capture.
The purpose of this discussion to merely to highlight that the presented framework is general enough to model various applications.

\section*{Acknowledgments}
This material is based on work supported by the Deutsche Forschungsgemeinschaft under grant {BR~5969/1-1}.
The author thanks Jean Baccelli, Tilman B\"orgers, Francesc Dilm\'e, David Easley, Loren Fryxell, Johannes H\"orner, Ian Jewitt, Michel Le Breton, Wolfgang Pesendorfer, Dominik Peters, Marcus Pivato, Evgenii Safonov, Yves Sprumont, and Omer Tamuz, an anonymous associate editor, and two anonymous referees for helpful comments.
A previous version of this paper has been presented in the Public Economics and Microeconomic Theory Seminar at Cornell University, the Microeconomic Theory Seminar at Princeton University, the D-TEA Workshop 2021, and the Conference on New Directions in Social Choice in St. Petersburg.

\appendix
\section*{APPENDIX: Proofs}

An SWF $\Phi$ gives rise to two functions $\phi$ and $\psi$, which take a profile $\p\in\bar{\mathcal R}^I$ to a belief $\phi(\p)\in\Pi$ and a utility function $\psi(\p)\in\bar{\mathcal U}$ of society. 
Whenever society is completely indifferent, $\phi(\p)$ can be arbitrary. 

Preferences are invariant under multiplication of the belief by a positive constant and positive affine transformations of utility functions.
For measures $\pi,\pi'$ on $(\Omega,\mathcal E)$ and utility functions $u,u'$ on $O$, we write $\pi\equiv \pi'$ if $\pi = \alpha\pi'$ for some $\alpha > 0$ and $u\equiv u'$ if $u = \alpha u' + \beta$ for $\alpha > 0$ and $\beta\in\mathbb R$.

The \textbf{belief dimension} of $\p$ is the dimension of $\{(\pi_i(E))_{i\in I_{\p}}\colon E \in \mathcal E\} \subset\mathbb R^{I_{\p}}$ and the \textbf{utility dimension} is the dimension of $\{(u_i(p))_{i\in I_{\p}}\colon p \text{ a probability measure on $O$}\}$.
Hence, the belief (utility) dimension is the maximal number of affinely (linearly) independent beliefs (utility functions).
We say that a profile $\p\in\bar{\mathcal R}^I$ has \textbf{non-degenerate beliefs} if the beliefs of any pair and any triple of agents in $I_{\p}$ are affinely independent.
We define \textbf{non-degenerate utilities} similarly.
A profile is \textbf{non-degenerate} if it has non-degenerate beliefs and utilities.
The assumption $|O| \ge 4$ ensures that non-degenerate profiles exist and are dense in $\bar{\mathcal R}^I$.

\Cref{prop:main} requires that we find $\bm v,\bm w\in R^I_{++}$ such that $\phi(\p) \equiv \sum_{i\in I_{\p}} v_i\pi_i$ and $\psi(\p) \equiv \sum_{i\in I_{\p}} w_iu_i$ for all profiles $\p\in\bar{\mathcal R}^I$ with non-degenerate beliefs.
The proof proceeds in three steps.
First, we examine the implications of restricted monotonicity in conjunction with faithfulness and no belief imposition.
These axioms imply that on all non-degenerate profiles, the weights assigned to an agent's belief and utility function can only depend on the agent's own preferences. 
The weights for beliefs may be negative however. 
Second, we add continuity, which allows us to rule out negative weights and to extend the obtained representation to all profiles with non-degenerate utilities (\Cref{prop:monotonicity}).
Lastly, independence of redundant acts implies that the weights of an agent cannot depend on the agent's preferences either and thus have to be constant across all profiles.
Deriving \Cref{thm:main} from \Cref{prop:main} is easy.

\section{Implications of Restricted Monotonicity}\label{sec:monotonicity}

The proofs that the weight of the belief and the weight of the utility function of an agent can only depend on the agent's preferences in Sections~\ref{sec:beliefaggregation} and~\ref{sec:tasteaggregation} proceed along the same lines.
For the most part, the proof for beliefs requires more work, since we cannot rule out negative weights.
Thus, we advise readers interested in the proofs to take a look at \Cref{sec:tasteaggregation} first.

\subsection{Aggregation of Beliefs}\label{sec:beliefaggregation}

The first lemma states that if an agent changes its preferences away from complete indifference, the society's belief for the obtained profile is an affine combination of society's belief for the original profile and the new belief of the agent.
The restricted Pareto condition of \citet{GSS04a} already implies that society's belief is an affine combination of its members' beliefs.
The additional strength of restricted monotonicity lies in the fact that no matter what the new belief of the agent is, it is always combined with the same belief of society for the original profile.
Since we assume that an agent cannot impose its belief on the society, the agent's weight in the affine combination cannot be 1.

\begin{lemma}\label{lem:beliefaffine1}
	Assume that $\Phi$ satisfies restricted monotonicity and no belief imposition.
	Let $i\in I$ and $\p\in\bar{\mathcal R}^I$.
	Then, $\phi(\p) = (1-\alpha)\phi(\p[\sim i]) + \alpha\pi_i$ for some $\alpha\in\mathbb R$.
	If $\Phi(\p[\sim i])$ is not complete indifference, then $\alpha\neq 1$.
\end{lemma}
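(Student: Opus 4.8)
The plan is to exploit the defining feature of restricted monotonicity: the belief $\phi(\p[\sim i])$ of the ``reduced'' society (with agent $i$ indifferent) is a fixed reference point, and feeding agent $i$ a new belief $\pi_i$ about which agent $i$ and the reduced society agree on a pair of acts forces the new social belief $\phi(\p)$ to lie on the line through those two beliefs. First I would dispose of the degenerate case: if $\Phi(\p)$ is complete indifference then $\phi(\p)$ is arbitrary and any representation works, so assume $\Phi(\p)$ is not complete indifference; similarly if $\s[i]$ is complete indifference then $\p = \p[\sim i]$ and we may take $\alpha = 0$. So assume both $\s[i]$ and $\Phi(\p)$ are concerned.

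The core argument constructs, for each event $E\in\mathcal E$, a pair of acts $f,g$ witnessing restricted monotonicity and reads off a linear constraint on $\phi(\p)(E)$. Write $\pi^0 = \phi(\p[\sim i])$ (if $\Phi(\p[\sim i])$ is complete indifference, pick $\pi^0 = \pi_i$, which by the remark after the restricted monotonicity axiom forces $\s = \s[i]$ and we are done with $\alpha=1$; so assume $\Phi(\p[\sim i])$ is concerned and $\pi^0$ is its unique belief). Using non-atomicity of $\pi^0$ and $\pi_i$ together with a Lyapunov-type argument, for a given event $E$ I would build acts $f$ and $g$ taking values in a two-outcome set $\{x,y\}$ with $u^0(x) > u^0(y)$ (for a utility function $u^0$ representing $\s[\sim i]$) such that: the preimages $f^{-1}(x)$ and $g^{-1}(x)$ are events on which $\pi^0$ and $\pi_i$ agree (so $f_*\pi^0 = f_*\pi_i$ and $g_*\pi^0 = g_*\pi_i$), $\s[\sim i]$ is indifferent between $f$ and $g$ (i.e. $\pi^0(f^{-1}(x)) = \pi^0(g^{-1}(x))$), and agent $i$'s preference between $f$ and $g$ is governed by whether $\pi_i(f^{-1}(x))$ is larger or smaller than $\pi_i(g^{-1}(x))$, which via the construction tracks $\pi^0(E)$ versus $\pi_i(E)$. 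Then restricted monotonicity (including its strict part) forces $\ev_{\s}(f) \ge \ev_{\s}(g)$ iff $\ev_{\s[i]}(f) \ge \ev_{\s[i]}(g)$, which is an inequality of the form $\sgn\big(\phi(\p)(f^{-1}(x)) - \phi(\p)(g^{-1}(x))\big) = \sgn\big(\pi_i(f^{-1}(x)) - \pi_i(g^{-1}(x))\big)$. Varying $E$ and the target probabilities, these sign conditions pin down that $E\mapsto \phi(\p)(E)$ is an affine function of the pair $(\pi^0(E), \pi_i(E))$ that is a probability measure and agrees with both $\pi^0$ and $\pi_i$ wherever they agree; such a measure must be $(1-\alpha)\pi^0 + \alpha\pi_i$ for a single scalar $\alpha$ independent of $E$. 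Finally, no belief imposition gives $\alpha \ne 1$: if $\alpha = 1$ then $\phi(\p) \equiv \pi_i$, i.e. $\pi = \pi_i$, while $\pi_{\sim i} = \pi^0 \ne \pi_i$ can be arranged (or is already the hypothesis), contradicting the axiom — more carefully, one applies no belief imposition with the roles set up so that $\pi_{\sim i} \ne \pi_i$ but $\pi = \pi_i$, which is exactly what is forbidden.

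The main obstacle I expect is the measure-theoretic construction of the witnessing acts $f$ and $g$: one needs, simultaneously, control of the $\pi^0$-measure (to get $\s[\sim i]$-indifference), of the $\pi_i$-measure (to get the desired strict or indifferent $\s[i]$-preference), and the ``agreement'' conditions $f_*\pi^0 = f_*\pi_i$, $g_*\pi^0 = g_*\pi_i$. Achieving all of this requires carving $\Omega$ into pieces using the non-atomicity of both beliefs — essentially a Lyapunov convexity argument applied to the vector measure $(\pi^0, \pi_i)$ (and perhaps $\pi^0$ restricted to a common-agreement sub-$\sigma$-algebra) — to realize prescribed pairs of probabilities on a common event. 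A secondary subtlety is handling the case where $\pi^0 = \pi_i$ on all of $\mathcal E$ except possibly off a null set: then the conclusion is trivial with any $\alpha$, so one must be careful that the argument does not implicitly assume $\pi^0 \ne \pi_i$. Once the acts are in hand, extracting the single scalar $\alpha$ and deducing $\alpha \ne 1$ from no belief imposition is routine.
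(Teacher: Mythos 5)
There is a genuine gap, and it sits exactly where the real work is. The paper's own proof is short because it observes that restricted monotonicity turns $\Phi(\p[\sim i])$ and $\s[i]$ into two ``pseudo-agents'' satisfying the restricted Pareto indifference condition with respect to $\Phi(\p)$, and then invokes Theorem~1 of \citet{GSS04a} (two-agent case) to get the affine combination. Your proposal instead tries to re-derive that theorem directly, and the sketch breaks down in two places. First, the witnessing-acts construction is internally inconsistent: if $f$ and $g$ both take values in $\{x,y\}$, the agreement conditions $f_*\pi^0 = f_*\pi_i$ and $g_*\pi^0 = g_*\pi_i$ say $\pi^0(f^{-1}(x)) = \pi_i(f^{-1}(x))$ and $\pi^0(g^{-1}(x)) = \pi_i(g^{-1}(x))$, and the requirement that $\s[\sim i]$ be indifferent says $\pi^0(f^{-1}(x)) = \pi^0(g^{-1}(x))$; together these force $\pi_i(f^{-1}(x)) = \pi_i(g^{-1}(x))$, so agent $i$ is also indifferent. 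The ``sign conditions'' you want to read off from the strict part of restricted monotonicity can never arise with such acts --- only Pareto \emph{indifference} information is available, which is precisely the hypothesis of the Gilboa--Samet--Schmeidler theorem.

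Second, the concluding sentence --- that a probability measure agreeing with $\pi^0$ and $\pi_i$ on every event where they agree ``must be $(1-\alpha)\pi^0 + \alpha\pi_i$'' --- is not routine; it is the main content of \citet{GSS04a}'s aggregation theorem (the part the authors themselves call surprising). The indifference constraints only pin down $\phi(\p)$ on the collection of agreement events, and passing from there to membership in the affine span of $\{\pi^0,\pi_i\}$ requires a Lyapunov-convexity/separation argument that you assert rather than supply. Your treatment of the degenerate cases ($\s[i]$ or $\Phi(\p[\sim i])$ completely indifferent, $\pi^0=\pi_i$) and the final use of no belief imposition to rule out $\alpha=1$ do match the paper and are fine; the missing piece is the affine aggregation step itself, which you should either prove in full or, as the paper does, obtain by citation.
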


\begin{proof}
	Let $\s = \Phi(\p)$ and $\s[\sim i] = \Phi(\p[\sim i])$.
	Restricted monotonicity implies that $f\sim g$ whenever $f\sim_{\sim i} g$, $f\sim_i g$, $f_*\phi(\p[\sim i]) = f_*\pi_i$, and $g_*\phi(\p[\sim i]) = g_*\pi_i$.
	Thus, (the two-agent case of) Theorem 1 of \citet[][]{GSS04a} implies that $\phi(\p) = (1-\alpha)\phi(\p[\sim i]) + \alpha\pi_i$ for some $\alpha\in\mathbb R$.	
	If $\pi_i = \phi(\p[\sim i])$ or $\Phi(\p)$ is complete indifference, we can choose $\alpha$ arbitrarily.
	If $\s[i]$ is complete indifference, we have $\alpha = 0$.
	Otherwise, if $\Phi(\p[\sim i])$ is not complete indifference, $\phi(\p) \neq \pi_i$, since $\Phi$ satisfies no belief imposition, and so $\alpha\neq 1$.
\end{proof}

\begin{lemma}\label{lem:beliefaffine2}
	Assume that $\Phi$ satisfies restricted monotonicity, faithfulness, and no belief imposition.
	Let $\p\in\bar{\mathcal R}^I$ where $\Phi(\p)$ is not complete indifference.
	Then $\phi(\p) = \sum_{i\in I_{\p}} v_i\pi_i$ for some $\bm v\in\mathbb R^{I_{\p}}$ with $\sum_{i\in I_{\p}} v_i = 1$.
	Moreover, if $(\pi_i)_{i\in I_{\p}}$ are affinely independent and $(u_i)_{i\in I_{\p}}$ are linearly independent, then $\bm v\in(\mathbb R-\{0\})^{I_{\p}}$ and $\bm v$ is unique.
\end{lemma}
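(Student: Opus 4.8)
The plan is to establish the representation by induction on $|I_{\p}|$, using \Cref{lem:beliefaffine1} to strip off one concerned agent at a time. If $I_{\p}=\{i\}$ is a singleton, then $\p[\sim i]$ has no concerned agent, so $\Phi(\p[\sim i])$ is complete indifference by faithfulness, and the observation following the definition of restricted monotonicity gives $\Phi(\p)=\s[i]$, hence $\phi(\p)=\pi_i$ and $v_i=1$. For the inductive step pick any $i\in I_{\p}$. If $\Phi(\p[\sim i])$ is complete indifference, then again $\Phi(\p)=\s[i]$, so $\phi(\p)=\pi_i$ and we put weight $1$ on $i$ and $0$ on every other concerned agent. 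Otherwise the inductive hypothesis applies to $\p[\sim i]$, whose concerned agents are $I_{\p}\setminus\{i\}$, yielding $\phi(\p[\sim i])=\sum_{j\in I_{\p}\setminus\{i\}}v'_j\pi_j$ with $\sum_j v'_j=1$; \Cref{lem:beliefaffine1} gives $\phi(\p)=(1-\alpha)\phi(\p[\sim i])+\alpha\pi_i$ for some $\alpha$, so setting $v_i=\alpha$ and $v_j=(1-\alpha)v'_j$ for $j\neq i$ writes $\phi(\p)$ as the required affine combination, with weights summing to $\alpha+(1-\alpha)=1$ automatically (as they must, since $\phi$ takes values in $\Pi$).

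Uniqueness of $\bm v$ under affine independence of $(\pi_i)_{i\in I_{\p}}$ is immediate: if $\sum_i v_i\pi_i=\sum_i v'_i\pi_i$ with $\sum_i v_i=\sum_i v'_i=1$, then $\sum_i(v_i-v'_i)\pi_i=0$ with $\sum_i(v_i-v'_i)=0$, forcing $\bm v=\bm v'$. In particular, the coefficient $\alpha$ returned by \Cref{lem:beliefaffine1} when agent $i$ is peeled off equals $v_i$, irrespective of the order in which the remaining agents are peeled off.

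For the claim $\bm v\in(\mathbb R-\{0\})^{I_{\p}}$, fix $i\in I_{\p}$ and peel the concerned agents off in some order $i_1,\dots,i_k$ ($k=|I_{\p}|$) that ends with $i_k=i$. With $\p^{0}=\p$ and $\p^{m}=\p^{m-1}[\sim i_m]$, write $\phi(\p^{m-1})=(1-\alpha_m)\phi(\p^{m})+\alpha_m\pi_{i_m}$; since $\p^{k-1}$ has $i$ as its only concerned agent, $\phi(\p^{k-1})=\pi_i$, and substituting upward expresses $\phi(\p)$ as an affine combination of $\pi_{i_1},\dots,\pi_{i_k}$ in which the coefficient of $\pi_i$ equals $\prod_{m=1}^{k-1}(1-\alpha_m)$. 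By the uniqueness just noted this product is $v_i$, so it suffices to show $\alpha_m\neq 1$ for each $m<k$; by the last sentence of \Cref{lem:beliefaffine1} this holds provided $\Phi(\p^{m})$ is not complete indifference. Hence everything comes down to showing that $\Phi$ is not complete indifference on any profile obtained from $\p$ by turning a proper subset of its concerned agents indifferent while leaving at least one concerned. For such a sub-profile, restricted monotonicity together with faithfulness forces (via the restricted Pareto condition of \citet{GSS04a}) the social utility to be a linear combination of the concerned agents' utilities, with positive coefficients by the strict part of restricted monotonicity; combined with the non-degeneracy of the utility functions this keeps the strict part from being vacuous when one more concerned agent is added, so that — building separating acts $f,g$ via Lyapunov's convexity theorem \citep{Liap40a} on which the smaller society is indifferent, the added agent strictly ranks $f$, and the two relevant beliefs agree — the consequent of restricted monotonicity rules out complete indifference for the enlarged profile. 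Running this upward from the single-agent sub-profiles (where $\Phi$ coincides with that agent's non-indifferent preference) yields the required non-indifference along the whole chain, and hence $v_i\neq 0$.

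The main obstacle is this last step: the structural (non-strict) content of restricted monotonicity is by itself consistent with an agent's belief being ignored, so one must lean on the non-degeneracy of the utility functions to exclude complete indifference on the intermediate sub-profiles — and the delicate point is that the strict part of restricted monotonicity falls silent exactly when the relevant utility functions are affinely related, so one has to verify that the linear-independence hypothesis (with a separate argument for the borderline affine configurations) is enough to keep it binding. The representation itself and the uniqueness claim are, by contrast, routine bookkeeping once \Cref{lem:beliefaffine1} is available.
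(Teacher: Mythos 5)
Your proof is correct and follows essentially the same route as the paper's: iterate \Cref{lem:beliefaffine1} to obtain the affine representation, read off uniqueness from affine independence, and get $v_i\neq 0$ from the fact that each peeling step has $\alpha\neq 1$ because the intermediate sub-profiles are not completely indifferent --- a fact the paper likewise derives from linear independence of the utilities (via its utility-aggregation lemmas), though it organizes the non-vanishing argument as an induction on $|I_{\p}|$ that combines the two decompositions through $\p[\sim i]$ and $\p[\sim j]$ rather than your single chain with the product formula $v_i=\prod_m(1-\alpha_m)$. The ``borderline affine configuration'' you flag as the delicate point is in fact excluded by the hypothesis itself: in the paper's convention, linear independence of $(u_i)_{i\in I_{\p}}$ means no nontrivial combination is constant, so the utility of a newly added concerned agent can never be equivalent to $\pm$ the sub-society's utility (a combination of the remaining agents' utilities), and the strict part of restricted monotonicity therefore stays binding at every step of your chain.
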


\begin{proof}
	Since $\Phi$ is faithful, we have that $\Phi(\p[\sim I_{\p}])$ is complete indifference.
	Letting one agent after another in $I_{\p}$ change its preferences from complete indifference to $\s[i]$ and applying \Cref{lem:beliefaffine1} at each step, we get $\phi(\p) = \sum_{i\in I_{\p}}  v_i\pi_i$ for some $\bm v\in\mathbb R^{I_{\p}}$ with $\sum_{i\in I_{\p}} v_i = 1$.
	
	If $(\pi_i)_{i\in I_{\p}}$ are affinely independent, $\bm v$ is unique.
	We prove by induction over $|I_{\p}|$ that $ v_i\neq 0$ for all $i$.
	If $|I_{\p}| = 1$, then $v_i = 1$ for $i \in I_{\p}$ is forced.
	Now suppose that $|I_{\p}| > 1$ and let $i, j\in I_{\p}$.
	By the induction hypothesis, we have $\phi(\p[\sim i]) = \sum_{k\in I_{\p}-\{i\}} v'_k\pi_k$ and $\phi(\p[\sim j]) = \sum_{k\in I_{\p}-\{j\}} v''_k\pi_k$ for some $\bm v'\in(\mathbb R-\{0\})^{I_{\p} - \{i\}}$ and $\bm v''\in(\mathbb R-\{0\})^{I_{\p} - \{j\}}$.
	The assumption that $(u_i)_{i\in I_{\p}}$ are linearly independent ensures that $\Phi(\p[\sim i])$ and $\Phi(\p[\sim j])$ are not complete indifference.
	\Cref{lem:beliefaffine1} implies that 
	\[
		\phi(\p) = (1-\alpha)\phi(\p[\sim i]) + \alpha\pi_i = (1-\beta)\phi(\p[\sim j]) + \beta\pi_j
	\]
	for some $\alpha,\beta\in\mathbb R-\{1\}$.
	We set $\bm v = ((1-\alpha)\bm v',\alpha) = ((1-\beta)\bm v'',\beta)$, where $\alpha$ and $\beta$ appear in position $i$ and $j$ respectively.
	Since $\alpha\neq 1$ and $ v'_k\neq 0$, it follows that $ v_k\neq 0$ for all $k\in I_{\p} - \{i\}$. 
	Similarly, $\beta\neq 1$ implies that $ v_k\neq 0$ for all $k\in I_{\p} - \{j\}$. 
\end{proof}

For later use, we prove a fact for profiles with belief dimension at least 3.

\begin{lemma}\label{lem:belieftriple}
	Assume that $\Phi$ satisfies restricted monotonicity and no belief imposition.
	Let $\p\in\bar{\mathcal R}^I$ where $\Phi(\p)$ is not complete indifference.
	If $\p$ has belief dimension at least 3, there are distinct $i, j\in I$ such that $\phi(\p[\sim i, j]), \pi_i$, and $\pi_j$ are affinely independent.
\end{lemma}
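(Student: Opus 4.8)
The plan is to argue by contradiction. Suppose no pair of distinct agents $i,j$ makes $\phi(\p[\sim i,j]),\pi_i,\pi_j$ affinely independent. Since $\p$ has belief dimension at least $3$, three of the beliefs $(\pi_i)_{i\in I_{\p}}$ are affinely independent; after relabeling agents, assume these are $\pi_1,\pi_2,\pi_3$ with $1,2,3\in I_{\p}$. In particular $\pi_1,\pi_2,\pi_3$ are pairwise distinct, so for distinct $p,q\in\{1,2,3\}$ the affine line $L_{pq}$ through $\pi_p$ and $\pi_q$ is well defined.

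First I would pin down the location of $\phi(\p)$ relative to each of the three pairs. For distinct $p,q\in\{1,2,3\}$, two applications of \Cref{lem:beliefaffine1}---first making agent $p$ completely indifferent, then agent $q$ (whose belief is still $\pi_q$ in $\p[\sim p]$, since $q\neq p$)---give $\phi(\p) = (1-\alpha)\phi(\p[\sim p])+\alpha\pi_p$ and $\phi(\p[\sim p]) = (1-\beta)\phi(\p[\sim p,q])+\beta\pi_q$ for some $\alpha,\beta\in\mathbb R$, and hence express $\phi(\p)$ as an affine combination of $\phi(\p[\sim p,q]),\pi_p,\pi_q$. So $\phi(\p)$ lies in the affine hull of these three points, which, by the contradiction hypothesis, has dimension at most $1$ while already containing the two distinct points $\pi_p,\pi_q$; therefore this hull equals $L_{pq}$, and $\phi(\p)\in L_{pq}$. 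As $p,q$ were arbitrary, $\phi(\p)\in L_{12}\cap L_{13}\cap L_{23}$.

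It remains to see that this intersection is empty. The lines $L_{12}$ and $L_{13}$ are distinct---otherwise $\pi_3\in L_{12}$, contradicting the affine independence of $\pi_1,\pi_2,\pi_3$---and both pass through $\pi_1$, so $L_{12}\cap L_{13}=\{\pi_1\}$; hence $\phi(\p)=\pi_1$. But then $\pi_1=\phi(\p)\in L_{23}$, i.e.\ $\pi_1$ lies in the affine hull of $\pi_2$ and $\pi_3$, again contradicting affine independence. This contradiction establishes the lemma. (We use that $\Phi(\p)$ is not complete indifference, so that $\phi(\p)$ is a genuine, uniquely determined belief; and if $\Phi(\p[\sim p,q])$ were complete indifference for some pair with $\pi_p\neq\pi_q$, the conclusion would be immediate, since then $\phi(\p[\sim p,q])$ may be taken to be any belief off $L_{pq}$.)

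The step I expect to require the most care is not the incidence geometry but the bookkeeping around completely indifferent societies in the auxiliary profiles $\p[\sim p]$ and $\p[\sim p,q]$: there the representing belief is not pinned down, and the two invocations of \Cref{lem:beliefaffine1} must be read together with the accompanying freedom in the choice of $\phi$. Once that convention is fixed, the affine-combination identity and the elementary geometry of the three lines $L_{12},L_{13},L_{23}$ spanned by the affinely independent points $\pi_1,\pi_2,\pi_3$ close the argument.
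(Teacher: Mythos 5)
Your proof is correct and is essentially the paper's argument in contrapositive form: both rest on the same two applications of \Cref{lem:beliefaffine1} to write $\phi(\p)$ as an affine combination of $\phi(\p[\sim p,q]),\pi_p,\pi_q$, and on the same incidence-geometry fact that a single point cannot lie on all three lines spanned by three affinely independent beliefs. The paper simply runs this directly (pick a pair whose line misses $\phi(\p)$ and conclude) rather than by contradiction.
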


\begin{proof}
		Since $\p$ has belief dimension at least 3, we may assume that $\pi_1,\pi_2$, and $\pi_3$ are affinely independent.
		So $\phi(\p)$ cannot be in the affine hull of all three pairs from $\{\pi_1,\pi_2,\pi_3\}$, for if say $\phi(\p)$ is in the affine hull of $\{\pi_1,\pi_2\}$ and $\{\pi_1,\pi_3\}$, then $\phi(\p) = \pi_1$ and so is not in the affine hull of $\{\pi_2,\pi_3\}$.
		Assume that $\phi(\p)$ is not in the affine hull of $\{\pi_1,\pi_2\}$.
		Then \Cref{lem:beliefaffine1} implies that $\phi(\p[\sim 1,2])$ is not in the affine hull of $\{\pi_1,\pi_2\}$.
		Since $\pi_1\neq\pi_2$, $\phi(\p[\sim 1,2]), \pi_1$, and $\pi_2$ are affinely independent.
\end{proof}

\Cref{lem:beliefaffine2} ensures that society's belief is an affine combination of the agents' beliefs. 
To show that $\phi$ has the form claimed in \Cref{prop:monotonicity}, we have to prove that the relative weight of an agent in this affine combination depends only on the agent's own belief and utility function.
For now, we only derive a weaker conclusion, which allows negative weights.

\begin{lemma}\label{lem:beliefindependence}
	Assume that $\Phi$ satisfies restricted monotonicity, faithfulness, and no belief imposition.
	Then there are for all $i \in I$ and $J \subset I$, $\nu_i\colon \mathcal R\rightarrow\mathbb R - \{0\}$ and $\sigma_i^J\colon\mathcal R^J\rightarrow\mathbb \{-1,1\}$ such that for all non-degenerate $\p\in\bar{\mathcal R}^I$, $\phi(\p) \equiv \sum_{i\in I_{\p}} \sigma^{I_{\p}}_i(\p)\nu_i(\s[i])\pi_i$.
	Moreover, for all $i,j\in I_{\p}$, and non-degenerate $\p\in\bar{\mathcal R}^I$,
	\begin{align*}
		\frac{\sigma^{I_{\p}}_i(\p)}{\sigma^{I_{\p}}_j(\p)} = \frac{\sigma^{\{i,j\}}_i(\p[\sim (I - \{i,j\})])}{\sigma^{i,j}_j(\p[\sim (I - i,j)])}
	\end{align*} 
\end{lemma}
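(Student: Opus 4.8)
The plan is to exploit the rigid structure that restricted monotonicity forces via \Cref{lem:beliefaffine1}. By \Cref{lem:beliefaffine2}, every non-degenerate profile $\p$ has $\phi(\p) = \sum_{i\in I_{\p}} v_i\pi_i$ with $\sum_{i\in I_{\p}} v_i = 1$ and all $v_i\neq 0$, and this affine combination is unique whenever the $(\pi_i)_{i\in I_{\p}}$ are affinely independent --- in particular whenever $|I_{\p}|\le 3$. For a non-degenerate profile with exactly two concerned agents $i\neq j$ holding relations $\s,\s'$ I would therefore set $r_{ij}(\s,\s') := v_i/v_j \in \mathbb R\setminus\{0\}$; note $r_{ij}(\s,\s')\,r_{ji}(\s',\s) = 1$. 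The two things to establish are that $|r_{ij}|$ is multiplicatively separable, $|r_{ij}(\s,\s')| = \nu_i(\s)/\nu_j(\s')$, and that in an arbitrary non-degenerate profile the pairwise weight ratio $v_i/v_j$ is always $r_{ij}(\s[i],\s[j])$; the lemma then follows by letting $\sigma_i^{J}$ record signs.

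The workhorse is the three-agent case. Given a non-degenerate $\p$ with $I_{\p} = \{i,j,k\}$, apply \Cref{lem:beliefaffine1} to $k$: $\phi(\p) = (1-\alpha)\phi(\p[\sim k]) + \alpha\pi_k$ for some $\alpha$. Linear independence of $u_i,u_j$ makes $\Phi(\p[\sim k])$ not complete indifference (as in the proof of \Cref{lem:beliefaffine2}), so $\alpha\neq 1$ and $\phi(\p[\sim k]) = v_i'\pi_i + v_j'\pi_j$ with $v_i'/v_j' = r_{ij}(\s[i],\s[j])$. Since $(\pi_i,\pi_j,\pi_k)$ are affinely independent, the displayed equation is the unique representation of $\phi(\p)$, so its $\pi_i{:}\pi_j$ weight ratio is $r_{ij}(\s[i],\s[j])$; the same argument applied to $j$ and to $i$ gives $\pi_i{:}\pi_k$ ratio $r_{ik}(\s[i],\s[k])$ and $\pi_j{:}\pi_k$ ratio $r_{jk}(\s[j],\s[k])$. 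Dividing yields the cocycle identity
\begin{equation*}
	r_{ij}(\s[i],\s[j]) = r_{ik}(\s[i],\s[k])\cdot r_{kj}(\s[k],\s[j])
\end{equation*}
valid whenever $\{i,j,k\}$ carries a non-degenerate profile. For larger non-degenerate profiles I would induct on $|I_{\p}|$: writing $\bm w$ for the (unique up to scale, normalized to sum $1$) vector whose pairwise ratios are the $r_{ij}$'s --- it exists by the cocycle identity --- and $S := \sum_{\ell} w_\ell\pi_\ell$, removing any agent $c$ via \Cref{lem:beliefaffine1} together with the induction hypothesis for $\p[\sim c]$ shows $\phi(\p) = t_c S + (1-t_c)\pi_c$ for some scalar $t_c$; if $t_c\neq 1$ held for three distinct removal agents, their beliefs would be collinear, contradicting non-degeneracy, so some $t_c = 1$ and $\phi(\p) = S$. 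Thus $v_i/v_j = r_{ij}(\s[i],\s[j])$ in every non-degenerate profile, which is exactly the ``moreover'' once $\sigma$ is defined as the sign of that ratio.

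It remains to separate $|r_{ij}|$ and record signs consistently. Since $|I|\ge 3$, fix an auxiliary agent $k\notin\{i,j\}$ and a reference relation $\rho$; the cocycle identity gives $|r_{ij}(\s,\s')| = |r_{ik}(\s,\rho)|\,/\,|r_{jk}(\s',\rho)|$, so $\nu_i(\s) := |r_{ik}(\s,\rho)|$ works for that choice. To make $\nu_i$ well defined one checks, again via the cocycle, that changing the auxiliary agent or the reference multiplies every $\nu_i$ by a single positive constant --- the quotient of two candidates depends on neither $\s$ nor $i$ --- and relations $\s$ incompatible with $\rho$ are handled by a second reference; the global scale is immaterial to the statement. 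For the signs, $\sgn(r_{ij})$ is a $\{\pm1\}$-valued cocycle, so for each non-degenerate $\p$ there is a consistent $\sigma^{I_{\p}}(\p)\in\{\pm1\}^{I_{\p}}$ with $v_i/v_j = \sigma_i^{I_{\p}}(\p)\nu_i(\s[i])\,/\,(\sigma_j^{I_{\p}}(\p)\nu_j(\s[j]))$; taking this as the definition of $\sigma$ gives $\phi(\p)\equiv\sum_{i\in I_{\p}}\sigma_i^{I_{\p}}(\p)\nu_i(\s[i])\pi_i$, and since $\p[\sim(I - \{i,j\})]$ again has weight ratio $r_{ij}(\s[i],\s[j])$, the asserted identity between the sign ratios holds.

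I expect the separability step to be the main obstacle. The cocycle identity is only available on triples carrying a non-degenerate profile, and $r_{ij}(\s,\s')$ is only defined when $\s,\s'$ form a non-degenerate two-agent profile (distinct beliefs, linearly independent utilities), so upgrading the multiplicative cocycle to a genuine product $\nu_i(\s)/\nu_j(\s')$ requires choosing references carefully and using that $\mathcal R$ is path-connected to propagate and reconcile the definitions on overlapping domains. The comparatively cleaner points are the three-agent computation and the induction for $|I_{\p}|\ge 4$, where the only subtlety --- possible non-uniqueness of the affine representation --- is defused by the collinearity argument above together with no belief imposition.
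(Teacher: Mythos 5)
Your proposal is correct and follows the same overall strategy as the paper's proof: use \Cref{lem:beliefaffine1} together with affine independence in three-agent profiles to show that the pairwise weight ratio depends only on the two agents' own preferences (your cocycle identity is exactly what the paper establishes in Steps~1--2 of its proof via the $\lambda^{i,j}$ computations against a fixed reference profile $\tilde\p$), separate that ratio multiplicatively to define $\nu_i$, treat signs as a separate $\{\pm1\}$-valued object, and extend to $|I_{\p}|\ge 4$ by induction. Two genuine points of divergence are worth recording. First, your induction step is cleaner than the paper's: you remove three different agents $c$ and observe that if all three removals left a nontrivial $\pi_c$-component, then $\pi_{c_1},\pi_{c_2},\pi_{c_3}$ would lie in the two-dimensional linear span of $\phi(\p)$ and $S$ and hence be affinely dependent, contradicting non-degeneracy; the paper instead invokes \Cref{lem:belieftriple} to find a pair $i,j$ with $\phi(\p[\sim i,j]),\pi_i,\pi_j$ affinely independent and then runs a four-case analysis on the signs of the two affine coefficients. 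Second, the obstacle you flag---that $r_{ik}(\s,\rho)$ is undefined when $\s$ and the reference $\rho$ share a belief or have linearly dependent utilities---is real but equally present in the paper's argument (its reference profile $\tilde\p$ suffers the same restriction, and the paper simply declares all considered profiles non-degenerate); your second-reference patch is the standard fix and corresponds to the paper's reconciliation of the definitions obtained from the different triples $\{1,2,l\}$. One small wrinkle to tidy: you normalize $\bm w$ to sum to $1$ before knowing that the sum is nonzero; it is safer to work with the unnormalized $S=\sum_\ell w_\ell\pi_\ell$ and the identity $\phi(\p)=s_c S+\beta_c\pi_c$, to which your collinearity argument applies unchanged, with positivity of the final scaling recovered at the end from the fact that $\phi(\p)$ has total mass~$1$.
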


\begin{proof}
	We assume throughout the proof that all considered profiles are non-degenerate.
	For $l\in I-\{1,2\}$, let $I_l = \{1,2,l\}$ and $\bar{\mathcal R}_l\subset\bar{\mathcal R}^I$ be the set of profiles with $I_{\p} = I_l$.
	Let $l\in I-\{1,2\}$ be arbitrary and fix some $\tilde\p\in\bar{\mathcal R}_l$.
	By \Cref{lem:beliefaffine2}, there are unique functions $\kappa_i\colon\bar{\mathcal R}_l\rightarrow\mathbb R-\{\bm 0\}$ for all $i \in I_l$ such that $\phi(\p) = \sum_{i\in I_l} \kappa_i(\p)\pi_i$ for all $\p\in\bar{\mathcal R}_l$. 
	For $i,j\in I_l$ and $\p\in\bar{\mathcal R}_l$, let
	\begin{align*}
		\lambda^{i,j}(\p) = \frac{\kappa_j(\p[-j], \tilde{\s[j]}) \kappa_i(\p)}{\kappa_j(\tilde\p) \kappa_i(\p[-j], \tilde{\s[j]})} \quad\text{ and }\quad \nu_i(\s[i]) = \frac{|\kappa_i(\p)|}{|\lambda^{i,j}(\p)|}  \quad\text{ and }\quad \sigma^{I_l}_i(\p) = \sgn(\kappa_i(\p))
	\end{align*}
	The fact that the $\kappa_i$ map to $\mathbb R-\{\bm 0\}$ ensures that $\lambda^{i,j}$ is well-defined.
	We show that $\nu_i$ is independent of $j$ and $\p[-i]$ and thus well-defined.
	We proceed in three steps.
	Note that the projection of $\bar{\mathcal R}_l$ to $\mathcal R$ that returns the preferences of $i$ is onto, and so $\nu_i$ is a function on all of $\mathcal R$.
	
	\begin{step}\label{step:ratio}
		Let $k\in I_l-\{i, j\}$.
		We show that $\frac{\kappa_i(\p)}{\kappa_j(\p)}$ is independent of $\s[k]$.
		To this end, let $\p'\in\bar{\mathcal R}_l$ such that $\p[-k]' = \p[-k]$.
		By \Cref{lem:beliefaffine1}, we have that 
		\begin{align*}
			\phi(\p) &= (1-\alpha)\phi(\p[\sim k]) + \alpha\pi_k = \kappa_i(\p)\pi_i  + \kappa_j(\p)\pi_j + \kappa_k(\p)\pi_k\text{, and}\\
			\phi(\p') &= (1-\beta)\phi(\p[\sim k]') + \beta\pi_k' = \kappa_i(\p')\pi_i  + \kappa_j(\p')\pi_j + \kappa_k(\p')\pi_k',
		\end{align*}
		for some $\alpha,\beta\in\mathbb R - \{1\}$.
		Affine independence of $\pi_1,\pi_2,\pi_l$ and $\pi_1,\pi_2,\pi'_l$ implies that $\kappa_i(\p)\pi_i + \kappa_j(\p)\pi_j \equiv\phi(\p[\sim k]) = \phi(\p[\sim k]')\equiv \kappa_i(\p')\pi_i + \kappa_j(\p')\pi_j$.
		In particular, $\frac{\kappa_i(\p)}{\kappa_j(\p)} = \frac{\kappa_i(\p')}{\kappa_j(\p')}$.
	\end{step}
	
	\begin{step}\label{step:lambda}
		We show that $\lambda^{i,j}(\p)$ is independent of $i$ and $j$.
		This is tedious, but only uses \Cref{step:ratio}.
		Let $k\in I_l-\{i, j\}$.
		First we show independence of $j$.
		\begin{align*}
			\lambda^{i,j}(\p) &= \frac{\kappa_j(\p[-j], \tilde{\s[j]}) \kappa_i(\p)}{\kappa_j(\tilde\p) \kappa_i(\p[-j], \tilde{\s[j]})}\\
			&= \frac{\kappa_j(\p[-j,k],\tilde{\s[k]},\tilde{\s[j]}) \kappa_i(\p)}{\kappa_j(\bm{\tilde\pi}) \kappa_i(\p[-j,k],\tilde{\s[k]},\tilde{\s[j]})}\\
			&= \frac{\kappa_j(\p[-j,k],\tilde{\s[k]},\tilde{\s[j]}) \kappa_i(\p) \kappa_k(\p[-j,k],\tilde{\s[k]},\tilde{\s[j]})}{\kappa_j(\tilde\p) \kappa_i(\p[-j,k],\tilde{\s[k]},\tilde{\s[j]})\kappa_k(\p[-j,k],\tilde{\s[k]},\tilde{\s[j]})}\\
			&= \frac{\kappa_j(\tilde\p) \kappa_k(\p[-k],\tilde{\s[k]}) \kappa_i(\p)}{\kappa_j(\tilde\p) \kappa_i(\p[-k],\tilde{\s[k]})\kappa_k(\tilde\p)} = \lambda^{i,k}(\p)
		\end{align*}
Verifying independence of $i$ is very similar.
		\begin{align*}
			\lambda^{i,j}(\p) &= \frac{\kappa_j(\p[-j], \tilde{\s[j]}) \kappa_i(\p)}{\kappa_j(\tilde\p) \kappa_i(\p[-j], \tilde{\s[j]})}\\
			&= \frac{\kappa_j(\tilde{\p[-k,i]},\s[k],\s[i]) \kappa_i(\p)}{\kappa_j(\tilde\p) \kappa_i(\tilde{\p[-k,i]},\s[k],\s[i])}\\
			&= \frac{\kappa_j(\tilde{\p[-k,i]},\s[k],\s[i]) \kappa_i(\p)\kappa_k(\tilde{\p[-k,i]},\s[k],\s[i])}{\kappa_j(\tilde\p) \kappa_i(\tilde{\p[-k,i]},\s[k],\s[i])\kappa_k(\tilde{\p[-k,i]},\s[k],\s[i])}\\
			&= \frac{\kappa_j(\p[-j], \tilde{\s[j]}) \kappa_i(\p)\kappa_k(\p)} {\kappa_j(\tilde\p) \kappa_i(\p)\kappa_k(\p[-j], \tilde{\s[j]})}
			 = \lambda^{k,j}(\p)
		\end{align*}
	\end{step}
	\begin{step}
		We show that $\nu_i(\p)$ is independent of $\p[-i]$ and $j$.
		\begin{align*}
			\nu_i(\s[i]) = \frac{|\kappa_i(\p)|}{|\lambda^{i,j}(\p)|}
			= \frac{|\kappa_j(\tilde\p) \kappa_i(\p[-j], \tilde{\s[j]})|}{|\kappa_j(\p[-j], \tilde{\s[j]})|} = 
			\frac{|\kappa_j(\tilde\p) \kappa_i(\tilde{\p[-i]}, \s[i])|}{|\kappa_j(\tilde{\p[-i]}, \s[i])|},
		\end{align*}
		where we use \Cref{step:ratio} for the last equality.
		The last term is independent of $\p[-i]$ and, by \Cref{step:lambda}, the second term is independent of $j$.
	\end{step}
	Now it is easy to see that 
	\begin{align*}
		\phi(\p) = \sum_{i\in I_l}\kappa_i(\p)\pi_i \equiv \sum_{i\in I_l}\frac{\kappa_i(\p)}{|\lambda^{i,j_i}(\p)|}\pi_i =\sum_{i\in I_l} \sigma_i^{I_l}(\p)\nu_i(\s[i])\pi_i,
	\end{align*}
	where $j_i\in I_l - \{i\}$ for all $i$.
	For the second equality, we used the fact that $\lambda^{i,j}$ is independent of $i$ and $j$.
	
	Since $l$ was arbitrary, we have now defined $\nu_i$ for each $i\in I$.
	However, we have defined $\nu_1$ and $\nu_2$ multiple times, once for each $l\in I-\{1,2\}$.
	So we have to check that these definitions are not conflicting. 
	It follows from \Cref{lem:beliefaffine1} that the ratio between $\nu_1$ and $\nu_2$ is the same for each triple $\{1,2,l\}$.
	Thus, we can define $\nu_1$ and $\nu_2$ as obtained for, say, $l=3$ and scale the triples $(\nu_1,\nu_2,\nu_l)$ obtained for the remaining $l$ appropriately.
	
\begin{step}\label{step:sigmadef}
	Now we define the $\sigma_i$ for the remaining profiles.
	Our strategy will be to first define it for profiles with two concerned agents, and then inductively for all non-degenerate profiles.
	At each point, we maintain that 
	\begin{align*}
		\frac{\sigma^{I_{\p}}_i(\p)}{\sigma^{I_{\p}}_j(\p)} = \frac{\sigma^{\{i,j\}}_i(\p[\sim (I - i,j)])}{\sigma^{\{i,j\}}_j(\p[\sim (I - i,j)])}
	\end{align*}
	which we will refer to as the \emph{ratio condition} on the $\sigma_i$.
	We omit the superscript in expressions like $\sigma_i^{I_{\p}}(\p)$ from now on, since it is clear from the profile.
	
	Let $\p\in\bar{\mathcal R}^I$. 
	If $|I_{\p}| = 2$, say $I_{\p} = \{1,2\}$, then $\phi(\p) = \alpha_1\pi_1 + \alpha_2\pi_2$ for unique $\alpha_1,\alpha_2\in\mathbb R-\{0\}$.
	We define $\sigma_i(\p) = \sgn(\alpha_i)$ for $i\in I_{\p}$.
	
	Now assume that $|I_{\p}|\ge 3$ and that we have defined the $\sigma_i$ such that the ratio condition holds on all profiles with fewer than concerned $|I_{\p}|$ agents.
	Let $i\in I_{\p}$ such that $\pi_i\neq\phi(\p[\sim i])$, which exists by \Cref{lem:belieftriple}.
	We show that there is $s\in\{-1,1\}$ such that $\frac{s}{\sigma_j(\p[\sim i])} = \frac{\sigma_i(\p[\sim I - i,j])}{\sigma_j(\p[\sim I - i,j])}$ for all $j\in I_{\p}-\{i\}$.
	If not, there are $j,k$ which require $s=1$ and $s=-1$ respectively.
	It is not hard to see that then there must be $j,k$ with this property such that $\pi_i$, $\pi_j$, and $\pi_k$ are affinely independent.
	Then we have
	\begin{equation*}
		\makebox[\displaywidth]{$\displaystyle
		\frac{\sigma_j(\p[\sim I - j,k])}{\sigma_k(\p[\sim I - j,k])} = \frac{\sigma_j(\p[\sim i])}{\sigma_k(\p[\sim i])} = -\frac{\sigma_j(\p[\sim I - i,j])}{\sigma_i(\p[\sim I - i,j])} \frac{\sigma_i(\p[\sim I - i,k])}{\sigma_k(\p[\sim I - i,k])}
	=-\frac{\sigma_j(\p[\sim I - i,j,k])}{\sigma_i(\p[\sim I - i,j,k])} \frac{\sigma_i(\p[\sim I - i,j,k])}{\sigma_k(\p[\sim I - i,j,k])} = -\frac{\sigma_j(\p[\sim I - i,j,k])}{\sigma_k(\p[\sim I - i,j,k])}
	$}
	\end{equation*}
	The first equality uses that $\sigma_j^{I_{\p} - \{i\}}, \sigma_k^{I_{\p} - \{i\}}$ satisfy the ratio condition, the second the choice of $j$ and $k$, and the third that $\sigma_i^{\{i,j,k\}},\sigma_j^{\{i,j,k\}},\sigma_k^{\{i,j,k\}}$ satisfy the ratio condition. 
	But this contradicts that $\sigma_i^{\{i,j,k\}},\sigma_j^{\{i,j,k\}},\sigma_k^{\{i,j,k\}}$ satisfy the ratio condition.
	Thus, we can find $s$ as required.

	By \Cref{lem:beliefaffine1} and the choice of $i$, we have that $\phi(\p) = (1-\alpha)\phi(\p[\sim i]) + \alpha\pi_i$ for some unique $\alpha\in\mathbb R-\{1\}$.
	If $\alpha < 1$, we set $\sigma_{i'}^I = \sigma_{i'}^{I-\{i\}}$ for $i'\in I - \{i\}$ and $\sigma_i^I = s$; if $\alpha > 1$, set $\sigma_{i'}^I = -\sigma_{i'}^{I-\{i\}}$ for $i'\in I - \{i\}$ and $\sigma_i^I = -s$.
\end{step}

	We still have to make sure that these definitions of the $\nu_i$ and $\sigma_i$ are consistent with $\phi$.
	For $\p\in\bar{\mathcal R}^I$, let $\bar\phi(\p) \equiv \sum_{i\in I_{\p}} \sigma_i^{I_{\p}}(\p)\nu_i(\s[i])\pi_i$.
	We show by induction over $|I_{\p}|$, starting at $|I_{\p}| = 3$, that $\phi$ and $\bar\phi$ agree on all profiles $\p$.

	We make two observations.
\begin{step}\label{step:beliefsubprofile}
	Let $\p\in\bar{\mathcal R}^I$ with belief dimension 3 such that $\bm\pi_{\sim i}$ has belief dimension 2.
	If $\phi$ and $\bar\phi$ agree on $\p$, then they also agree on $\p[\sim i]$.
	By \Cref{lem:beliefaffine1} and the assumption, we have
	\begin{align*}
		\phi(\p) = (1-\alpha)\phi(\p[\sim i]) + \alpha\pi_i \equiv \sum_{j\in I_{\p}-\{i\}}\sigma_j(\p)\nu_j(\s[j])\pi_j + \sigma_i(\p)\nu_i(\s[i])\pi_i
	\end{align*}
	for some $\alpha\in\mathbb R-\{1\}$.
	Since $\pi_i$ is not in the affine hull of $(\pi_j)_{j\in I_{\p}-\{i\}}$, we have to have $(1-\alpha)\phi(\p[\sim i])\equiv \sum_{j\in I_{\p}-\{i\}}\sigma_j(\p)\nu_j(\s[j])\pi_j$.
	If $\alpha < 1$, then by definition, $\sigma_j(\p[\sim i]) = \sigma_j(\p)$ for all $j\in I_{\p}-\{i\}$, and so $\phi(\p[\sim i]) \equiv \sum_{j\in I_{\p}-\{i\}}\sigma_j(\p[\sim i])\nu_j(\s[j])\pi_j\equiv\bar\phi(\p[\sim i])$.
	If $\alpha > 1$, then $\sigma_j(\p[\sim i]) = -\sigma_j(\p)$ for all $j$, and again $\phi(\p[\sim i]) = \bar\phi(\p[\sim i])$ follows. 
\end{step}

\begin{step}\label{step:beliefextend}
	Let $\p\in\bar{\mathcal R}^I$ and $i,j\in I_{\p}$.
	If $\phi$ and $\bar\phi$ agree on $\p[\sim i]$ and $\p[\sim j]$ and $\bar\phi(\p[\sim i,j]), \pi_i$, and $\pi_j$ are affinely independent, then they also agree on $\p$.
	\Cref{lem:beliefaffine1} implies that
	\begin{align*}
		\phi(\p) = (1-\alpha)\phi(\p[\sim j]) + \alpha \pi_j = (1-\beta)\phi(\p[\sim i]) + \beta \pi_i
	\end{align*}
	for some $\alpha,\beta\in\mathbb R-\{1\}$.
	To make notation less cumbersome, we write $\sigma_k(\p[\sim J]) = \sigma_k^J$ and $\nu_k(\s[k]) = \nu_k$ for $k\in I_{\p}$ and $J\subset I_{\p} - \{i\}$ for the rest of this step.
	Four cases arise, depending on whether $\alpha$ and $\beta$ are greater of smaller than 1.
	\begin{case}
		Assume $\alpha,\beta < 1$.
		By definition, we have that $\sigma_k = \sigma_k^j$ for all $k\in I_{\p}-\{j\}$ and $\sigma_k = \sigma_k^i$ for all $k\in I_{\p}-\{i\}$.
		In particular, $\sigma_k^i = \sigma_k^j$ for $k\in I_{\p} - \{i,j\}$.
		Moreover, either $\sigma_k^{ij} = \sigma_k^i$ for all $k\in I_{\p} - \{i,j\}$ or $\sigma_k^{ij} = -\sigma_k^i$ for all $k$.
		Let $s = 1$ in the former case and $s=-1$ otherwise.
		Then, 
		\begin{align*}
			\phi(\p)\equiv s\underbrace{\left(\sum_{k\in I_{\p}-\{i,j\}}\sigma_k^{ij}\nu_k\pi_k\right)}_{\phi(\p[-i,j])} + \sigma_i^j\nu_i\pi_i + \alpha'\pi_j
			= s\left(\sum_{k\in I_{\p}-\{i,j\}}\sigma_k^{ij}\nu_k\pi_k\right) + \beta'\pi_i + \sigma_j^i\nu_j\pi_j
		\end{align*}
		for some $\alpha',\beta'\in\mathbb R$.
		Affine independence implies that $\alpha' = \sigma_j^i\nu_j = \sigma_j\nu_j$.
		Moreover, $\sigma_i^j = \sigma_i$ and $s\sigma_k^{ij} = \sigma_k^i = \sigma_k$ for $k \in I_{\p} - \{i,j\}$.
		So $\phi(\p) = \sum_{k\in I_{\p}} \sigma_k\nu_k\pi_k$, which concludes this case.
	\end{case}
	\begin{case}
		Assume $\alpha > 1$ and $\beta < 1$.
		By definition, we have that $\sigma_k = -\sigma_k^j$ for all $k\in I_{\p}-\{j\}$ and $\sigma_k = \sigma_k^i$ for all $k\in I_{\p}-\{i\}$.
		In particular, $\sigma_k^i = -\sigma_k^j$ for $k\in I_{\p} - \{i,j\}$.
		Moreover, either $\sigma_k^{ij} = \sigma_k^j$ for all $k\in I_{\p} - \{i,j\}$ or $\sigma_k^{ij} = -\sigma_k^j$ for all $k$.
		Let $s = 1$ in the former case and $s=-1$ otherwise.
		Then, 
		\begin{align*}
			\phi(\p)&\equiv -\phi(\p[\sim j]) + \alpha' \pi_j \equiv -s\sum_{k\in I_{\p}-\{i,j\}}\sigma_k^{ij}\nu_k\pi_k- \sigma_i^j\nu_i\pi_i + \alpha'\pi_j\text{, and}\\
			&\equiv \phi(\p[\sim i]) + \beta' \pi_i \equiv -s\sum_{k\in I_{\p}-\{i,j\}}\sigma_k^{ij}\nu_k\pi_k + \beta'\pi_i + \sigma_j^i\nu_j\pi_j
		\end{align*}
		for some $\alpha',\beta'\in\mathbb R$.
		The second equality in the second line follows from $-s\sigma_k^{ij} = -\sigma_k^j = \sigma_k^i$ for $k\in I_{\p}-\{i,j\}$.
		Affine independence implies that $\alpha' = \sigma_j^i\nu_j = \sigma_j\nu_j$.
		Moreover, $-\sigma_i^j = \sigma_i$ and $-s\sigma_k^{ij} = -\sigma_k^j = \sigma_k$.
		So $\phi(\p) = \sum_{k\in I_{\p}} \sigma_k\nu_k\pi_k$.
	\end{case}
	The remaining two cases are analogous to the two we have examined and therefore omitted.
\end{step}

\begin{step}\label{step:nondegenerate}
We have shown that $\phi$ and $\bar\phi$ agree on profiles with $I_{\p} = \{1,2,l\}$ for all $l$, which we use for the base case $|I_{\p}| = 3$.
Now let $\p\in\bar{\mathcal R}^I$ with $I_{\p} = \{1,i, j\}$ for distinct $i,j\in I-\{1\}$.
Observe that $\phi$ and $\bar\phi$ agree on the profiles $\p[\sim i]$ and $\p[\sim j]$ of $\p$ by \Cref{step:beliefsubprofile}.
Since $\bar\phi(\p[\sim i,j]) = \pi_1, \pi_i$, and $\pi_j$ are affinely independent, \Cref{step:beliefextend} implies that $\phi$ and $\bar\phi$ agree on $\p$.
With a second application of the same argument, we get that $\phi$ and $\bar\phi$ agree on all profiles with $|I_{\p}|=3$.
The preceding argument also takes care of profiles with $|I_{\p}| = 2$.
Profiles with $|I_{\p}| = 1$ are covered by \Cref{lem:beliefaffine2}.

Now we deal with the case $|I_{\p}|\ge 4$.
Since $\p$ is non-degenerate, $\p[\sim i]$ has belief dimension at least 3 for all $i\in I_{\p}$.
	The induction hypothesis implies that $\phi$ and $\bar\phi$ agree on $\p[\sim i]$ and $\p[\sim j]$ for all $i,j\in I_{\p}$.
	The argument in the proof of \Cref{lem:belieftriple} also applies to $\bar\phi$, and so we can choose distinct $i, j\in I_{\p}$ such that $\bar\phi(\p[\sim i, j]), \pi_j$, and $\pi_i$ are affinely independent.
	\Cref{step:beliefextend} implies that $\phi$ and $\bar\phi$ agree on $\p$.
\end{step}	
\end{proof}

\subsection{Aggregation of Utility Functions}\label{sec:tasteaggregation}

It is useful to first clarify the linear algebra on $\bar{\mathcal U}$.
Elements of $\bar{\mathcal U}$ are normalized representatives of a class of utility functions, consisting of all its positive affine transformations.
Thus, we say that $(u_i)_{i\in I}$ are linearly independent if their span does not include any utility function that is equivalent to the $0$ element of $\bar{\mathcal U}$, that is, any constant utility function.

We show that for any profile and any agent, society's utility function is a linear combination of the agent's utility function and that of the society for the profile with the agent completely indifferent.
If the latter two utility functions are not equal or negations of each other, the agent has positive weight in this linear combination.
In \Cref{lem:utilitylinear2}, we leverage this fact to prove that society's utility function is a \emph{positive} linear combination of the agents' utility functions.

\begin{lemma}\label{lem:utilitylinear1}
	Assume that $\Phi$ satisfies restricted monotonicity.
	Let $\p\in\bar{\mathcal R}^I$ and $i \in I_{\p}$.
	Then $\psi(\p) \equiv \alpha\psi(\p[\sim i]) + \beta u_i$ for some $\alpha,\beta\in\mathbb R$.
	Moreover, if $u_i\not\equiv\pm \psi(\p[\sim i])\not\equiv 0$, then $\beta > 0$ and $\alpha,\beta$ are unique up to a common positive factor.
\end{lemma}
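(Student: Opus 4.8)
The plan is to follow the structure of the proof of \Cref{lem:beliefaffine1}, transported to the utility side: the affine-combination part of the conclusion comes from the two-agent case of \citeauthorhloff{GSS04a}'s aggregation theorem, applied to the ``agents'' $\s[\sim i] = \Phi(\p[\sim i])$ and $\s[i]$ and the ``society'' $\s = \Phi(\p)$, while the sign of $\beta$ is extracted from the \emph{strict} part of restricted monotonicity.

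First I would dispose of the complete-indifference cases. If $\Phi(\p)$ is complete indifference then $\psi(\p) = 0 \equiv 0\cdot\psi(\p[\sim i]) + 0\cdot u_i$, settling the first claim; the ``moreover'' hypothesis will be seen to be incompatible with this case, so nothing is lost. If $\s[\sim i]$ is complete indifference then, choosing $\pi_{\sim i} = \pi_i$ as remarked after the definition of restricted monotonicity, $\s = \s[i]$, hence $\psi(\p) \equiv u_i \equiv 0\cdot\psi(\p[\sim i]) + 1\cdot u_i$; this case too is excluded by the ``moreover'' hypothesis, which requires $\psi(\p[\sim i])\not\equiv 0$. So assume henceforth that neither $\Phi(\p)$ nor $\s[\sim i]$ is complete indifference; since $i\in I_{\p}$, neither is $\s[i]$. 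Then $\s[\sim i],\s[i],\s$ maximize subjective expected utility with non-atomic beliefs $\phi(\p[\sim i]),\pi_i,\phi(\p)\in\Pi$. Applying restricted monotonicity once with $f$ and once with $g$ in the favoured position shows that $f\sim g$ whenever $f\sim_{\sim i}g$, $f\sim_i g$, $f_*\phi(\p[\sim i]) = f_*\pi_i$ and $g_*\phi(\p[\sim i]) = g_*\pi_i$, i.e.\ \citeauthorhloff{GSS04a}'s restricted Pareto indifference condition for the pair $\s[\sim i],\s[i]$. Hence the two-agent case of Theorem~1 of \citet{GSS04a} gives both $\phi(\p) = (1-\gamma)\phi(\p[\sim i]) + \gamma\pi_i$ for some $\gamma\in\mathbb R$ and $\psi(\p) \equiv \alpha\psi(\p[\sim i]) + \beta u_i$ for some $\alpha,\beta\in\mathbb R$, which is the first assertion.

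For the ``moreover'' clause, assume $u_i\not\equiv\pm\psi(\p[\sim i])\not\equiv 0$; together with $u_i\not\equiv 0$ (which holds since $i\in I_{\p}$) this means $\psi(\p[\sim i])$ and $u_i$ are linearly independent in $\bar{\mathcal U}$, so the convex hull of $\{(\psi(\p[\sim i])(x),u_i(x))\colon x\in O\}\subset\mathbb R^2$ has nonempty interior. I would then pass to a ``decisions under risk'' subdomain: a standard argument using Lyapunov's theorem \citep{Liap40a} produces a sub-sigma-algebra $\mathcal E'\subset\mathcal E$ on which $\phi(\p[\sim i])$ and $\pi_i$ restrict to one and the same non-atomic measure $\mu$ (take $\mathcal E' = \mathcal E$ if the two beliefs already coincide). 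Restricting $\phi(\p) = (1-\gamma)\phi(\p[\sim i]) + \gamma\pi_i$ to $\mathcal E'$ gives $\phi(\p)|_{\mathcal E'} = \mu$, so every $\mathcal E'$-measurable act is seen as the lottery $h_*\mu$ by each of $\s[\sim i],\s[i],\s$. Because $\mu$ is non-atomic, every finitely supported lottery on $O$ equals $h_*\mu$ for some $h\in\mathcal A(\mathcal E',O)$; combined with the nonempty-interior fact, this yields $f,g\in\mathcal A(\mathcal E',O)$ with $\ev_{\s[\sim i]}(f) = \ev_{\s[\sim i]}(g)$ and $\ev_{\s[i]}(f) > \ev_{\s[i]}(g)$, i.e.\ $f\sim_{\sim i}g$ and $f\succ_i g$, with the distributional hypotheses of restricted monotonicity holding. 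Its strict part then gives $f\succ g$; in particular $\Phi(\p)$ is not complete indifference, which retroactively justifies the running assumption. Writing $\psi(\p) = c(\alpha\psi(\p[\sim i]) + \beta u_i) + d$ with $c>0$ and using $\phi(\p)|_{\mathcal E'} = \mu$ to compute $\ev_{\s}$ on $f$ and $g$, the $\psi(\p[\sim i])$-terms cancel because $\ev_{\s[\sim i]}(f) = \ev_{\s[\sim i]}(g)$, leaving $c\beta\bigl(\ev_{\s[i]}(f) - \ev_{\s[i]}(g)\bigr) > 0$, so $\beta > 0$. Finally, linear independence of $\psi(\p[\sim i])$ and $u_i$ makes the coefficients in $\psi(\p) = a\,\psi(\p[\sim i]) + b\,u_i + (\text{const})$ unique, and the freedom in ``$\equiv$'' is exactly a positive rescaling of $(a,b)$, giving uniqueness of $(\alpha,\beta)$ up to a common positive factor.

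I expect the only genuinely technical point to be the passage to the risk subdomain: one must build a sub-sigma-algebra on which the two relevant beliefs honestly coincide (not merely assign equal mass to a single event) and remain non-atomic, and one must know that society's belief restricts to the same $\mu$ there before a common lottery picture can be read off --- which is why it pays to extract the belief half of the conclusion from Theorem~1 of \citet{GSS04a} first. The complete-indifference case split and the two-dimensional convexity argument producing $f$ and $g$ are then routine.
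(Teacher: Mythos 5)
Your proof is correct and follows essentially the same route as the paper: both derive the linear-combination form from the two-agent case of Theorem~1 of \citet{GSS04a} via the indifference part of restricted monotonicity, and both obtain $\beta>0$ by using Lyapunov's theorem to realize a pair of lotteries $p,q$ (with $\psi(\p[\sim i])(p)=\psi(\p[\sim i])(q)$ and $u_i(p)>u_i(q)$) as acts seen identically by $\phi(\p[\sim i])$, $\pi_i$, and hence $\phi(\p)$, then invoking the strict part of restricted monotonicity. The only (immaterial) difference is that you first build a common sub-sigma-algebra on which the two beliefs coincide (essentially the paper's \Cref{lem:subalgebra}), whereas the paper applies \citet{Liap40a} directly to construct acts inducing the given finitely supported distributions under both beliefs at once.
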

 
\begin{proof}
	Let $\s = \Phi(\p)$, $\s[\sim i] = \Phi(\p[\sim i])$, and $\pi_{\sim i} = \phi(\p[\sim i])$.
	We start in the same way as for \Cref{lem:beliefaffine1}.
	Restricted monotonicity implies that $f\sim g$ whenever $f\sim_{\sim i} g$, $f\sim_i g$, $f_*\pi_{\sim i} = f_*\pi_i$, and $g_*\pi_{\sim i} = g_*\pi_i$.
	Thus, it follows from Theorem 1 of \citet[][]{GSS04a} that $\psi(\p) \equiv \alpha\psi(\p[\sim i]) + \beta u_i$ for some $\alpha,\beta\in\mathbb R$.
	
	If $u_i\not\equiv\pm \psi(\p[\sim i]) \not\equiv 0$, then $\alpha,\beta$ are unique up to a common positive factor.
	Moreover, we can find probability distributions $p$ and $q$ on $O$ with finite support such that $\psi(\p[\sim i])(p) = \psi(\p[\sim i])(q)$ and $u_i(p) > u_i(q)$.
	A theorem of \citet{Liap40a} allows us to construct acts $f$ and $g$ that induce the distributions $p$ and $q$ under $\pi_{\sim i}$ and $\pi_i$, respectively.
	That is, $p = f_*\pi_{\sim i} = f_*\pi_i$ and $q = g_*\pi_{\sim i} = g_*\pi_i$.
	Thus, $f\sim_{\sim i} g$ and $f\succ_ig$.
	The strict part of restricted monotonicity then implies $f\succ g$.
	From \Cref{lem:beliefaffine1}, we know that $\phi(\p)$ is an affine combination of $\pi_{\sim i}$ and $\pi_i$, and so $f_*\phi(\p) = p$ and $g_*\phi(\p) = q$.
	It follows that $f\succ g$ if and only if $\psi(\p)(p) > \psi(\p)(q)$.
	Thus, $\beta > 0$.
\end{proof}

\begin{lemma}\label{lem:utilitylinear2}
	Assume that $\Phi$ satisfies restricted monotonicity and faithfulness.
	Then for every $\p\in\bar{\mathcal R}^I$, $\psi(\p) \equiv \sum_{i\in I_{\p}}w_iu_i$ for some $\bm w\in\mathbb R^{I_{\p}}$.
	If $(u_i)_{i\in I_{\p}}$ are linearly independent, $\bm w\in\mathbb R^{I_{\p}}_{++}$ and $\bm w$ is unique up to a positive factor.
\end{lemma}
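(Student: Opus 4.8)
The plan is to follow the template of the proof of \Cref{lem:beliefaffine2}, with \Cref{lem:utilitylinear1} playing the role of \Cref{lem:beliefaffine1}. For the first claim, faithfulness gives that $\Phi(\p[\sim I_{\p}])$ is complete indifference, so $\psi(\p[\sim I_{\p}]) \equiv 0$. Now switch the agents of $I_{\p}$ on one at a time, each time replacing complete indifference by $\s[i]$, and apply the existence part of \Cref{lem:utilitylinear1} at each step: the utility function of society is turned into $\alpha$ times its previous value plus $\beta u_i$, so inductively it stays a linear combination of the $u_i$ already switched on (the additive constant from $\equiv$ is absorbed). Hence $\psi(\p) \equiv \sum_{i \in I_{\p}} w_i u_i$ for some $\bm w \in \mathbb R^{I_{\p}}$. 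If $(u_i)_{i\in I_{\p}}$ are linearly independent in the sense of the paragraph preceding the lemma (no nontrivial combination is a constant function), this representation is unique up to a positive factor, which settles the uniqueness claim.

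It remains to show $\bm w \in \mathbb R_{++}^{I_{\p}}$ when $(u_i)_{i\in I_{\p}}$ are linearly independent, which I would prove by induction on $|I_{\p}|$. For $|I_{\p}| = 1$, say $I_{\p} = \{i\}$, faithfulness makes $\Phi(\p[\sim i])$ complete indifference, and then the remark following the definition of restricted monotonicity (choose $\pi_{\sim i} = \pi_i$) forces $\Phi(\p) = \s[i]$, so $\psi(\p) \equiv u_i$ with a positive weight. For $|I_{\p}| \ge 2$, pick distinct $i, j \in I_{\p}$. By the induction hypothesis applied to $\p[\sim i]$, whose concerned set $I_{\p} - \{i\}$ carries a linearly independent subfamily of utility functions, $\psi(\p[\sim i]) \equiv \sum_{k \in I_{\p} - \{i\}} w'_k u_k$ with $\bm w' \in \mathbb R_{++}^{I_{\p}-\{i\}}$; in particular $\psi(\p[\sim i]) \not\equiv 0$, and $u_i \not\equiv \pm\psi(\p[\sim i])$, since otherwise a nontrivial combination of $(u_k)_{k\in I_{\p}}$ would be constant. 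Thus the ``moreover'' of \Cref{lem:utilitylinear1} applies and gives $\psi(\p) \equiv \alpha\psi(\p[\sim i]) + \beta u_i$ with $\beta > 0$; comparing this with the representation $\psi(\p) \equiv \sum_k w_k u_k$, which is unique up to a positive factor, shows that $w_i$ has the sign of $\beta$ (hence is positive) and every $w_k$ with $k \neq i$ has the sign of $\alpha$. Running the identical argument with $j$ in place of $i$ shows that $w_j$ is positive; since $w_j$ also has the sign of $\alpha$, we get $\alpha > 0$, and therefore all entries of $\bm w$ are positive.

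I expect the only delicate points to be bookkeeping: verifying that the hypothesis $u_i \not\equiv \pm\psi(\p[\sim i]) \not\equiv 0$ of \Cref{lem:utilitylinear1} genuinely follows from linear independence of $(u_i)_{i\in I_{\p}}$ together with the inductive positivity of $\bm w'$, and tracking the positive rescalings implicit in $\equiv$ so that the two decompositions of $\psi(\p)$ (through $i$ and through $j$) can be compared coordinate by coordinate. The base case is worth spelling out explicitly, since it is exactly what rules out society being completely indifferent on a profile with a single concerned agent and thus anchors the induction.
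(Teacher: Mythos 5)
Your proposal is correct and follows essentially the same route as the paper: existence via iterating \Cref{lem:utilitylinear1} from the completely indifferent profile (faithfulness), and positivity by combining the ``moreover'' part of \Cref{lem:utilitylinear1} with uniqueness of the coefficients under linear independence. Your explicit induction on $|I_{\p}|$ merely spells out what the paper leaves implicit when it asserts $u_i\not\equiv\pm\psi(\p[\sim i])\not\equiv 0$, namely that the positivity conclusion for the smaller profile $\p[\sim i]$ is what rules out $\psi(\p[\sim i])\equiv 0$.
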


\begin{proof}
	The first part is a straightforward corollary of \Cref{lem:utilitylinear1}. 
	For the second part, assume that $(u_i)_{i\in I_{\p}}$ are linearly independent.
	Let $\bm w\in\mathbb R^{I_{\p}}$ such that $\psi(\p) \equiv \sum_{i\in I_{\p}}  w_iu_i$.
	Linear independence implies that $\bm w$ is unique up to a positive factor and $u_i\not\equiv\pm\psi(\p[\sim i]) \not\equiv 0$ for all $i\in I_{\p}$.
	Thus, \Cref{lem:utilitylinear1} implies that $\psi(\p)\equiv \alpha\psi(\p[\sim i]) + \beta u_i$ for $\alpha\in\mathbb R$ and $\beta > 0$.
	Since $\psi(\p[\sim i])$ is a linear combination of $(u_j)_{j\in I_{\p} - \{i\}}$ and $\bm w$ is unique up to a positive factor, it follows that $w_i > 0$.
\end{proof}

The next lemma is the analogue of \Cref{lem:belieftriple}.
Its proof is similar and therefore omitted.

\begin{lemma}\label{lem:utilitytriple}
	Assume that $\Phi$ satisfies restricted monotonicity.
	Let $\p\in\bar{\mathcal R}^I$ such that $\Phi(\p)$ is not complete indifference.
	If $\p$ has utility dimension at least 3, there are distinct $i, j\in I_{\p}$ such that $\psi(\p[\sim i, j]), u_i$, and $u_j$ are linearly independent.
\end{lemma}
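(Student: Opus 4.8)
The plan is to run the proof of \Cref{lem:belieftriple} in the utility setting via the dictionary: beliefs $\pi_i$ become utility functions $u_i$, affine independence becomes linear independence in the sense fixed above (no nontrivial linear combination is constant), the affine hull of a set of beliefs becomes the linear span of a set of utility functions, and \Cref{lem:beliefaffine1} is replaced by the existence part of \Cref{lem:utilitylinear1} (the assertion $\psi(\p)\equiv\alpha\psi(\p[\sim i])+\beta u_i$ for some reals, which holds for every $i\in I_{\p}$). Everything is read modulo $\equiv$: ``$v$ lies in the span of $\{u_i,u_j\}$'' means $v\equiv\alpha u_i+\beta u_j$ for some $\alpha,\beta\in\mathbb R$, and linear (in)dependence is unaffected by passing to $\equiv$-classes or negating a utility function, so the underlying linear algebra takes place in the quotient of the bounded Borel functions on $O$ by the constants.

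Since $\p$ has utility dimension at least $3$, there are three agents---say $1,2,3$, each necessarily concerned since an unconcerned agent has the constant-$0$ utility function---with $u_1,u_2,u_3$ linearly independent. The first step is to observe that $\psi(\p)$ cannot lie in the span of all three of $\{u_1,u_2\}$, $\{u_1,u_3\}$, $\{u_2,u_3\}$: if it lay in the first two, then, these being distinct two-dimensional subspaces of $\spa\{u_1,u_2,u_3\}$ (distinct because $u_2\notin\spa\{u_1,u_3\}$), $\psi(\p)$ would lie in their intersection $\spa\{u_1\}$; as $\Phi(\p)$ is not complete indifference, $\psi(\p)\not\equiv 0$, hence $\psi(\p)\equiv\pm u_1$, and therefore $\psi(\p)\notin\spa\{u_2,u_3\}$. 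So, after relabelling $\{1,2,3\}$ if necessary, I may assume $\psi(\p)\notin\spa\{u_1,u_2\}$. The second step transfers this to $\psi(\p[\sim 1,2])$: applying the existence part of \Cref{lem:utilitylinear1} to agent $1$ in $\p$ and then to agent $2$ in $\p[\sim 1]$ yields $\psi(\p)\equiv\alpha\psi(\p[\sim 1])+\beta u_1$ and $\psi(\p[\sim 1])\equiv\gamma\psi(\p[\sim 1,2])+\delta u_2$ for some reals, so $\psi(\p)\in\spa\{\psi(\p[\sim 1,2]),u_1,u_2\}$. If $\psi(\p[\sim 1,2])$ were in $\spa\{u_1,u_2\}$, then $\psi(\p)$ would be too, contradicting the first step; hence $\psi(\p[\sim 1,2])\notin\spa\{u_1,u_2\}$, and since $u_1,u_2$ are linearly independent, so are $\psi(\p[\sim 1,2]),u_1,u_2$. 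This proves the claim with $i=1$ and $j=2$; as a by-product $\psi(\p[\sim 1,2])\not\equiv 0$, so $\Phi(\p[\sim 1,2])$ is automatically not complete indifference, which is why the step is legitimate although the hypothesis does not assume it.

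I expect the only point requiring care---and the reason the analogy with \Cref{lem:belieftriple} is not entirely mechanical---to be the bookkeeping of $\equiv$: since it permits positive rescaling and addition of constants, the substitutions above must be performed at the level of $\equiv$-classes, and the ``$\pm$'' in \Cref{lem:utilitylinear1} (reflecting that a utility function and its negation encode different preferences) has to be carried along when deducing $\psi(\p)\equiv\pm u_1$. None of this is substantive, since linear independence is insensitive to sign and to the choice of representative; once the dictionary is in place the argument of \Cref{lem:belieftriple} goes through verbatim, which is why the paper regards it as routine.
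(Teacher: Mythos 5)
Your proof is correct and is exactly the argument the paper has in mind: the paper omits the proof of this lemma, stating only that it is similar to that of \Cref{lem:belieftriple}, and your translation (affine hulls of beliefs to spans of utility classes modulo constants, \Cref{lem:beliefaffine1} to the existence part of \Cref{lem:utilitylinear1}) is precisely that similarity carried out, including the correct handling of $\equiv$ and the observation that $\psi(\p[\sim 1,2])\not\equiv 0$ comes for free.
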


In general, the $w_i$ in \Cref{lem:utilitylinear2} may depend on $\p$.
The content of the next lemma is that $ w_i$ must not depend on $\p[\sim i]$ for non-degenerate profiles.

\begin{lemma}\label{lem:utilityindependence}
	Assume that $\Phi$ satisfies restricted monotonicity and faithfulness.
	Then there are $\omega_i\colon \mathcal R\rightarrow \mathbb R_{++}$ for $i\in I$ such that $\psi(\p) \equiv \sum_{i\in I_{\p}} \omega_i(\s[i])u_i$ for all $\p\in\bar{\mathcal R}^I$ with non-degenerate utilities.
\end{lemma}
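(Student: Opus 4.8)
The plan is to follow the proof of \Cref{lem:beliefindependence} essentially verbatim, replacing beliefs by utility functions and ``affinely independent'' by ``linearly independent''. The utility case is strictly easier, because by \Cref{lem:utilitylinear2} every weight that appears below is strictly positive; the sign functions $\sigma_i^J$ of the belief proof therefore play no role at all.

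Fix two agents $1,2\in I$. For $l\in I-\{1,2\}$, let $\bar{\mathcal R}_l$ be the set of non-degenerate profiles $\p$ with $I_{\p}=\{1,2,l\}$, and let $\kappa_i\colon\bar{\mathcal R}_l\to\mathbb R_{++}$ for $i\in\{1,2,l\}$ be the unique functions with $\psi(\p)\equiv\sum_i\kappa_i(\p)u_i$ (\Cref{lem:utilitylinear2}). As in \Cref{step:ratio}, for $k\in\{1,2,l\}$ and $\{i,j\}=\{1,2,l\}-\{k\}$ the ratio $\kappa_i(\p)/\kappa_j(\p)$ is independent of $\s[k]$: \Cref{lem:utilitylinear1} gives $\psi(\p)\equiv\alpha\psi(\p[\sim k])+\beta u_k$, linear independence of $u_1,u_2,u_l$ forces $\psi(\p[\sim k])\equiv\kappa_i(\p)u_i+\kappa_j(\p)u_j$, and $\psi(\p[\sim k])$ depends only on $\p[-k]$, so uniqueness up to positive scaling (\Cref{lem:utilitylinear2}) gives the claim. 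Fixing a reference profile $\tilde\p\in\bar{\mathcal R}_l$ generic enough that the substitutions below keep profiles non-degenerate (possible since $|O|\ge4$) and setting
\[
\lambda^{i,j}(\p)=\frac{\kappa_j(\p[-j],\tilde{\s[j]})\,\kappa_i(\p)}{\kappa_j(\tilde\p)\,\kappa_i(\p[-j],\tilde{\s[j]})},\qquad \omega_i(\s[i])=\frac{\kappa_i(\p)}{\lambda^{i,j}(\p)},
\]
the same cancellations as in \Cref{step:lambda}---now without absolute values, since the $\kappa_i$ are positive---show that $\lambda^{i,j}(\p)$ is independent of $i$ and $j$. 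Hence $\omega_i$ is a well-defined function of $\s[i]$ alone, and $\psi(\p)\equiv\sum_i\kappa_i(\p)u_i\equiv\sum_i\omega_i(\s[i])u_i$ on $\bar{\mathcal R}_l$. As in the belief proof, \Cref{lem:utilitylinear1} shows the ratio $\omega_1:\omega_2$ is the same for every triple $\{1,2,l\}$, so rescaling the triples $(\omega_1,\omega_2,\omega_l)$ yields a single family $\omega_i\colon\mathcal R\to\mathbb R_{++}$, $i\in I$.

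It remains to verify $\psi(\p)\equiv\bar\psi(\p):=\sum_{i\in I_{\p}}\omega_i(\s[i])u_i$ for every non-degenerate $\p$, which I would do by induction on $|I_{\p}|$, running the same bootstrap as in \Cref{step:beliefsubprofile} through \Cref{step:nondegenerate}: singletons are immediate from \Cref{lem:utilitylinear2}; from the base triples $\{1,2,l\}$ one descends to the pairs they contain, then builds up all triples, and finally all profiles with $|I_{\p}|\ge4$ using the utility-space analogue of \Cref{lem:belieftriple} (that is, \Cref{lem:utilitytriple}, whose proof applies verbatim to $\bar\psi$). The one recurring computation is the extension step (analogue of \Cref{step:beliefextend}): if $\psi$ and $\bar\psi$ agree on $\p[\sim i]$ and $\p[\sim j]$ and $\bar\psi(\p[\sim i,j]),u_i,u_j$ are linearly independent, then \Cref{lem:utilitylinear1} gives two representations $\psi(\p)\equiv\alpha\psi(\p[\sim i])+\beta u_i\equiv\gamma\psi(\p[\sim j])+\delta u_j$; rewriting each side as a combination of $\bar\psi(\p[\sim i,j]),u_i,u_j$ and invoking uniqueness of the positive representation (\Cref{lem:utilitylinear2}) makes the two combinations coincide, which rearranges to $\psi(\p)\equiv\bar\psi(\p)$ (the leading coefficient is nonzero by the strict part of \Cref{lem:utilitylinear1} together with linear independence of $u_i$ and $u_j$).

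All the computations are routine; the genuine difficulty of the belief proof---negative weights and the sign functions $\sigma_i$---is absent here. The only places that still need care are the bookkeeping behind the definitions of $\lambda^{i,j}$ and $\omega_i$ and making sure that every reference profile one substitutes into remains non-degenerate, i.e., the main obstacle is purely organizational rather than conceptual.
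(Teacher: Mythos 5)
Your overall strategy is the paper's: construct the $\omega_i$ on the triples $\{1,2,l\}$ exactly as the $\nu_i$ are constructed in \Cref{lem:beliefindependence} (with \Cref{lem:utilitylinear2} supplying positivity, so that the sign functions disappear), splice the triples together by matching the $\omega_1:\omega_2$ ratios, and then propagate agreement of $\psi$ with $\bar\psi(\p)\equiv\sum_{i\in I_{\p}}\omega_i(\s[i])u_i$ through the subprofile/extension bootstrap. All of that matches the paper's proof and is correct as far as it goes.

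The gap is in your claim that the utility case is ``strictly easier'' and that the induction for $|I_{\p}|\ge 4$ goes through by running the analogue of \Cref{lem:belieftriple} and the extension step. There is one degeneracy that occurs for utilities but not for beliefs: a positive combination of $[0,1]$-normalized utility functions can be constant, i.e.\ $\bar\psi(\p)\equiv 0$, whereas an affine combination of probability measures (coefficients summing to one) is never the zero measure. Since non-degeneracy only constrains pairs and triples, for $|I_{\p}|\ge 4$ it can happen that $\sum_{i\in I_{\p}}\omega_i(\s[i])u_i\equiv 0$ even though every pair and triple of the $u_i$ is linearly independent. At such a profile your bootstrap cannot start: $\bar\psi(\p[\sim i,j])\equiv -\omega_i u_i-\omega_j u_j$ lies in the span of $\{u_i,u_j\}$ for \emph{every} pair $i,j$, so the hypothesis of the extension step (\Cref{step:utilityextend}) is never met, and \Cref{lem:utilitytriple} is inapplicable whether you apply it to $\psi$ (it requires $\Phi(\p)$ not to be complete indifference) or to $\bar\psi$ (the zero element lies in every span, so the three-pairs argument collapses). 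The paper therefore splits the inductive step $|I_{\p}|\ge 4$ into two cases and treats $\psi(\p)=0$ separately: there, \Cref{lem:utilitylinear1} combined with the induction hypothesis forces $\bar\psi(\p)$ to be simultaneously $\equiv\pm u_i$ and $\equiv\pm u_j$ for arbitrary $i,j\in I_{\p}$ unless $\bar\psi(\p)\equiv 0$, and $u_i\not\equiv\pm u_j$ then yields $\bar\psi(\p)\equiv 0=\psi(\p)$; only in the complementary case $\psi(\p)\neq 0$ does your extension argument apply. (Relatedly, the descent step needs the extra hypothesis $\bar\psi(\p[\sim i])\not\equiv 0$, though this is automatic for the non-degenerate pairs and triples where you invoke it.) So the missing ingredient is not merely organizational: you need a separate argument for completely indifferent societies at profiles with four or more concerned agents.
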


\begin{proof}
	\setcounter{step}{0}
	\setcounter{case}{0}
	We assume throughout the proof that all considered profiles have non-degenerate utilities.
	The first part is very similar to the construction of the $\nu_i$ in the proof of \Cref{lem:beliefindependence}.
	For $l\in I -\{1,2\}$, let $I_l = \{1,2,l\}$ and $\bar{\mathcal R}_l$ be the set of all profiles $\p\in\bar{\mathcal R}^I$ such that $I_{\p} = I_l$.
	Let $l\in I-\{1,2\}$ be arbitrary and fix some $\tilde{\p}\in\bar{\mathcal R}_l$.
	By \Cref{lem:utilitylinear2}, there are functions $\omega_i\colon\bar{\mathcal R}_l\rightarrow\mathbb R_{++}$ such that $\psi(\p) \equiv \sum_{i\in I_l} \omega_i(\p)u_i$ for all $\p\in\bar{\mathcal R}_l$ and the $\omega_i(\p)$ is unique up to a common positive factor.
	For $i,j\in I_l$ and $\p\in\bar{\mathcal R}_l$, let 
	\begin{align*}
		\lambda^{i,j}(\p) = \frac{\omega_j(\p[-j],\tilde{\s[j]}) \omega_i(\p)}{\omega_j(\tilde\p) \omega_i(\p[-j],\tilde{\s[j]})}\quad\text{ and }\quad	\omega_i(\s[i]) = \frac{\omega_i(\p)}{\lambda^{i, j}(\p)}
	\end{align*}
	The fact that the $\omega_i$ and $\omega_j$ map to $\mathbb R_{++}$ ensures that $\lambda^{i,j}$ is well-defined and positive.
	Note that the projection of $\bar{\mathcal R}_l$ to $\mathcal R$ that returns the preferences of $i$ is onto, and so $\omega_i$ is a function on all of $\mathcal R$.
	With the same arguments as in the proof of \Cref{lem:utilityindependence}, we can show that $\frac{\omega_i(\p)}{\omega_j(\p)}$ is independent of $\s[k]$ for $k\in I_l-\{i,j\}$, that $\lambda^{i,j}$ is independent of $i$ and $j$, and that $\omega_i$ is well-defined.
	Then we have
	\begin{align*}
		\psi(\p) \equiv \sum_{i\in I_l} \omega_i(\p) u_i \equiv \sum_{i\in I_l} \frac{\omega_i(\p)}{\lambda^{i,j_i}(\p)} u_i = \sum_{i\in I_l} \omega_i(\s[i]) u_i,
	\end{align*}
	where $j_i\in I_l - \{i\}$ for all $i$.
	
	Since $l$ was arbitrary, we have now defined $\omega_i$ for each $i\in I$.
	However, we have defined $\omega_1$ and $\omega_2$ multiple times, once for each $l\in I-\{1,2\}$.
	So we have to check that these definitions are not conflicting. 
	It follows from \Cref{lem:beliefaffine1} that the ratio between $\omega_1$ and $\omega_2$ is the same for each triple $\{1,2,l\}$.
	Thus, we can define $\omega_1$ and $\omega_2$ as obtained for, say, $l=3$ and scale the triples $(\omega_1,\omega_2,\omega_l)$ obtained for the remaining $l$ appropriately.
	
	The $\omega_i$ define a function that returns a utility function of society for every profile.
	For $\p\in\bar{\mathcal R}^I$, let $\bar\psi(\p) \equiv \sum_{i\in I_{\p}} \omega_i(\s[i])u_i$.
	We make two observations.
	
\begin{step}\label{step:utilitysubprofile}
	Let $\p$ be a profile of utility dimension 3 such that $\p[\sim i]$ has utility dimension 2.
	If $\psi$ and $\bar\psi$ agree on $\p$ and $\bar\psi(\p[\sim i]) \not\equiv 0$, then they also agree on $\p[\sim i]$.
	By \Cref{lem:utilitylinear1} and the assumption, we have
	\begin{align*}
		\psi(\p) \equiv \alpha\psi(\p[\sim i]) + \beta u_i \equiv \sum_{j\in I_{\p}-\{i\}}\omega_j(\s[j])u_j +  \omega_i(\s[i])u_i
	\end{align*}
	for some $\alpha,\beta\in\mathbb R$.
	Note that $\bar\psi(\p[\sim i]) \equiv \sum_{j\in I_{\p}-\{i\}}\omega_j(\s[j])u_j\not\equiv 0$ implies that $\alpha \neq 0$.
	Since $u_i$ is not in the span of $(u_j)_{j\in I_{\p}-\{i\}}$, we then get $\psi(\p[\sim i]) \equiv \bar\psi(\p[\sim i])$.
\end{step}

\begin{step}\label{step:utilityextend}
	Let $\p$ be a profile and $i,j\in I_{\p}$.
	If $\bar\psi(\p[\sim i,j]), u_i$, and $u_j$ are linearly independent and $\psi$ and $\bar\psi$ agree on $\p[\sim i]$ and $\p[\sim j]$, then they also agree on $\p$.
	
	\Cref{lem:utilitylinear1} and \Cref{lem:utilitylinear2} imply that
	\begin{align*}
		\psi(\p) &\equiv \alpha\psi(\p[\sim j]) + \beta u_j \equiv \overbrace{\alpha\sum_{k\in I_{\p}-\{i,j\}} \omega_k(\s[k])u_k + \alpha\omega_i(\s[i])u_i}^{\equiv\bar\psi(\p[\sim j])} + \gamma u_j\text{, and}\\
		\psi(\p) &\equiv \alpha\psi(\p[\sim i]) + \beta' u_i \equiv \underbrace{\alpha\sum_{k\in I_{\p}-\{i,j\}} \omega_k(\s[k])u_k}_{\equiv\bar\psi(\p[\sim i,j])} + \gamma' u_i + \alpha\omega_j(\s[j])u_j
	\end{align*}
	for some $\alpha,\beta,\beta',\gamma,\gamma'\in\mathbb R_{++}$.
	Linear independence of $\bar\psi(\p[\sim i,j]), u_i$, and $u_j$ implies that $\gamma = \alpha\omega_j(\s[j])$, and so $\psi$ and $\bar\psi$ agree on $\p$.
\end{step}
	
	Now we can finish the proof.
	We show by induction over $|I_{\p}|$, starting at $|I_{\p}| = 3$, that $\psi$ and $\bar\psi$ agree on all profiles with non-degenerate utilities.
	
	The base case is $|I_{\p}| = 3$.
	Let $\p\in\bar{\mathcal R}^I$.
	First assume that $I_{\p} = \{1,i,j\}$ for distinct $i,j\in I-\{1\}$.
	We have shown that $\psi$ and $\bar\psi$ agree on profiles $\p'$ with $I_{\p'} = \{1,2,l\}$ for any $l$.
	Thus \Cref{step:utilitysubprofile} implies that they agree on $\p[\sim i]$ and $\p[\sim j]$.
	Moreover, $\bar\psi(\p[\sim i,j]) = u_1, u_i$, and $u_j$ are linearly independent.
	So \Cref{step:utilityextend} implies that $\psi$ and $\bar\psi$ agree on $\p$.
	A second iteration of the same argument implies that they agree on any profile with three concerned agents.
	The preceding argument also takes care of the case $|I_{\p}| = 2$.
	Profiles with $|I_{\p}| = 1$ are covered by \Cref{lem:utilitylinear2}.
	
Now we deal with the case $|I_{\p}|\ge 4$.
\begin{case}
	Suppose $\psi(\p) = 0$.
	Assume for contradiction that $\bar\psi(\p) \not\equiv 0$.
	Let $i,j\in I_{\p}$.
	\Cref{lem:utilitylinear1} implies that $\psi(\p[\sim i])\equiv\pm u_i$.
	The induction hypothesis implies $\psi(\p[\sim i]) \equiv \bar\psi(\p[\sim i])$.
	Hence, $\bar\psi(\p)\equiv \bar\psi(\p[\sim i]) + \alpha u_i \equiv\pm u_i$.
	Similarly, we get $\bar\psi(\p)\equiv \pm u_j$.
	This contradicts $u_i\not\equiv\pm u_j$, which holds since $\p$ has non-degenerate utilities.
\end{case}

\begin{case}\label{case:utilitynondegenerate}
	Suppose $\psi(\p) \not= 0$.
	By \Cref{lem:utilitytriple} we can choose distinct $i, j\in I$ such that $\psi(\p[\sim i, j]), u_i$, and $u_j$ are linearly independent.
	The induction hypothesis implies that $\psi$ and $\bar\psi$ agree on $\p[\sim i,j]$, $\p[\sim i]$, and $\p[\sim j]$.	
	So \Cref{step:utilityextend} implies that $\psi$ and $\bar\psi$ agree on $\p$.
\end{case}
\end{proof}

\section{Implications of Continuity}\label{sec:continuity}

Recall that the topology on $\mathcal R$ is induced by the uniform metric $\sup\{|\ev_{\s}(f) - \ev_{\s'}(f)|\colon f \in\mathcal A\}$.
The topology on $\bar{\mathcal R}$ consists of the open sets in the topology on $\mathcal R$ plus the set $\bar{\mathcal R}$.
We define topologies on $\Pi$ and $\bar{\mathcal U}$ analogously.
For $\pi,\pi'\in\Pi$, the uniform metric $\sup\{|\pi(E) - \pi'(E)|\colon E\in\mathcal E\}$ gives a topology on $\pi$.
For $u,u'\in\mathcal U$ (note the absence of constant utility function 0), we also use the uniform metric $\sup\{|u(x) - u'(x)|\colon x\in O\}$.
The topology on $\bar{\mathcal U}$ is that of $\mathcal U$ plus the entire set $\bar{\mathcal U}$. 
So the only neighborhood of the utility function that is constantly 0 is the set $\bar{\mathcal U}$ itself.
This is the topology $\bar{\mathcal U}$ inherits from the space of all utility functions equipped with the uniform metric when taking the quotient by positive affine transformations.

These topologies make the mappings from preference relations to beliefs and utility functions continuous.
Likewise, the inverse operation, mapping a pair of belief and utility function to a preference relation is continuous.\footnote{Lemma 5 of \citet{Diet19a} is the equivalent of this statement in the framework of \citet{AnAu63a}.}
To ease the notation in the proof of the next lemma, when $E$ is an event and $x,y$ are outcomes, we write $xEy$ for the act which yields $x$ for states in $E$ and $y$ for states in $\Omega - E$.

\begin{lemma}\label{lem:cont}
	The correspondence $\pi(\s)$ and the function $u(\s)$ mapping $\s\in\bar{\mathcal R}$ to the beliefs and the utility function representing $\s$ are (upper-hemi) continuous.
	Moreover, the function $\s(\pi,u)$ mapping each pair of belief and utility function to the preference relation it induces is continuous.
\end{lemma}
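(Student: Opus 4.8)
The plan is to verify continuity pointwise, treating complete indifference — where the topologies are coarse — separately from the relations in $\mathcal R$, where they are the metric topologies. At complete indifference everything is immediate: the only open subset of $\bar{\mathcal R}$ containing complete indifference is $\bar{\mathcal R}$, the only open subset of $\Pi$ containing all of $\Pi$ is $\Pi$, and the only open subset of $\bar{\mathcal U}$ containing the constant-$0$ function is $\bar{\mathcal U}$. Hence $\pi(\cdot)$ is upper-hemicontinuous and $u(\cdot)$ is continuous at complete indifference (take the neighborhood $\bar{\mathcal R}$), and $\s(\cdot,\cdot)$ is continuous at every pair $(\pi,0)$ (take the neighborhood $\Pi\times\bar{\mathcal U}$). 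Since $\mathcal R$ is open in $\bar{\mathcal R}$ and $\mathcal U$ is open in $\bar{\mathcal U}$, continuity of the three maps at the remaining points reduces to continuity of their restrictions to $\mathcal R$, $\mathcal R$, and $\Pi\times\mathcal U$ respectively, all maps into metric spaces (noting that any $(\pi,u)$ with $u\in\mathcal U$ induces a relation in $\mathcal R$, as $u$ is non-constant). So it remains to give the $\varepsilon$-estimates there.

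For $u(\cdot)$, fix $b\in O$; for every $x\in O$ the constant act $x\Omega b$ has $\ev_{\s}(x\Omega b)=u(x)$, where $u\in\mathcal U$ is the normalized representative of $\s$. Since $\{x\Omega b\colon x\in O\}\subseteq\mathcal A$, we get $\sup_{x\in O}|u(x)-u'(x)|\le\sup_{f\in\mathcal A}|\ev_{\s}(f)-\ev_{\s'}(f)|$, so $u(\cdot)$ is $1$-Lipschitz from $\mathcal R$ into $\mathcal U$. For $\pi(\cdot)$, fix $\s_0\in\mathcal R$ with representation $(\pi_0,u_0)$ and choose $a,b\in O$ with $c:=u_0(a)-u_0(b)>0$ (possible as $u_0$ is non-constant). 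For $\s\in\mathcal R$ with representation $(\pi,u)$ and $E\in\mathcal E$, the identities $\ev_{\s}(aEb)=u(b)+(u(a)-u(b))\pi(E)$, $\ev_{\s}(a\emptyset b)=u(b)$, and $\ev_{\s}(a\Omega b)=u(a)$ give
\[
  \pi(E)=\frac{\ev_{\s}(aEb)-\ev_{\s}(a\emptyset b)}{\ev_{\s}(a\Omega b)-\ev_{\s}(a\emptyset b)}
\]
whenever the denominator is nonzero. If $\varepsilon:=\sup_{f\in\mathcal A}|\ev_{\s}(f)-\ev_{\s_0}(f)|<c/4$, the denominator is within $2\varepsilon$ of $c$, hence at least $c/2$; the numerator lies in $[0,1]$; and numerator and denominator differ from their values at $\s_0$ by at most $2\varepsilon$, the numerator uniformly over $E$ since $\{aEb\colon E\in\mathcal E\}\subseteq\mathcal A$. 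A routine estimate of the difference of quotients then bounds $\sup_{E\in\mathcal E}|\pi(E)-\pi_0(E)|$ by a constant times $\varepsilon/c^2$, which tends to $0$. As $\pi(\cdot)$ is single-valued on $\mathcal R$, this gives both continuity and upper-hemicontinuity there.

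For $\s(\cdot,\cdot)$, fix $(\pi,u)$ with $u\in\mathcal U$; the induced relation is represented by $(\pi,u)$ itself, as $u$ is already normalized, so $\ev_{\s(\pi,u)}(f)=\int_\Omega(u\circ f)\,d\pi$. For any $(\pi',u')$ with $u'\in\mathcal U$ and any act $f$, writing $h:=u\circ f\in[0,1]$,
\[
  \bigl|\ev_{\s(\pi',u')}(f)-\ev_{\s(\pi,u)}(f)\bigr|\le\int_\Omega|u'\circ f-u\circ f|\,d\pi'+\Bigl|\int_\Omega h\,d\pi'-\int_\Omega h\,d\pi\Bigr|\le\sup_{x\in O}|u'(x)-u(x)|+\sup_{E\in\mathcal E}|\pi'(E)-\pi(E)|,
\]
where the bound on the $h$-integrals comes from the layer-cake identity $\int h\,d\mu=\int_0^1\mu(\{h>t\})\,dt$ applied to $\pi'$ and $\pi$. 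Both terms are uniform in $f$ and vanish as $(\pi',u')\to(\pi,u)$, so $\s(\pi',u')\to\s(\pi,u)$ in $\mathcal R$.

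No step is deep, but two points need care. First, utility functions are normalized by infimum and supremum, not by attained extrema, so for $\pi(\cdot)$ one cannot just plug in a best and a worst outcome; instead one fixes a pair of outcomes separating $u_0$ and controls the quotient formula in a neighborhood of $\s_0$ (where the denominator stays bounded away from $0$). Second, one needs $\bigl|\int h\,d\pi'-\int h\,d\pi\bigr|\le\sup_{E\in\mathcal E}|\pi'(E)-\pi(E)|$ uniformly over all measurable $h\colon\Omega\to[0,1]$; this uniform bound is exactly what the layer-cake representation delivers, by reducing the integral difference to an average of measure differences over level sets. The remainder is bookkeeping about the coarse topology at complete indifference.
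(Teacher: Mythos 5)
Your proof is correct and follows essentially the same route as the paper's: constant acts give the uniform bound on $u$, binary acts $aEb$ give the quotient formula whose denominator is bounded away from zero near a non-indifferent relation, and the converse direction splits the integral difference into a utility term and a belief term. The only differences are cosmetic — you make explicit the coarse-topology bookkeeping at complete indifference and the layer-cake justification of $\bigl|\int h\,d\pi'-\int h\,d\pi\bigr|\le\sup_{E}|\pi'(E)-\pi(E)|$, both of which the paper leaves implicit.
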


\begin{proof}
	Let $(\s^n)$ be a sequence converging to $\s$ in $\bar{\mathcal R}$.
	For each $n$, let $\pi^n\in\pi(\s^n)$ and $u^n = u(\s^n)$.
	
	First we show that $(u^n)$ converges to $u = u(\s)$.
	Let $x\in O$ and $f_x$ be the constant act returning $x$ for all states.
	We have $\sup\{|u^n(x)-u(x)|\colon x\in O\} = \sup\{|\ev_{\s^n}(f_x)-\ev_{\s}(f_x)|\colon x\in O\}$, and so $(u^n)$ converges uniformly to $u$.
	
	Second, we need to show that if $(\pi^n)$ converges to $\pi'\in\Pi$, then $\pi'\in\pi(\s)$.
	If $u = 0$, then $\pi(\s) = \Pi$ and there is nothing to show.
	Otherwise, $\pi(\s) = \{\pi\}$ for some $\pi\in\Pi$. 
	We show that $(\pi^n)$ converges to $\pi$, which implies $\pi = \pi'$.
	Since $u\neq 0$, we can choose $x,y\in O$ such that $u(x) > u(y)$.
	 
	Then, for large enough $n$,
	\begin{align*}
		\sup\left\{\left|\pi^n(E)-\pi(E)\right|\colon E\in\mathcal E\right\} &= \sup\left\{\left|\frac{\ev_{\s^n}(xEy)-u^n(y)}{u^n(x) - u^n(y)} - \frac{\ev_{\s}(xEy)-u(y)}{u(x) - u(y)}\right|\colon E\in \mathcal E\right\} \\
		&\le \frac2{u(x) - u(y)}\sup\left\{\left|\ev_{\s^n}(xEy) - \ev_{\s}(xEy)\right|\colon E\in \mathcal E\right\}
	\end{align*}
	and so $(\pi^n)$ converges uniformly to $\pi$.
	
	Conversely, assume that $(\pi^n)$ and $(u^n)$ converge to $\pi$ and $u$, respectively. 
	Let $\s^n = \s(\pi^n,u^n)$ and $\s = \s(\pi,u)$ be the induced preference relations.
	If $u = 0$, then $\s$ is complete indifference.
	Since $\bar{\mathcal R}$ is the only neighborhood of complete indifference in $\bar{\mathcal R}$, $(\s^n)$ trivially converges to $\s$.
	Suppose $u \neq 0$.
	For $\epsilon > 0$, let $n_0\in\mathbb N$ such that for $n\ge n_0$, $\sup\{|\pi^n(E) - \pi(E)|\colon E\in\mathcal E\}<\frac{\epsilon}2$ and $\sup\{|u^n(x) - u(x)|\colon x\in O\}<\frac{\epsilon}2$.
	($n_0$ exists since $u \neq 0$.)
	Then, for all $n\ge n_0$ and $f\in\mathcal A$,
	\begin{align*}
		|\ev_{\s^n}(f)-\ev_{\s}(f)| &= \left|\int_\Omega (u^n\circ f)d\pi^n -\int_\Omega (u\circ f)d\pi\right|\\
		&\le \left|\int_\Omega (u^n\circ f - u\circ f)d\pi^n\right| + \left|\int_\Omega (u\circ f)d\pi^n-\int_\Omega (u\circ f)d\pi\right|\\
		&< \frac{\epsilon}2 + \frac{\epsilon}2 = \epsilon
	\end{align*}
	Thus, $\ev_{\s^n}$ converges uniformly to $\ev_{\s}$.
\end{proof}

The basis for \Cref{prop:monotonicity} are \Cref{lem:beliefindependence} and \Cref{lem:utilityindependence}.
\Cref{lem:beliefindependence} allows negative weights for the beliefs and only applies to non-degenerate profiles.
Assuming that the SWF is continuous eliminates the first issue and partially the second.

\monotonicity*

\begin{proof}
	\setcounter{case}{0}
	\setcounter{step}{0}
	For all $i\in I$ and $J\subset I$, let $\nu_i$, $\omega_i$, $\sigma_i^J$ be the functions obtained from \Cref{lem:beliefindependence} and \Cref{lem:utilityindependence}.
	\begin{step}\label{step:mucont}
		We show that each $\omega_i$ is continuous.
		Let $i,j\in I$ and $\p\in\bar{\mathcal R}^I$ be non-degenerate with $I_{\p} = \{i,j\}$. 
		Let $(\s[i]^n)$ be a sequence in $\mathcal R$ converging to $\s[i]\in\mathcal R$ and $\p^n = (\p[-i],\s[i]^n)$.
		By assumption, $\Phi$ is continuous, and by \Cref{lem:cont}, the mapping from preference relations to the corresponding utility functions is continuous.
		Thus, by \Cref{lem:utilityindependence}, $u^n = \psi(\p^n) \equiv \omega_i(u_i^n)u_i^n + \omega_j(u_j)u_j$ converges to $u = \psi(\p) \equiv \omega_i(u_i)u_i + \omega_j(u_j)u_j$.
		First, $\omega_i(\s[i]^n)$ is bounded, as otherwise, a subsequence of $(u^n)$ would converge to $u_i$.
		But this is impossible, since $\omega_j(u_j)\neq 0$ and $u_i\not\equiv\pm u_j$.
		Now if $\alpha$ is an accumulation point of $(\omega_i(u^n_i))$, then $\alpha u_i + \omega_j(u_j) u_j\equiv u$, since $(u^n)$ converges to $u$.
		But $\alpha u_i + \omega_j(u_j) u_j \equiv \beta u_i + \omega_j(u_j) u_j$ if and only if $\alpha = \beta$.
		So $(\omega_i(u_i^n))$ is bounded and has a unique accumulation point.
		Thus, it converges to $\omega_i(u_i)$.
	\end{step}
	
	\begin{step}\label{step:lambdacont}
		Let $i,j\in I$ and $\p\in\bar{\mathcal R}^I$ be non-degenerate with $I_{\p} = \{i,j\}$. 
		We show that $\sigma_i^{\{i,j\}}\nu_i$ is continuous at $\p$.
		(For convenience, we will omit the superscript ${\{i,j\}}$ from now on.)
		Let $(\p^n)$ be a sequence of profiles with $I_{\p^n} = \{i,j\}$ converging to $\p$.
		Let $\alpha^n = \sigma_i(\p^n)\nu_i(\s[i]^n)$ and $\beta^n = \sigma_j(\p^n)\nu_j(\s[j]^n)$, and $\alpha = \sigma_i(\p)\nu_i(\s[i])$ and $\beta = \sigma_j(\p)\nu_j(\s[j])$.
		
		First we prove convergence when $j$'s preferences remain constant at $\s[j]$.
		Let $\tilde{\p}^n = (\p[-i],\s[i]^n)$, $\tilde\alpha^n = \sigma_i(\tilde{\p}^n)\nu_i(\s[i]^n)$, and $\tilde\beta^n = \sigma_j(\tilde{\p}^n)\nu_j(\s[j])$.
		We need to show that $(\tilde\alpha^n)$ converges to $\alpha$.
		Note that $\tilde\beta^n$ can only vary in sign but not in absolute value.
		Since $\Phi$ and the correspondence mapping preference relations to the corresponding beliefs are continuous, we have that $\tilde\alpha^n\pi_i^n + \tilde\beta^n\pi_j\equiv \phi(\tilde\p^n)\rightarrow \phi(\p) \equiv\alpha\pi_i + \beta\pi_j$.
		With the same reasoning as in \Cref{step:mucont}, we get that $(\tilde\alpha^n)$ is bounded and has a unique accumulation point.
		Thus, it converges to $\alpha$.
		Similarly, $\sigma_j(\p[-j],\s[j]^n)\nu_j(\s[j]^n)$ converges to $\beta$.
		
		Now we show that $(\alpha^n)$ converges $\alpha$.
		We already know that the sequences of absolute values of $(\alpha^n)$ and $(\beta^n)$ converge to $\alpha$ and $\beta$, respectively. 
		So any subsequence $(\alpha^{n_k},\beta^{n_k})$ such that all $\alpha^{n_k}$ and all $\beta^{n_k}$ have the same sign converges.
		By the same reasoning as in the previous paragraph, we conclude that $\alpha^{n_k}\pi_i^{n_k} + \beta^{n_k}\pi_j^{n_k}\equiv \phi(\p^{n_k})\rightarrow \phi(\p) \equiv \alpha\pi_i + \beta\pi_j$.
		Since $\pi_i\neq\pi_j$, this implies that $(\alpha^{n_k},\beta^{n_k})$ converges to $(\alpha,\beta)$.
		Thus $(\alpha^n,\beta^n)$ converges to $(\alpha,\beta)$.
	\end{step}

	\begin{step}\label{step:sigmapos}
		Now we deduce that $\sigma_i$ is always equal to 1.
		Assume for contradiction that there is a non-degenerate profile $\tilde\p\in\bar{\mathcal R}^I$ with $I_{\p} = \{i,j\}$ and 
		$\sigma_i(\tilde\p) = -1$.
		Since $\sigma_i\nu_i$ and $\sigma_j\nu_j$ are continuous at $\tilde\p$ by \Cref{step:lambdacont}, we can find a neighborhood of $\tilde\p$ such that $\sigma_i(\p) = -1$ for all non-degenerate profiles $\p$ contained in it.
		In particular, we can find $\epsilon > 0$ such that $\sigma_i(\p) = -1$ whenever $I_{\p} = \{i,j\}$, $\s[i] = \tilde{\s[i]}$, $u_j = \tilde u_j$, and $\sup\{|\pi_j(E) - \tilde\pi_j(E)|\colon E\in\mathcal E\} < \epsilon$.
		Let $\tilde{\mathcal P}$ be the corresponding set of profiles with concerned agents $\{i,j\}$.
		By \citeauthorhloff{Liap40a}'s theorem, we can find an event $E$ such that $\tilde\pi_i(E) = \tilde\pi_j(E) = \frac{\epsilon}2$. 
		Then $\tilde{\mathcal P}$ contains a non-degenerate profile $\p$ with $\sigma_i(\p) =-1$, $\pi_i(E) = \tilde\pi_i(E) = \frac{\epsilon}2$ and $\pi_j(E) = 0$.
		This is not possible, since $\sigma_i(\p)\nu_i(\s[i])\pi_i(E) + \sigma_j(\p)\nu_j(\s[j])\pi_j(E)$ would be negative.

		Since $j$ was arbitrary and $\sigma_i^J,\sigma_j^J$ satisfy the ratio condition stated in \Cref{lem:beliefindependence}, it follows that $\sigma_i^J$ is constant at 1 for all $J$.
		Since we have shown that $\sigma_i\nu_i$ is continuous, so is $\nu_i$.
	\end{step}
	\begin{step}
		Let $\bar\Phi$ be the SWF where $\bar\Phi(\p)$ is represented by $\bar\phi(\p) \equiv \sum_{i\in I_{\p}} \nu_i(\s[i])\pi_i$ and $\bar\psi(\p) \equiv \sum_{i\in I_{\p}} \omega_i(\s[i])u_i$ for every profile $\p$.
		We know from \Cref{lem:beliefindependence} and \Cref{lem:utilityindependence} that $\Phi$ and $\bar\Phi$ agree on all non-degenerate profiles.
		Our task is to show that they agree on every profile $\p\in\bar{\mathcal R}^I$ with non-degenerate utilities.
		
		\Cref{lem:utilityindependence} applies to all profiles with non-degenerate utilities. 
		Thus, $\psi(\p) = 0$ if and only if $\bar\psi(\p)\equiv 0$, and so $\Phi(\p)$ is complete indifference if and only if $\bar\Phi$ is.
		If $\Phi(\p)$ is complete indifference, $\phi(\p)$ may be arbitrary and choosing $\phi(\p) = \bar\phi(\p)$ is no restriction.

		If $\Phi(\p)$ and $\bar\Phi(\p)$ are not complete indifference, $\psi(\p) = \bar\psi(\p)$ follows from the fact that $\Pi$ is Hausdorff and $\psi$ and $\bar\psi$ are continuous and agree on a dense subset of $\bar{\mathcal R}^I$.
	\end{step}
\end{proof}

\begin{remark}[Dropping continuity in \Cref{prop:monotonicity}]
	Continuity is necessary to conclude that the weight of an agent's belief is always strictly positive.
	Consider $\Omega = [0,1]$ equipped with the Borel sigma-algebra $\mathcal B([0,1])$.
	Let $\tilde\pi$ be the uniform distribution on $\Omega$ and, for a non-atomic measure $\pi$ on $\Omega$, let $\rho(\pi) =\sup\{\frac{\pi(E)}{\tilde\pi(E)}\colon E\in\mathcal B([0,1])\}$.
	(Since non-atomic measures have continuous density functions, $\rho(\pi)$ is finite.)
	For $i\ge 2$, let $\nu_i(\s[i]) = \frac{1}{3^i\rho(\pi_i)}$, and $\nu_1$ as well as all $\omega_i$ be constant at 1.
	Suppose $\Phi(\p)$ is represented as in \Cref{prop:monotonicity} except that society's belief is $\pi_1 - \sum_{i\in I-\{1\}} \nu_i(\s[i])\pi_i$ (suitably scaled) whenever agent 1 is concerned and $\pi_1 = \tilde\pi$.
	Then $\Phi$ satisfies all axioms but continuity.
\end{remark}

\section{Implications of Independence of Redundant Actions}

Using independence of redundant acts, we derive a lemma which, together with \Cref{prop:monotonicity}, concludes the proof of \Cref{prop:main}.
But first we need two auxiliary statements.
Recall that a function is simple if it has finite range.

\begin{lemma}\label{lem:simple}
	Let $i\in I$ and $\p\in\bar{\mathcal R}^I$ such that $u_j$ is simple for all $j\in I-\{i\}$.
	Then for every act $f$, there is a simple act $g$ such that $f\sim_j g$ for all $j\in I$.
\end{lemma}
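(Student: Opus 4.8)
The plan is to construct the simple act $g$ by replacing $f$, cell by cell, over a finite partition of $\Omega$ adapted to the utilities of the agents other than $i$. Since $u_j$ is simple for each $j\in I-\{i\}$, the function $u_j\circ f$ is simple on $\Omega$; let $\{B_1,\dots,B_L\}$ be the common refinement of the finitely many partitions these functions induce, so that $u_j\circ f$ is constant, say equal to $w_{j,\ell}$, on each cell $B_\ell$ for every $j\in I-\{i\}$. Each $B_\ell$ is nonempty, so $O_\ell:=f(B_\ell)\neq\emptyset$, and every $o\in O_\ell$ satisfies $u_j(o)=w_{j,\ell}$ for all $j\in I-\{i\}$. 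I would build $g$ so that on each $B_\ell$ it takes at most two values, both lying in $O_\ell$; this automatically gives $u_j\circ g=u_j\circ f$ \emph{pointwise} for every $j\in I-\{i\}$, so that the only remaining task is to arrange the single scalar identity $\int_{B_\ell}(u_i\circ g)\,d\pi_i=\int_{B_\ell}(u_i\circ f)\,d\pi_i$ on each cell.

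On a cell with $\pi_i(B_\ell)=0$ I would let $g$ equal any fixed outcome of $O_\ell$ there. If $\pi_i(B_\ell)>0$, put $a_\ell=\pi_i(B_\ell)^{-1}\int_{B_\ell}(u_i\circ f)\,d\pi_i$. Should $u_i\circ f$ equal $a_\ell$ $\pi_i$-almost everywhere on $B_\ell$, then $f$ attains on a set of positive $\pi_i$-measure some outcome $o^{*}\in O_\ell$ with $u_i(o^{*})=a_\ell$, and I set $g\equiv o^{*}$ on $B_\ell$. Otherwise both $\{u_i\circ f<a_\ell\}\cap B_\ell$ and $\{u_i\circ f>a_\ell\}\cap B_\ell$ have positive $\pi_i$-measure (if, say, the second were null, then $a_\ell-u_i\circ f$ would be nonnegative $\pi_i$-a.e.\ on $B_\ell$ with vanishing integral, forcing $u_i\circ f=a_\ell$ a.e.), so $O_\ell$ contains outcomes $o^-,o^+$ with $u_i(o^-)<a_\ell<u_i(o^+)$. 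As $\pi_i$ is non-atomic, its restriction to $B_\ell$ attains every value in $[0,\pi_i(B_\ell)]$---the only point at which (a scalar instance of) \citeauthorhloff{Liap40a}'s theorem enters, just as elsewhere in the appendix---so I can choose a measurable $C\subset B_\ell$ with $\pi_i(C)=(u_i(o^+)-a_\ell)\,\pi_i(B_\ell)/(u_i(o^+)-u_i(o^-))\in(0,\pi_i(B_\ell))$ and set $g=o^-$ on $C$ and $g=o^+$ on $B_\ell\setminus C$; a one-line computation gives $\int_{B_\ell}(u_i\circ g)\,d\pi_i=a_\ell\,\pi_i(B_\ell)$, as wanted.

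Gluing the pieces $g|_{B_\ell}$ together yields a simple act $g$ (with at most $2L$ values). For every $j\in I-\{i\}$ we have $u_j\circ g=u_j\circ f$ pointwise, hence $\ev_{\s[j]}(g)=\ev_{\s[j]}(f)$ and so $f\sim_j g$; summing the cellwise identities over $\ell$ gives $\ev_{\s[i]}(g)=\int_\Omega(u_i\circ g)\,d\pi_i=\int_\Omega(u_i\circ f)\,d\pi_i=\ev_{\s[i]}(f)$, hence $f\sim_i g$ (trivially if $\s[i]$ is complete indifference). I expect the one genuinely non-routine point to be exactly the claim that $a_\ell$ is attainable as a convex combination of values $u_i(o)$ with $o\in O_\ell$---that is, using only outcomes that $f$ already takes on $B_\ell$, so the untouched agents are not disturbed---which is settled by the dichotomy ``$u_i\circ f$ constant $\pi_i$-a.e.\ on $B_\ell$ versus not''; everything else is bookkeeping.
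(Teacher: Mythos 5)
Your proof is correct and follows essentially the same route as the paper's: partition $\Omega$ into the finitely many cells on which $\bm u_{-i}\circ f$ is constant, then on each cell of positive $\pi_i$-measure replace $f$ by an act taking at most two values from the cell's range whose $\pi_i$-average of $u_i$ equals the conditional mean, using non-atomicity (Liapounoff) to split the cell in the right proportion. The only immaterial difference is how you certify that the conditional mean $a_\ell$ is sandwiched between $u_i$-values of outcomes actually attained on the cell---your ``constant $\pi_i$-a.e.\ versus not'' dichotomy in place of the paper's countable-covering argument showing the sets $O_v^+$ and $O_v^-$ are non-empty.
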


\begin{proof}
	Put differently, we want to show that for every act $f$, there is a simple act $g$ such that $(\ev_{\s[j]}(f))_{j\in I} = (\ev_{\s[j]}(g))_{j\in I}$.
	
	We first show that the sets $O^+ = \{x\in O\colon u_i(x) \ge \ev_{\s[i]}(f)\}$ and $O^- = \{x\in O\colon u_i(x) \le \ev_{\s[i]}(f)\}$ are non-empty.
	If $O^+$ is empty, then $\Omega = \bigcup_{k\in I} \{s\in\Omega\colon u_i(f(s)) \le \ev_{\s[i]}(f)-\frac1k\}$.
	Note that all sets in this union are measurable.
	Since $\pi_i$ is countably additive, there is $k_0$ such that $\pi_i(\{s\in\Omega\colon u_i(f(s)) \le \ev_{\s[i]}(f)-\frac1{k_0}\}) = \epsilon > 0$.
	The fact that $O^+$ is empty then gives $\ev_{\s[i]}(f) \le \ev_{\s[i]}(f) - \frac{\epsilon}{k_0}$, which is a contradiction.
	Similarly, one shows that $O^-$ is non-empty.
	
	Now let $V = \{\bm u_{-i}(x)\colon x\in O\}\subset \mathbb R^{I-\{i\}}$ be the range of $\bm u_{-i}$.
	Since all $u_j$ are simple, $V$ is finite.
	We partition the set of outcomes $O$ into the measurable sets $O_v = \bm u_{-i}^{-1}(v)$ and the set of states $\Omega$ into the measurable sets $E_v = f^{-1}(O_v)$ with $v$ ranging over $V$.
	To define $g$, consider two cases.
	If $\pi_i(E_v) = 0$, choose $x_v\in O_v$ arbitrarily and let $g(s) = x_v$ for all $s\in E_v$.
	If $\pi_i(E_v) > 0$, then $\frac1{\pi_i(E_v)}\pi_i|_{E_v}$ is a probability measure on $E_v$, where $\pi_i|_{E_v}$ is $\pi_i$ restricted to events contained in $E_v$. 
	By the previous paragraph, the sets $O_v^+ = \{x\in O_v\colon u_i(x) \ge \frac1{\pi_i(E_v)}\int_{E_v} (u_i\circ f)d\pi_i|_{E_v}\}$ and $O_v^- = \{x\in O_v\colon u_i(x) \le \frac1{\pi_i(E_v)}\int_{E_v} (u_i\circ f)d\pi_i|_{E_v}\}$ are non-empty.
	Thus, there are $x^+\in O_v^+$ and $x^-\in O_v^-$ and $\alpha\in[0,1]$ such that $\alpha u_i(x^+) + (1-\alpha) u_i(x^-) = \frac{1}{\pi_i(E_v)}\int_{E_v} (u_i\circ f)d\pi_i|_{E_v}$.
	Since $\pi_i$ is non-atomic, there is $E_v^+\subset E_v$ such that $\pi_i(E_v^+) = \alpha \pi_i(E_v)$.
	We define $g(s) = x^+$ for $s\in E_v^+$ and $g(s) = x^-$ for $s\in E_v - E_v^+$.
	This gives
	\begin{align*}
		\int_{E_v} (u_i\circ f)d\pi_i|_{E_v} &= \pi_i(E_v)\left(\alpha u_i(x^+) + (1-\alpha) u_i(x^-)\right) \\
		&= \pi_i(E_v^+) u_i(x^+) + \pi_i(E_v - E_v^+)u_i(x^-) = \int_{E_v} (u_i\circ g)d\pi_i|_{E_v}
	\end{align*}
	Also, since $\bm u_{-i}$ is constant on $O_v$, $\int_{E_v} (u_j\circ f)d\pi_j|_{E_v} = \int_{E_v} (u_j\circ g)d\pi_j|_{E_v}$ for all $j\in I-\{i\}$.
	In summary, we have $(\ev_{\s[j]}(f))_{j\in I} = (\ev_{\s[j]}(g))_{j\in I}$.
\end{proof}

\begin{lemma}\label{lem:subalgebra}
	Let $\pi$ and $\pi'$ be non-atomic probability measures on $(\Omega,\mathcal E)$.
	Then there is a sub-sigma-algebra $\mathcal E'$ of $\mathcal E$ such that $\pi|_{\mathcal E'} = \pi'|_{\mathcal E'}$ ($\pi(E) = \pi'(E)$ for all $E\in\mathcal E'$) and $\pi|_{\mathcal E'},\pi'|_{\mathcal E'}$ are non-atomic on $(\Omega,\mathcal E')$.
\end{lemma}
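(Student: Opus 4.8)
The plan is to reduce the statement to the existence of a single $\mathcal E$-measurable map $g\colon\Omega\to[0,1]$ with $g_*\pi = g_*\pi' = \ell$, where $\ell$ denotes Lebesgue measure on $[0,1]$, and then take $\mathcal E' = g^{-1}(\mathcal B([0,1])) = \{g^{-1}(B)\colon B\in\mathcal B([0,1])\}$. This is a sub-sigma-algebra of $\mathcal E$ since $g$ is measurable and preimages commute with countable Boolean operations. For $B\in\mathcal B([0,1])$ we get $\pi(g^{-1}(B)) = (g_*\pi)(B) = \ell(B) = (g_*\pi')(B) = \pi'(g^{-1}(B))$, so $\pi|_{\mathcal E'} = \pi'|_{\mathcal E'}$. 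For non-atomicity, suppose $E = g^{-1}(B)\in\mathcal E'$ with $\pi(E)>0$; then $\ell(B)>0$, and since $\ell$ is non-atomic there is a Borel $B'\subset B$ with $0<\ell(B')<\ell(B)$, whence $F = g^{-1}(B')\subset E$ lies in $\mathcal E'$ and $0<\pi(F)<\pi(E)$; the same works verbatim for $\pi'$. So it suffices to construct $g$.

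To build $g$ I would bisect $\Omega$ dyadically so that $\pi$ and $\pi'$ stay equal on every piece. The key tool is Lyapunov's convexity theorem (\citet{Liap40a}): since $\pi$ and $\pi'$ are non-atomic, the vector measure $E\mapsto(\pi(E),\pi'(E))$ is non-atomic, so for any fixed $E_0\in\mathcal E$ the set $\{(\pi(F),\pi'(F))\colon F\subset E_0,\ F\in\mathcal E\}$ is a convex subset of $\mathbb R^2$; it contains $(0,0)$ and $(\pi(E_0),\pi'(E_0))$, hence their midpoint, so there is $F\subset E_0$ with $\pi(F) = \pi(E_0)/2$ and $\pi'(F) = \pi'(E_0)/2$. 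Iterating this, starting from $E_\emptyset = \Omega$, yields for every $n$ a partition $\{E_s\colon s\in\{0,1\}^n\}$ of $\Omega$ into measurable sets, nested across $n$, with $\pi(E_s) = \pi'(E_s) = 2^{-n}$ for all $s$. Each $\omega\in\Omega$ then lies in a unique nested chain $E_{s_1},E_{s_1s_2},\dots$, and I set $g(\omega) = \sum_{k\ge 1} s_k(\omega)2^{-k}$; each $s_k$ is a finite Boolean combination of the $E_s$, so $g$ is $\mathcal E$-measurable.

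Finally, to verify $g_*\pi = g_*\pi' = \ell$: for a dyadic interval $I$ of length $2^{-n}$ whose left endpoint has binary expansion $s\in\{0,1\}^n$, the set $g^{-1}(I)$ differs from $E_s$ only by a set of $\omega$ whose binary expansion is eventually constant, which by countable additivity is both $\pi$-null and $\pi'$-null; hence $\pi(g^{-1}(I)) = \pi(E_s) = 2^{-n} = \ell(I)$, and likewise for $\pi'$. Since the dyadic intervals form a $\pi$-system generating $\mathcal B([0,1])$ and all measures involved are probability measures, $g_*\pi$, $g_*\pi'$ and $\ell$ all agree, finishing the construction. The only non-routine step is the simultaneous bisection, which is precisely where Lyapunov's theorem is used; everything else is bookkeeping about dyadic expansions and generated $\sigma$-algebras.
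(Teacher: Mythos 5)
Your proof is correct, and its decisive step coincides with the paper's: both arguments iterate Lyapunov's convexity theorem (\citet{Liap40a}) to produce, for every $n$, a nested partition of $\Omega$ into $2^n$ measurable cells on each of which $\pi$ and $\pi'$ both take the value $2^{-n}$. Where you diverge is in how you extract the sub-sigma-algebra and verify its properties. The paper takes $\mathcal E'$ to be the sigma-algebra generated by the union of the finite algebras generated by these partitions, proves $\pi|_{\mathcal E'}=\pi'|_{\mathcal E'}$ via the monotone class theorem applied to $\{E\in\mathcal E\colon \pi(E)=\pi'(E)\}$, and gets non-atomicity directly by intersecting a given $E\in\mathcal E'$ with the cells of a sufficiently fine partition. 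You instead encode the nested partition as a binary-expansion map $g\colon\Omega\to[0,1]$ and pull back the Borel sets, so that equality of the two restricted measures and their non-atomicity both become immediate consequences of the corresponding properties of Lebesgue measure; the price is the bookkeeping showing that the set of points with eventually constant expansion is null (which you handle correctly via countable additivity) and the $\pi$-system/Dynkin step identifying $g_*\pi$ with Lebesgue measure. Note that your $\mathcal E'$ is not literally the paper's: $g^{-1}$ of a dyadic interval recovers the corresponding cell only up to a $\pi$- and $\pi'$-null subset, so the two sigma-algebras agree only modulo null sets. That is harmless here, since the lemma only asserts existence of \emph{some} $\mathcal E'$ with the stated properties, and it is in any case all that downstream uses (the construction in the proof of \Cref{lem:lambdamu}) require.
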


\begin{proof}
	We first construct an increasing sequence of sub-sigma-algebras $\mathcal E_k\subset\mathcal E$ so that $\pi$ and $\pi'$ agree on each $\mathcal E_k$.
	In the second step, we show that the sigma-algebra generated by $\bigcup_{k\ge 0} \mathcal E_k$ has the desired properties.

	We define $\mathcal E_k$ inductively so that $\mathcal E_k\subset\mathcal E_{k+1}$ and $\mathcal E_k$ is the sigma-algebra generated by a partition $\{E_k^1,\dots,E_k^{2^k}\}$ of $\Omega$ with $\pi(E_k^m) = \pi'(E_k^m) = \frac1{2^k}$ for all $m$.
	Let $\mathcal E_0 = \{\emptyset,\Omega\}$ (that is, the sigma-algebra generated by $\{\Omega\}$).
	Clearly, $\mathcal E_0$ has the required properties.
	Now let $k\ge 1$ and assume we have constructed $\mathcal E_l$ with the required properties for $l < k$.
	Let $\{E_{k-1}^1,\dots,E_{k-1}^{2^{k-1}}\}$ be the partition generating $\mathcal E_{k-1}$.
	For each $E_{k-1}^m$, we can use \citeauthor{Liap40a}'s theorem to divide it into two equal halves when measured by $\pi$ and $\pi'$.
	That is, we can find $E_k^{2m-1},E_k^{2m}\in\mathcal E$ such that $\{E_k^{2m-1},E_k^{2m}\}$ partitions $E_{k-1}^m$ and $\pi(E_k^{2m-1})= \pi'(E_k^{2m-1}) = \frac1{2^k}$.
	Then if we let $\mathcal E_k$ be the sigma-algebra generated by the partition $\{E_k^1,\dots,E_k^{2^k}\}$, it has the required properties.

	Now let $\mathcal F = \bigcup_{k\ge 0}\mathcal E_k$ and $\mathcal E'$ be the sigma-algebra generated by $\mathcal F$.
	We note two facts.
	First, $\mathcal F$ is an algebra.\footnote{A collection of subsets $\mathcal F$ of $\Omega$ is an algebra if (i) $\emptyset\in\mathcal F$, (ii) $\mathcal F$ is closed under taking complements, and (iii) $\mathcal F$ is closed under \emph{finite} unions.}
	Second, $\mathcal M = \{E\in\mathcal E\colon \pi(E) = \pi'(E)\}$ is a monotone class containing $\mathcal F$.\footnote{A collection of subsets $\mathcal M$ of $\Omega$ is a monotone class if it is closed under countable monotone unions and countable monotone intersections. 
	That is, $(E_k)_{k\ge 0}\subset\mathcal M$ with $E_0\subset E_1\subset\dots$ implies $\bigcup_{k\ge 0} E_k \in \mathcal M$ and $(E_k)_{k\ge 0}\subset\mathcal M$ with $E_0\supset E_1\supset\dots$ implies $\bigcap_{k\ge 0} E_k \in \mathcal M$.}
	Thus, the monotone class theorem implies that $\mathcal E'\subset\mathcal M$.
	
	It remains to show that $\pi|_{\mathcal E'}$ and $\pi'|_{\mathcal E'}$ are non-atomic on $(\Omega,\mathcal E')$.
	Since $\pi|_{\mathcal E'} = \pi'|_{\mathcal E'}$, it suffices to prove the statement for $\pi$.
	Let $E\in\mathcal E'$ with $\pi(E) > 0$.
	Choose $k$ such that $\frac1{2^k} < \pi(E)$.
	By construction of $\mathcal E_k$, we can partition $\Omega$ into sets $E^1,\dots,E^{2^k}\in\mathcal E_k\subset \mathcal E'$ such that $\pi(E^m) = \frac1{2^k}$ for all $m$.
	Note that $E\cap E^m\in\mathcal E'$ for all $m$ since $\mathcal E'$ is a sigma-algebra.
	Then the sets $E\cap E^1,\dots,E\cap E^{2^k}$ partition $E$ and so $\pi(E\cap E^m) > 0$ for some $m$.
	It follows that $0 < \pi(E\cap E^m) \le \pi(E^m) = \frac1{2^k} < \pi(E)$, which proves non-atomicity.
\end{proof}

\Cref{prop:monotonicity} characterizes SWFs that aggregate beliefs and utilities linearly so that the weight of each agent in either linear combination is a function of the agent's own preferences only.
The final lemma shows that for any such SWF $\Phi$ that additionally satisfies independence of redundant acts, the weights have to be constant.

\begin{lemma}\label{lem:lambdamu}
	For $i\in I$, let $\nu_i,\omega_i\colon\mathcal R\rightarrow\mathbb R_{++}$ be continuous functions; 
	let $\Phi$ be an SWF such that for every $\p\in\bar{\mathcal R}^{I}$, $\Phi(\p)$ is represented by $\frac1{\sum_{i\in I_{\p}} \nu_i(\s[i])}\sum_{i\in I_{\p}} \nu_i(\s[i])\pi_i$ and $\sum_{i\in I_{\p}}\omega_i(\s[i])u_i$.
	Then if $\Phi$ satisfies independence of redundant acts, $\nu_i$ and $\omega_i$ are constant for all $i$.
\end{lemma}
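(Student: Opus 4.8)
The plan is to establish in three steps that each $\nu_i$ and $\omega_i$ is independent of the agent's belief, then independent of the agent's utility function, and hence, using continuity, constant. By symmetry it suffices to argue for agent $1$ and for profiles $\p$ in which only agents $1$ and $2$ are concerned, where the hypothesis says $\Phi(\p)$ is represented by $\tfrac{\nu_1(\s[1])\pi_1+\nu_2(\s[2])\pi_2}{\nu_1(\s[1])+\nu_2(\s[2])}$ and $\omega_1(\s[1])u_1+\omega_2(\s[2])u_2$. The common device is: produce a regular set $\mathcal A'=\mathcal A(\mathcal E',O')$ that is co-redundant for two profiles $\p,\p'$ which agree on $\mathcal A'$, apply independence of redundant acts to get $\Phi(\p)|_{\mathcal A'}=\Phi(\p')|_{\mathcal A'}$, and, since the beliefs restricted to $\mathcal E'$ are non-atomic by condition~\ref{item:cored2} (so the restricted preferences, society's included, have unique subjective-expected-utility representations), equate society's belief on $\mathcal E'$ and society's utility on $O'$ across $\p$ and $\p'$ and read off equalities among the $\nu$'s and $\omega$'s by linear algebra.

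First I would show beliefs are irrelevant. Given any two beliefs $\pi_1\neq\pi_1'$, \Cref{lem:subalgebra} supplies a proper sub-sigma-algebra $\mathcal E'\subsetneq\mathcal E$ with $\pi_1|_{\mathcal E'}=\pi_1'|_{\mathcal E'}$ non-atomic on $(\Omega,\mathcal E')$. I would then pick a belief $\pi_2$ with $\pi_2|_{\mathcal E'}$ non-atomic and distinct from $\pi_1|_{\mathcal E'}$ and linearly independent utility functions $u_1,u_2$ so that $\mathcal A(\mathcal E',O)$ is co-redundant for the profile with agent-$1$ preferences $(\pi_1,u_1)$ and for the one with $(\pi_1',u_1)$ — i.e.\ every act is unanimously indifferent to an $\mathcal E'$-measurable one — which I would arrange by a Liapounoff argument on the vector measures $(\pi_1,\pi_2)$ and $(\pi_1',\pi_2)$. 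Agent $i$'s preferences over $\mathcal A(\mathcal E',O)$ depend only on $\pi_i|_{\mathcal E'}$ and $u_i$, so the two profiles agree there; independence of redundant acts makes $\Phi$'s restrictions to $\mathcal A(\mathcal E',O)$ coincide. Equating the (non-atomic, hence unique) representing beliefs and using $\pi_1|_{\mathcal E'}=\pi_1'|_{\mathcal E'}\neq\pi_2|_{\mathcal E'}$ forces $\tfrac{\nu_1(\s[1])}{\nu_1(\s[1])+\nu_2(\s[2])}=\tfrac{\nu_1(\s[1]')}{\nu_1(\s[1]')+\nu_2(\s[2])}$, so $\nu_1(\s[1])=\nu_1(\s[1]')$; equating the representing utilities and using linear independence of $u_1,u_2$ and the constants gives $\omega_1(\s[1])=\omega_1(\s[1]')$. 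Hence $\nu_1$ and $\omega_1$ depend only on $u_1$.

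Second I would show utilities are irrelevant. Fix outcomes $x_0,x_1,x^\ast$ and a profile with mutually singular non-atomic beliefs $\pi_1\perp\pi_2$ (these exist since $\mathcal E$ supports a non-atomic measure, which can be split) and utility functions for which $x_0$ and $x_1$ are the commonly attained worst and best outcomes and $u_1(x^\ast)\neq u_2(x^\ast)$. Mutual singularity makes the image of $\mathcal A(\mathcal E,\{x_0,x_1,x^\ast\})$ in utility space the whole unit square, which contains the image of all acts, so $\mathcal A(\mathcal E,\{x_0,x_1,x^\ast\})$ is co-redundant; and it remains co-redundant when agent $1$ replaces $u_1$ by any $u_1'\in\mathcal U$ agreeing with $u_1$ on $\{x_0,x_1,x^\ast\}$. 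For such a change the two profiles agree on $\mathcal A(\mathcal E,\{x_0,x_1,x^\ast\})$, so independence of redundant acts applies; equating society's belief on $\mathcal E$ and using $\pi_1\neq\pi_2$ gives $\nu_1(\s[1])=\nu_1(\s[1]')$, and equating society's utility on $\{x_0,x_1,x^\ast\}$ and using that $u_1,u_2$ and the constants are linearly independent on $\{x_0,x_1,x^\ast\}$ (which holds since $u_1(x^\ast)\neq u_2(x^\ast)$) gives $\omega_1(\s[1])=\omega_1(\s[1]')$. Finally, between any two utility functions in $\mathcal U$ one constructs a path whose neighboring members are related by a perturbation of this type, so continuity of $\nu_1,\omega_1$ forces them to be constant; running the argument for every agent completes the proof.

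I expect the construction of the co-redundant sets — and especially the verification that the chosen regular set stays co-redundant after the one-agent perturbation — to be the main obstacle: condition~\ref{item:cored1} must be checked for both profiles simultaneously, which is exactly where Liapounoff's theorem on the range of a non-atomic vector measure and \Cref{lem:subalgebra} carry the weight, while condition~\ref{item:cored2} forces careful tracking of the non-atomic restrictions. By comparison, once co-redundancy is secured, extracting the weight equalities from uniqueness of subjective-expected-utility representations is routine linear algebra, as is the concluding path-connectedness argument.
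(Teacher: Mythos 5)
Your overall architecture---first kill the dependence of $\nu_1,\omega_1$ on $\pi_1$, then on $u_1$, then chain over utility functions and invoke continuity---is exactly the paper's, and your extraction of the weight equalities from the uniqueness of the restricted representations is sound. The genuine gap is in your first step, at the point where you ``pick a belief $\pi_2$ with $\pi_2|_{\mathcal E'}$ non-atomic and distinct from $\pi_1|_{\mathcal E'}$'' and assert that co-redundancy of $\mathcal A(\mathcal E',O)$ can then be arranged by a Liapounoff argument. Condition~\ref{item:cored1} requires the image of $\mathcal A(\mathcal E',O)$ in utility space to contain the image of all of $\mathcal A$, and this fails for a generic such $\pi_2$: the Liapounoff range of the vector measure $(\pi_1,\pi_2)$ over $\mathcal E$ can be strictly larger than its range over $\mathcal E'$ (take $\pi_2$ singular to $\pi_1$ on $\mathcal E$ but with $\pi_2|_{\mathcal E'}$ close to $\pi_1|_{\mathcal E'}$; the former range is essentially the whole square, the latter a thin lens), in which case some acts have no unanimously indifferent counterpart in $\mathcal A(\mathcal E',O)$. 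The choice of $\pi_2$ must therefore be coordinated with $\mathcal E'$. The paper does this by fixing $E\in\mathcal E'$ with $\pi_1(E)=\frac12$ and setting $\pi_2(F)=2\pi_1(F)$ for $F\subset E$ and $\pi_2(\Omega-E)=0$, so that $\pi_2$ is absolutely continuous with respect to $\pi_1$ with an $\mathcal E'$-measurable density and the matching act can be built cell by cell inside $\mathcal E'$; it also takes $u_2$ simple so that \Cref{lem:simple} reduces the verification to simple acts (since $u_1$ is the focal agent's arbitrary utility function, this reduction is not optional). Without some such construction your first step does not go through as stated.

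Your second step, by contrast, takes a genuinely different and arguably cleaner route. You choose $\pi_1\perp\pi_2$ so that the image of $\mathcal A(\mathcal E,\{x_0,x_1,x^*\})$ is the full unit square; co-redundancy is then immediate and stable under any replacement of $u_1$ by a $u_1'\in\mathcal U$ agreeing with it on $\{x_0,x_1,x^*\}$, so a single application of independence of redundant acts handles all such $u_1'$ at once. The paper instead keeps the beliefs arbitrary (only $\pi_i\neq\pi_j$) and pays for it with a tailor-made $u_j$ and two intermediate utility functions $u_i''$ and $u_i'''$, needed to keep the image of the profile inside the convex hull of $\bm u(x_0),\bm u(x_1),\bm u(x^*)$ at every stage. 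Your device costs only the observation that mutually singular non-atomic beliefs exist (split any belief across an event of measure one half), and it dispenses with the intermediate functions entirely; the concluding chaining over $\mathcal U'$ and the density-plus-continuity step are then as in the paper. If you repair the first step along the lines above, the proof is complete.
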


\begin{proof}
	\setcounter{step}{0}
	\setcounter{case}{0}
	Let $i,j\in I$ and $\s[i],\s[i]'\in\mathcal R$. 
	We want to show that $\nu_i(\s[i]) = \nu_i(\s[i]')$ and $\omega_i(\s[i]) = \omega_i(\s[i]')$.
	In the first step, we show that $\nu_i$ and $\omega_i$ are independent of $\pi_i$.
	In the rest of the proof, we show that they are independent of $u_i$, too.
\begin{step}\label{step:beliefindependent}
	Assume that $u_i = u_i'$.	
	First we show $\nu_i(\s[i]) = \nu_i(\s[i]')$.
	By \Cref{lem:subalgebra}, we can find a sub-sigma-algebra $\mathcal E'$ of $\mathcal E$ such that $\pi_i|_{\mathcal E'} = \pi_i'|_{\mathcal E'}$ and $\pi_i|_{\mathcal E'},\pi_i'|_{\mathcal E'}$ are non-atomic on $(\Omega,\mathcal E')$.
	We construct a belief for agent $j$ that allows us to leverage independence of redundant acts.
	Let $E\in\mathcal E'$ such that $\pi_i(E) = \frac12$. 
	If $\pi_i = \pi_i'$, there is nothing to show.
	Otherwise, either $\pi_i(F)\neq\pi'_i(F)$ for some $F\subset E$ or $\pi_i(F)\neq\pi'_i(F)$ for some $F\subset\Omega - E$.
	Assume the former is true.
	Then define $\pi_j$ so that $\pi_j(F) = 2\pi_i(F)$ for every $F\subset E$ and $\pi_j(\Omega - E) = 0$.
	Moreover, choose $u_j\in\mathcal U$ so that $u_j$ is simple and $u_j\not\equiv\pm u_i$, and let $\s[j]$ be represented by $\pi_j$ and $u_j$.
	
	The set of acts to which we will apply independence of redundant acts is $\mathcal A' = \mathcal A(\mathcal E',O)$.
	Let $\p$ be the profile where $i$ and $j$ have preferences $\s[i]$ and $\s[j]$ and all other agents are completely indifferent; let $\p' = (\p[-i],\s[i]')$.
	To meet the antecedent of independence of redundant acts, we have to show that $\mathcal A'$ is co-redundant for $\p$ and $\p'$.
	Condition~\ref{item:cored2} of co-redundancy is satisfied, since $\pi_i|_{\mathcal E'}$, $\pi_i'|_{\mathcal E'}$, and $\pi_j|_{\mathcal E'}$ are non-atomic by the choice of $\mathcal E'$.
	To verify condition~\ref{item:cored1}, we show that for every $f\in \mathcal A$, there is $g\in \mathcal A'$ such that $f\sim_i g$ and $f\sim_j g$.
	(The choice of $\mathcal A'$ and $u_i = u_i'$ ensure that also $f\sim_i' g$.)
	
	By \Cref{lem:simple}, we may assume that $f$ is simple.
	Define $g$ as follows: let $f(\Omega) = \{x_1,\dots,x_k\}$ be the range of $f$.
	For every $x_l$, let $\alpha_l = \pi_i(E\cap f^{-1}(x_l))$ and $\alpha_l^c = \pi_i((\Omega - E)\cap f^{-1}(x_l))$.
	(Note that $\pi_j(E\cap f^{-1}(x_l)) = 2\alpha_l$.)
	The non-atomicity of $\pi_i|_{\mathcal E'}$ allows us to find events $E_l\subset E$ and $E_l^c\subset \Omega - E$ in $\mathcal E'$ such that $\pi_i(E_l) = \alpha_l$ and $\pi_i(E_l^c) = \alpha_l^c$.
	In fact, we can partition $E$ and $\Omega - E$ into $\{E_1,\dots,E_k\}$ and $\{E_1^c,\dots,E_k^c\}$, respectively. 
	Then let $g(s) = x_l$ for $s\in E_l\cup E_l^c$.
	One can check that $\pi_j(E_l\cup E_l^c) = 2\pi_i(E_l) = 2\alpha_l$.
	Thus, $\ev_{\s[i]}(f) = \ev_{\s[i]}(g)$ and $\ev_{\s[j]}(f) = \ev_{\s[j]}(g)$ and so $f\sim_i g$ and $f\sim_j g$.
	
	Let ${\s} = f(\p)$ and ${\s'} = f(\p')$ and $\pi, u,\pi',u'$ be corresponding beliefs and utility functions.
	Independence of redundant acts applied to $\p$ and $\p'$ gives $g\srel g'$ if and only if $g\srel' g'$ for all $g,g'\in\mathcal A'$.

	Assume that $\nu_i(\s[i])\neq\nu_i(\s[i]')$.
	First, since $u_j\not\equiv\pm u_i$ and $u \equiv \nu_i(u_i)u_i + \nu_j(u_j)u_j$, $\s$ cannot be complete indifference, and so we can find outcomes $x$ and $y$ such that $x\succ y$.
	Recall that $\pi_i(E) = \pi_i'(E) = \frac12$ and $\pi_j(E) = 1$.
	It follows that $\pi(E)\neq\pi'(E)$ and $\pi(E),\pi'(E)>\frac12$.
	So there is an event $E'\in\mathcal E'$ such that $E'\subset E$ and $\pi(E')\neq\pi'(E') = \frac12$.
	Thus, $xE'y$ and $yE'x$ are acts in $\mathcal A'$ but $xE'y\not\sim yE'x$ and $xE'y\sim'yE'x$.
	This contradicts independence of redundant acts and so $\nu_i(\s[i]) = \nu_i(\s[i]')$.
	
	Second, assume that $\omega_i(\s[i]) \neq \omega_i(\s[i]')$.
	Since $u_j\not\equiv\pm u_i$, it follows that $u\neq u'$ and we can a find simple lotteries $p$ and $q$ on $O$ such that $u(p) > u(q)$ but $u'(q)\ge u'(p)$.
	By the previous paragraph, $\nu_i(\s[i]) = \nu_i(\s[i]')$ and so $\pi|_{\mathcal E'} = \pi'|_{\mathcal E'}$.
	Moreover, $\pi|_{\mathcal E'}$ is non-atomic as a weighted mean of measures that are non-atomic on $\mathcal E'$.
	So we can find acts $g$ and $g'$ in $\mathcal A'$ with $g\circ\pi = g\circ \pi' = p$ and $g'\circ \pi = g'\circ\pi' = q$.	
	This gives $g\succ g'$ but $g'\srel' g$, which contradicts independence of redundant acts.
	We conclude that $\omega_i(\s[i]) = \omega_i(\s[i]')$.
\end{step}

\begin{step}\label{step:utilityindependent}
	By \Cref{step:beliefindependent}, we can view $\nu_i$ and $\omega_i$ as functions $\nu_i(u_i)$ and $\omega_i(u_i)$ of $u_i$.
	We show that both these functions are constant.
	
	Recall that $\mathcal U$ consists of utility functions which are normalized to the unit interval, that is, $\inf_x u(x) = 0$ and $\sup_x u(x) = 1$.
	Let $\mathcal U' = \{u\in\mathcal U\colon \text{there exist $x,y\in O$ with } u(x)=0\text{ and }u(y) = 1\}$ be those utility functions for which the infimum and the supremum are attained.
	Observe that the closure of $\mathcal U'$ is $\mathcal U$. 
	Thus, since $\nu_i$ and $\omega_i$ are continuous, it suffices to show that they are constant on $\mathcal U'$.
	This we do now.	

		Let $u_i\in\mathcal U'$; let $x_0,x_1\in O$ such that $u_i(x_0) = 0$ and $u_i(x_1) = 1$ and $x^*\in O-\{x_0,x_1\}$ be arbitrary; let $u_i'$ be such that $u_i'(x) = u_i(x)$ for $x\in\{x_0,x_1,x^*\}$.
		We show that $\nu_i(u_i) = \nu_i(u_i')$ and $\omega_i(u_i) = \omega_i(u_i')$.
		Since $|O|\ge 4$, repeated application of this statement gives the same conclusion for all $u_i'\in\mathcal U'$.
		
		Let $u_i''$ be such that
		\begin{align*}
			u_i''(x) = 
			\begin{cases}
				u_i(x^*) &\text{if $u_i(x) < u_i(x^*)$ and $u_i'(x) > u_i(x^*)$}\\
				u_i'(x)	&\text{if $u_i(x) < u_i(x^*)$ and $u_i'(x) \le u_i(x^*)$}\\
				u_i(x)	&\text{if $u_i(x) \ge u_i(x^*)$}
			\end{cases}
		\end{align*}
		Note that $u_i''(x) = u_i'(x) = u_i(x)$ for $x\in\{x_0,x_1,x^*\}$.
		We want to apply independence of redundant acts to profiles with utility functions $(u_i,u_j,0,\dots,0)$ and $(u_i'',u_j,0,\dots,0)$ and the set of acts $\mathcal A' = \mathcal A(\mathcal E,\{x_0,x_1,x^*\})$.
		This requires choosing $u_j$ appropriately.
		Let $u_j$ be such that 
		\begin{align*}
			u_j(x) = 
			\begin{cases}
				0 &\text{if $u_i(x)\le u_i(x^*)$}\\
				u_i(x)	&\text{otherwise}
			\end{cases}
		\end{align*}
		\Cref{fig:utilityspace} depicts the images of $(u_i,u_j,0,\dots,0)$ and $(u_i'',u_j,0,\dots,0)$ in utility space.
		From $u_i$ to $u_i''$, we adjust the utility for outcomes with $u_i(x) \le u_i(x^*)$ toward $u_i'(x)$ without raising it above $u_i(x^*)$.
		Setting $u_j$ as we did, we can now apply independence of redundant acts to the corresponding profiles.
		
		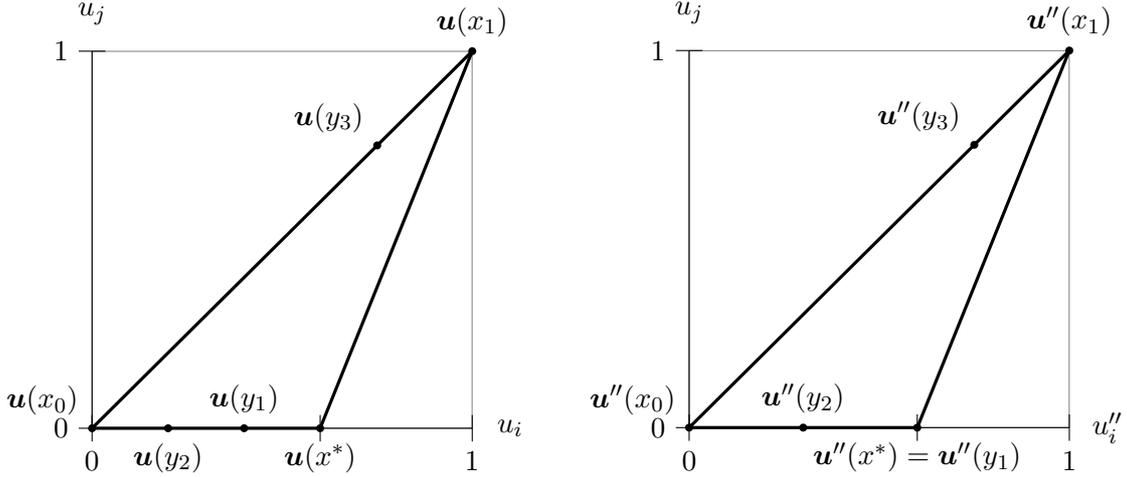
\begin{figure}
			\centering
			\begin{tikzpicture}[scale=5]
			  \draw[gray,very thin,step=1] (0,0) grid (1,1);
			 
			  \draw (0,0) -- (1,0);
			  \foreach \x/\xtext in {0/0, 0.6/, 1/1}
			    \draw[shift={(\x,0)}] (0pt,1pt) -- (0pt,-1pt) node[below] {$\xtext$};
				 \node at (1.1,0) {$u_i$};
			 
			  \draw (0,0) -- (0,1);
			  \foreach \y/\ytext in {0/0, 1/1}
			    \draw[shift={(0,\y)}] (1pt,0pt) -- (-1pt,0pt) node[left] {$\ytext$};
				 \node at (0,1.1) {$u_j$};
			 
			  \node[circle, fill=black, inner sep=0pt, minimum size=3pt, label = above left:$\bm u(x_0)$] at (0,0) {};
			  \node[circle, fill=black, inner sep=0pt, minimum size=3pt, label = above:$\bm u(x_1)$] at (1,1) {};
			  \node[circle, fill=black, inner sep=0pt, minimum size=3pt, label = below:$\bm u(x^*)$] at (.6,0) {};
			  \node[circle, fill=black, inner sep=0pt, minimum size=3pt, label = below:$\bm u(y_2)$] at (.2,0) {};
			  \node[circle, fill=black, inner sep=0pt, minimum size=3pt, label = above:$\bm u(y_1)$] at (.4,0) {};
			  \node[circle, fill=black, inner sep=0pt, minimum size=3pt, label = above left:$\bm u(y_3)$] at (.75,.75) {};

			  \draw[shorten >=-.15pt,very thick,triangle 90 cap-triangle 90 cap] (0,0) -- (1,1);
			  \draw[shorten >=-.15pt,very thick] (0,0) -- (.6,0);
			  \draw[shorten >=-.15pt,very thick,triangle 90 cap-triangle 90 cap] (0.6,0) -- (1,1);
			\end{tikzpicture}
			\hfil
			\begin{tikzpicture}[scale=5]
			  \draw[gray,very thin,step=1] (0,0) grid (1,1);
			 
			  \draw (0,0) -- (1,0);
			  \foreach \x/\xtext in {0/0, 0.6/, 1/1}
			    \draw[shift={(\x,0)}] (0pt,1pt) -- (0pt,-1pt) node[below] {$\xtext$};
				 \node at (1.1,0) {$u_i''$};
			 
			  \draw (0,0) -- (0,1);
			  \foreach \y/\ytext in {0/0, 1/1}
			    \draw[shift={(0,\y)}] (1pt,0pt) -- (-1pt,0pt) node[left] {$\ytext$};
				 \node at (0,1.1) {$u_j$};
			 
			  \node[circle, fill=black, inner sep=0pt, minimum size=3pt, label = above left:$\bm u''(x_0)$] at (0,0) {};
			  \node[circle, fill=black, inner sep=0pt, minimum size=3pt, label = above:$\bm u''(x_1)$] at (1,1) {};
			  \node[circle, fill=black, inner sep=0pt, minimum size=3pt, label = below:{$\bm u''(x^*) = \bm u''(y_1)$}] at (.6,0) {};
			  \node[circle, fill=black, inner sep=0pt, minimum size=3pt, label = above:$\bm u''(y_2)$] at (.3,0) {};
			  \node[circle, fill=black, inner sep=0pt, minimum size=3pt, label = above left:$\bm u''(y_3)$] at (.75,.75) {};

			  \draw[shorten >=-.15pt,very thick,triangle 90 cap-triangle 90 cap] (0,0) -- (1,1);
			  \draw[shorten >=-.15pt,very thick] (0,0) -- (.6,0);
			  \draw[shorten >=-.15pt,very thick,triangle 90 cap-triangle 90 cap] (0.6,0) -- (1,1);
			\end{tikzpicture}
			
			\caption{The images of $\bm u = (u_i,u_j,0,\dots,0)$ and $\bm u'' = (u_i'',u_j,0,\dots,0)$ in utility space projected onto the coordinates $\{i,j\}$. For example, $\bm u(x_0) = (u_i(x_0),u_j(x_0),0,\dots,0) = (0,0,0,\dots,0)$. 
			The outcomes $y_1,y_2$, and $y_3$ are examples for the three cases in the definition of $u_i''$.}
			\label{fig:utilityspace}
		\end{figure}
		
		Let $\pi_i,\pi_j\in\Pi$ with $\pi_i\neq\pi_j$ and $\s[i],\s[i]''$, and $\s[j]$ be represented by the pairs $(\pi_i,u_i), (\pi_i,u_i'')$, and $(\pi_j,u_j)$, respectively.
		Let $\p$ be the profile where $i$ and $j$ have preferences $\s[i]$ and $\s[j]$ and all other agents are completely indifferent; let $\p'' = (\p[-i],\s[i]'')$.
		First, since $u_i(x) = u_i''(x)$ for $x\in\{x_0,x_1,x^*\}$, it is clear that $\p$ and $\p''$ agree on the preferences over acts in $\mathcal A'$.
		Second, since $\bm u(x)$ is in the convex hull of $\{\bm u(x_0), \bm u(x_1),\bm u(x^*)\}$ for all $x\in O$, we have that for every act $f\in\mathcal A$, there is an act $g\in\mathcal A'$ such that $f\sim_i g$ and $f\sim_j g$.
		The analogous assertion holds for $\s[i]''$ and $\s[j]$. 
		Thus, $\mathcal A'$ satisfies condition~\ref{item:cored1} of co-redundancy for $\p$ and $\p''$.
		Condition~\ref{item:cored2} holds since $\pi_i$ and $\pi_j$ are non-atomic on $\mathcal E$.
		It follows from independence of redundant acts that with $\s = \Phi(\p)$ and $\s'' = \Phi(\p'')$, we have for all $g,g'\in\mathcal A'$, $g\srel g'$ if and only if $g\srel'' g'$.
		Let $(\pi,u)$ and $(\pi'',u'')$ be the beliefs and utility functions associated with $\s$ and $\s''$, respectively.
		Note that $u(x_0) = u''(x_0) = 0$ and $u(x_1) = u''(x_1) = 1$.
		
		If $\nu_i(u_i)\neq \nu_i(u_i'')$, then $\pi\neq\pi''$ since $\pi_i\neq\pi_j$.
		So we can find an event $E$ such that $\pi(E) = \frac12 \neq \pi''(E)$.
		It follows that $x_0Ex_1\sim x_1Ex_0$ but $x_0Ex_1\not\sim'' x_1Ex_0$, which is a contradiction since both acts are in $\mathcal A'$.

		If $\omega_i(u_i)\neq \omega_i(u_i'')$, then $u(x^*)\neq u''(x^*)$, since $u_i(x^*) = u_i''(x^*)\neq u_j(x^*)$.
		Let $E$ be an event such that $\pi(E) = \pi''(E) = u(x^*)$.
		Then $x^*\sim x_1Ex_0$ but $x^*\not\sim x_1Ex_0$, which is again a contradiction.
		
		We conclude that $\nu_i(u_i) = \nu_i(u_i'')$ and $\omega_i(u_i) = \omega_i(u_i'')$.
		The function $u_i''$ is ``closer'' to $u_i'$ than is $u_i$, since we have constructed it by moving utilities toward those in $u_i'$.
		Two more modifications of agent 1's utility function along the same lines will result in $u_i'$.
		To this end, we apply the same construction first to the profiles with utility functions $(u_i'',u_j',0,\dots,0)$ and $(u_i''',u_j',0,\dots,0)$ and then to profiles with utility functions $(u_i''',u_j,0,\dots,0)$ and $(u_i',u_j,0,\dots,0)$ (and the same beliefs $\pi_i$ and $\pi_j$).
		\begin{align*}
			u_i'''(x) = 
			\begin{cases}
				u_i''(x^*) &\text{if $u_i''(x) \ge u_i''(x^*)$ and $u_i'(x) < u_i''(x^*)$}\\
				u_i'(x)	&\text{if $u_i''(x) \ge u_i''(x^*)$ and $u_i'(x) \ge u_i''(x^*)$}\\
				u_i''(x)	&\text{if $u_i''(x) < u_i''(x^*)$}
			\end{cases}
			&&
			u_j'(x) = 
			\begin{cases}
				1 &\text{if $u_i''(x)\ge u_i''(x^*)$}\\
				u_i''(x)	&\text{otherwise}
			\end{cases}
		\end{align*}
	In summary, this gives $\nu_i(u_i) = \nu_i(u_i')$ and $\omega_i(u_i) = \omega_i(u_i')$ and proves the lemma.
	\end{step}
\end{proof}

We complete the proof of \Cref{prop:main}.

\main*

\begin{proof}
		First we prove that every SWF satisfying the axioms admits the desired representation. 
		It follows from \Cref{prop:monotonicity} and \Cref{lem:lambdamu} that there are $\bm v,\bm w \in \mathbb R^I_{++}$ such that $\Phi(\p)$ is represented by $\bar\phi(\p) = \frac{1}{\sum_{I_{\p}} v_i}\sum_{I_{\p}}v_i\pi_i$ and $\bar\psi(\p) = \sum_{I_{\p}} w_i u_i$ for all profiles $\p$ with non-degenerate utilities.
		Let $\bar\Phi$ be the SWF represented by $\bar\phi$ and $\bar\psi$.
		We want to show that $\Phi$ and $\bar\Phi$ agree on every non-common utility profile $\p$.
		\begin{case}\label{case:nonindiff}
			Suppose neither $\Phi(\p)$ nor $\bar\Phi(\p)$ is complete indifference. 
			Then it follows from the fact that that $\mathcal R$ is Hausdorff and $\Phi$ and $\bar\Phi$ are continuous and agree on the dense subset of $\bar{\mathcal R}^I$ of profiles with non-degenerate utilities that $\Phi(\p) = \bar\Phi(\p)$.
		\end{case}
		\begin{case}\label{case:dim1}
			Assume that $\emptyset \neq I_{\p}\neq I$ and there is $u\in\mathcal U$ such that $u_i \equiv\pm u$ for all $i\in I_{\p}$. 

		  Suppose $\Phi(\p)$ is complete indifference and $\bar\Phi(\p)$ is not complete indifference.
		  Let $i \in I - I_{\p}$ and $\s[i]' \in \mathcal R$ so that $u_i' \not\equiv \pm u$.
		  \Cref{lem:utilitylinear1} implies that $\psi(\p[-i],\s[i]') = u_i'$.
		  Moreover, by definition, $\bar\psi(\p[-i],\s[i]') \neq u_i'$ and $\bar\Phi(\p[-i],\s[i]')$ is not complete indifference.
		  In particular, $\Phi(\p[-i],\s[i]') \neq \bar\Phi(\p[-i],\s[i]')$.
		  This contradicts \Cref{case:nonindiff}.
			
			Suppose $\Phi(\p)$ is not complete indifference and $\bar\Phi(\p)$ is complete indifference.
			It follows from the definition of $\bar\Phi$ that $\sum_{i\in I_{\p}} w_iu_i \equiv 0$.
			Let $\underbar o,o_1,o_2,\bar o \in O$ with $u(\underbar o) \le u(o_1) \le u(o_2) \le u(\bar o)$ and $u(\underbar o) < u(\bar o)$.
			Let $(\p^n)$ be a sequence of profiles with non-degenerate utilities such that $I_{\p^n} = I_{\p}$ and $u_i^n(o) = u_i(o)$ for all $o \in O - \{o_1,o_2\}$ and $u^n(\underbar o) \le u^n(o_1) \le u^n(o_2) \le u^n(\bar o)$ for all $i \in I_{\p}$ and $n \in \mathbb N$. 
			(Such profiles exist since the space of utility functions satisfying the preceding conditions has dimension 3.)
			 
			 By construction, we have $\bar\psi(\p^n)(\underbar o) = \bar\psi(\p^n)(\bar o)$.
			 On the other hand, \Cref{lem:utilitylinear2} implies that $\psi(\p) \equiv \pm u$ so that $\psi(\p)(\underbar o) \neq \psi(\p)(\bar o)$.
			 Since $\psi$ is continuous by \Cref{lem:cont}, it follows that $\psi(\p^n)(\underbar o) \neq \psi(\p^n)(\bar o)$ for sufficiently large $n$.
			 This contradicts that $\Phi$ and $\bar\Phi$ agree on $\p^n$ for all $n$.\footnote{The argument for the case when $\Phi(\p)$ is complete indifference does not work here since \Cref{lem:utilitylinear1} does not rule out that $\psi(\p[-i],\s[i]') = u_i'$ even if $\psi(\p) \not \equiv \pm u$.}			  
		\end{case}
		\begin{case}
			Assume that $\p$ has utility dimension at least 2 and exactly one of $\Phi(\p)$ and $\bar\Phi(\p)$ is complete indifference.
			We may choose $\p$ so that $I_{\p}$ is minimal with these properties. 
			Note that $\Phi(\p[\sim i]) = \bar\Phi(\p[\sim i])$ for all $i \in I_{\p}$ by minimality and \Cref{case:dim1}.
			
			Suppose $\Phi(\p)$ is not complete indifference and $\bar\Phi(\p)$ is complete indifference.
			Let $i,j\in I_{\p}$ with $u_i \not\equiv \pm u_j$, which exist since $\p$ has utility dimension at least 2.
			By definition of $\bar\Phi$, $\bar\psi(\p[\sim i]) \equiv -u_i$ and $\bar\psi(\p[\sim j]) \equiv -u_j$. 
			Hence, $\psi(\p[\sim i]) \equiv -u_i$ and $\psi(\p[\sim j]) \equiv -u_j$.
			\Cref{lem:utilitylinear1} implies that $\psi(\p) \equiv \pm u_i$ and $\psi(\p) \equiv \pm u_j$.
			This contradicts $u_i\not\equiv\pm u_j$.
			
			Suppose $\Phi(\p)$ is complete indifference and $\bar\Phi(\p)$ is not complete indifference.
			\Cref{lem:utilitylinear1} implies that $\psi(\p[\sim i]) \equiv \pm u_i$ and $\psi(\p[\sim j]) \equiv \pm u_j$.
			 Hence, $\bar\psi(\p[\sim i]) \equiv \pm u_i$ and $\bar\psi(\p[\sim j]) \equiv \pm u_j$. 
			 The definition of $\bar\psi$ implies that $\bar\psi(\p) \equiv\pm u_i$ and $\bar\psi(\p)\equiv\pm u_j$.
			 This contradicts $u_i\not\equiv\pm u_j$.
		\end{case}
	\medskip
	
	Second, we prove that $\Phi$ satisfies the axioms on all profiles if it has a representation as in the statement of the theorem.
	It is easy to see that restricted monotonicity, faithfulness, no belief imposition, and continuity hold.
	We verify that independence of redundant acts is also satisfied.
	Let $\p,\p'\in\bar{\mathcal R}^I$ be two profiles.
	Let $\mathcal E'$ be a sub-sigma-algebra of $\mathcal E$ and $O'\subset O$ be a set of outcomes so that $\mathcal A' = \mathcal A(\mathcal E',O')$ is co-redundant for $\p$ and $\p'$.
	For all $i\in I$, both $(\pi_i|_{\mathcal E'},u_i|_{O'})$ and $(\pi_i'|_{\mathcal E'},u_i'|_{O'})$ represent $\s[i]|_{\mathcal A'} = \s[i]'|_{\mathcal A'}$.
	By condition~\ref{item:cored2} of co-redundancy, $\pi_i|_{\mathcal E'}$ and $\pi_i|_{\mathcal E'}$ are non-atomic.
	Hence, the representation of $\s[i]|_{\mathcal A'}$ is unique up to positive affine transformations of the utility function.
	That is, $\pi_i|_{\mathcal E'} = \pi_i'|_{\mathcal E'}$ and $u_i|_{O'}$ is a positive affine transformation of $u_i'|_{O'}$.
	By condition~\ref{item:cored1} of co-redundancy, $\inf\{u_i(x)\colon x\in O'\} = 0$ and $\sup\{u_i(x)\colon x\in O'\} = 1$, and likewise for $u_i'$.
	Thus, $u_i|_{O'} = u_i'|_{O'}$.
	It follows that $\sum_{i\in I} v_i\pi_i|_{\mathcal E'} = \sum_{i\in I} v_i\pi_i'|_{\mathcal E'}$ and $\sum_{i\in I} w_i u_i|_{O'} = \sum_{i\in I} w_i u_i'|_{O'}$, and so $\Phi(\p)|_{\mathcal A'} = \Phi(\p')|_{\mathcal A'}$.
\end{proof}

\cormain*

\begin{proof}
	\setcounter{step}{0}
	\setcounter{case}{0}
	We can assume that $\Phi$ has the representation as in \Cref{prop:main} for all non-common utility profiles.
	We show that all $v_i$ and all $w_i$ have to be equal. 
	
	Consider two agents $i,j\in I$ and any profile $\p\in\bar{\mathcal R}^I$ with $I_{\p} =\{i,j\}$ so that $\pi_i\neq\pi_j$ and $u_i\neq u_j$.
	Let $\p'$ be the profile obtained from $\p$ when $i$ and $j$ swap their preferences. 
	That is, $\s[i]' = \s[j]$ and $\s[j]' = \s[i]$.	
	Anonymity requires that $\Phi(\p) = \Phi(\p')$.
	Hence, $v_i\pi_i +  v_j \pi_j = v_i\pi_j + v_j \pi_i$ and $w_iu_i + w_ju_j \equiv w_iu_j + w_ju_i$.
	These equalities can hold only if $v_i = v_j$ and $w_i = w_j$.
	Since multiplication of all weights by the same positive constant does not change the preferences of society, we may assume that all weights are equal to 1.
\end{proof}

\begin{remark}[Proof for identical beliefs]\label{rem:identicalbeliefsproof}
	Let $\pi\in\Pi$ and denote by $\bar{\mathcal R}_\pi$ the set of all preference relations $\s$ for which there is $u\in \bar{\mathcal U}$ so that $\pi$ and $u$ represent $\s$. 
	From the proof of \Cref{prop:main}, one can obtain the following result.
	\begin{restatable}{proposition}{identicalbelief}\label{prop:identicalbelief}
		Let $\Phi$ be an SWF satisfying restricted monotonicity, independence of redundant acts, faithfulness, no belief imposition, and continuity on the domain $\bar{\mathcal R}_\pi^I$. 
		Then, there is $\bm w \in\mathbb R_{++}^I$ such that $\Phi(\p)$ is represented by $\pi$ and $\sum_{I_{\p}} w_i u_i$
		for all non-common utility profiles $\p\in\bar{\mathcal R}_\pi^I$.
	\end{restatable}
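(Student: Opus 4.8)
The plan is to re-run the three-step proof of \Cref{prop:main} on the restricted domain $\bar{\mathcal R}_\pi^I$, exploiting the fact that fixing the belief collapses the belief-aggregation half of the argument entirely. Since a profile in $\bar{\mathcal R}_\pi^I$ is determined by its utility functions and $|O|\ge 4$, the profiles with non-degenerate utilities are still dense, so all density and continuity steps go through unchanged.

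First I would dispose of the beliefs. For any $\p\in\bar{\mathcal R}_\pi^I$ with $\Phi(\p)$ not complete indifference, the first assertion of \Cref{lem:beliefaffine2} — which uses only restricted monotonicity, faithfulness, and no belief imposition, and whose proof builds society's belief up from complete indifference one agent at a time via \Cref{lem:beliefaffine1} — gives $\phi(\p)=\sum_{i\in I_{\p}}v_i\pi_i$ with $\sum_{i\in I_{\p}}v_i=1$; since every $\pi_i$ equals $\pi$, this forces $\phi(\p)=\pi$. When $\Phi(\p)$ is complete indifference, $\phi(\p)$ is free and I would just take it to be $\pi$. So society's belief is always $\pi$, and the belief restriction in restricted monotonicity is automatically met on this domain.

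Next I would run the utility half. The proofs of \Cref{lem:utilitylinear1}, \Cref{lem:utilitylinear2}, \Cref{lem:utilitytriple}, and \Cref{lem:utilityindependence} use only restricted monotonicity and faithfulness, and the acts they construct via \citeauthorhloff{Liap40a}'s theorem and \Cref{lem:simple} only get easier when all beliefs coincide, since one no longer needs a single act realizing a prescribed distribution under two different beliefs at once. This yields continuous functions $\omega_i$ of $u_i$ — equivalently of $\s[i]$, since $\pi$ is fixed — with values in $\mathbb R_{++}$ such that $\psi(\p)\equiv\sum_{i\in I_{\p}}\omega_i(u_i)u_i$ on all non-degenerate-utility profiles, the continuity coming from the argument of \Cref{step:mucont} and the final step in the proof of \Cref{prop:monotonicity}. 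To show the $\omega_i$ are constant I would invoke \Cref{lem:lambdamu}: on this domain \Cref{step:beliefindependent} is unnecessary, and \Cref{step:utilityindependent} — which applies independence of redundant acts only to regular sets $\mathcal A(\mathcal E,\{x_0,x_1,x^*\})$, whose co-redundancy condition~\ref{item:cored2} is automatic because $\pi$ is non-atomic — carries over with $\pi_i=\pi_j=\pi$ in all the auxiliary profiles. Hence $\omega_i\equiv w_i\in\mathbb R_{++}$, giving the claimed representation on all non-degenerate-utility profiles. Finally, the case analysis closing the proof of \Cref{prop:main} (\Cref{case:nonindiff}, \Cref{case:dim1}, and the remaining case), which relies only on \Cref{lem:utilitylinear1}, \Cref{lem:utilitylinear2}, \Cref{lem:utilitytriple}, \Cref{lem:cont}, continuity, and density, extends the representation to all non-common utility profiles in $\bar{\mathcal R}_\pi^I$.

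The main obstacle is purely bookkeeping: one must check that none of the cited lemmas tacitly relies on having distinct beliefs available. The only real sticking point is \Cref{lem:lambdamu}, where the auxiliary profiles are written with $\pi_i\neq\pi_j$; but that hypothesis is used there only to separate society's belief from $\pi_i$, which is moot once society's belief is pinned to $\pi$, and the utility part of that argument needs only a single non-atomic belief — which $\pi$ supplies.
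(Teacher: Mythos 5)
Your proposal is correct and follows essentially the same route as the paper, which proves this proposition by observing that restricted monotonicity and faithfulness pin society's belief to $\pi$ (so \Cref{sec:beliefaggregation} collapses), that \Cref{sec:tasteaggregation} goes through unchanged, and that \Cref{step:lambda} and \Cref{step:sigmapos} of \Cref{prop:monotonicity}, \Cref{lem:subalgebra}, and \Cref{step:beliefindependent} and parts of \Cref{step:utilityindependent} of \Cref{lem:lambdamu} can be omitted. Your additional checks --- that the non-degenerate-utility profiles remain dense in $\bar{\mathcal R}_\pi^I$, that the Liapounoff constructions only get easier with a single belief, and that the hypothesis $\pi_i\neq\pi_j$ in \Cref{step:utilityindependent} is only needed for the (now vacuous) belief-weight argument --- are exactly the bookkeeping the paper's remark leaves implicit.
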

	Restricted monotonicity and faithfulness imply the restricted Pareto condition.
	Hence, the belief representing $\Phi$ is $\pi$ is for all $\p\in\bar{\mathcal R}_\pi^I$.
	\Cref{sec:beliefaggregation} can otherwise be omitted.
	\Cref{sec:tasteaggregation} stays the same.
	\Cref{step:lambda} and \Cref{step:sigmapos} of \Cref{prop:monotonicity}, \Cref{lem:subalgebra}, and \Cref{step:beliefindependent} and parts of \Cref{step:utilityindependent} of \Cref{lem:lambdamu} can also be omitted.
\end{remark}

\newpage

\section{Reference Table}\label{sec:referencetable}
	 
	 \vspace{2ex}
	\centering
	\begin{tabularx}{\textwidth}{lllX}
		\toprule
		\multicolumn{2}{l}{Symbol [elements]} & Name & Mathematical object\\
		\midrule
		$\Omega$ & & states & set \\
		$\mathcal E$  & $E$ & events & sigma-algebra on $\Omega$\\
		$O$ & $x,y$ & outcomes & set\\
		$\mathcal A$ &  $f,g$ & acts &  measurable functions $\Omega\rightarrow O$\\
		$\Pi$ & $\pi$ & beliefs &  non-atomic probability measures on $(\Omega,\mathcal E)$\\
		$\mathcal U$ ($\bar{\mathcal U}$) & $u$ & utility functions &  measurable functions $O\rightarrow \mathbb R$ with inf 0 and sup 1 (plus the function constant at 0)\\
		$\mathcal R$ ($\bar{\mathcal R}$) & $\s$ & preference relations & SEU preferences on $\mathcal A$ (plus complete indifference)\\
		$I$ & $i$ & agents & finite set\\
		$\bar{\mathcal R}^I$ & $\p$ & preference profiles & $I$-tuples with components in $\bar{\mathcal R}$\\
		$\s[i]$ & & preferences of agent $i$ & component of $\p$ with index $i$\\
		$\pi_i$ & & belief of agent $i$ & element of $\Pi$ representing $\s[i]$\\
		$u_i$ & & utility function of agent $i$ & element of $\bar{\mathcal U}$ representing $\s[i]$\\
		$\Phi$ & & social welfare function & function from $\bar{\mathcal R}^I \rightarrow \bar{\mathcal R}$\\
		\bottomrule
	\end{tabularx}


\begin{thebibliography}{30}
\providecommand{\natexlab}[1]{#1}
\providecommand{\url}[1]{\texttt{#1}}
\expandafter\ifx\csname urlstyle\endcsname\relax
  \providecommand{\doi}[1]{doi: #1}\else
  \providecommand{\doi}{doi: \begingroup \urlstyle{rm}\Url}\fi

\bibitem[Alon and Gayer(2016)]{AlGa16a}
S.~Alon and G.~Gayer.
\newblock Utilitarian preferences with multiple priors.
\newblock \emph{Econometrica}, 84\penalty0 (3):\penalty0 1181--1201, 2016.

\bibitem[Anscombe and Aumann(1963)]{AnAu63a}
F.~J. Anscombe and R.~J. Aumann.
\newblock A definition of subjective probability.
\newblock \emph{Annals of Mathematical Statistics}, 34\penalty0 (1):\penalty0
  199--205, 1963.

\bibitem[Arrow(1951)]{Arro51a}
K.~J. Arrow.
\newblock \emph{Social Choice and Individual Values}.
\newblock New Haven: Cowles Foundation, 1st edition, 1951.
\newblock 2nd edition 1963.

\bibitem[Blackorby et~al.(2004)Blackorby, Donaldson, and Mongin]{BDM04a}
C.~Blackorby, D.~Donaldson, and P.~Mongin.
\newblock Social aggregation without the expected utility hypothesis.
\newblock Cahier n 2004-020, Ecole Polytechnique, Centre National de la
  Recherche Scientifique, Laboratoire D'Econometrie, 2004.

\bibitem[B{\"o}rgers and Choo(2017)]{BoCh17a}
T.~B{\"o}rgers and Y.-M. Choo.
\newblock Revealed relative utilitarianism.
\newblock {CES}ifo Working {P}aper No. 6613, 2017.

\bibitem[Chambers and Hayashi(2014)]{ChHa14a}
C.~P. Chambers and T.~Hayashi.
\newblock Preference aggregation with incomplete information.
\newblock \emph{Econometrica}, 82\penalty0 (2):\penalty0 589--599, 2014.

\bibitem[Dhillon(1998)]{Dhil98a}
A.~Dhillon.
\newblock Extended {P}areto rules and relative utilitarianism.
\newblock \emph{Social Choice and Welfare}, 15\penalty0 (4):\penalty0 521--542,
  1998.

\bibitem[Dhillon and Mertens(1999)]{DhMe99a}
A.~Dhillon and J.-F. Mertens.
\newblock Relative utilitarianism.
\newblock \emph{Econometrica}, 67\penalty0 (3):\penalty0 471--498, 1999.

\bibitem[Dietrich(2019)]{Diet19a}
F.~Dietrich.
\newblock Fully {B}ayesian aggregation.
\newblock 2019.
\newblock Working paper.

\bibitem[Dr{\`e}ze and Rustichini(1999)]{DrRu99a}
J.~H. Dr{\`e}ze and A.~Rustichini.
\newblock State-dependent utility and decision theory.
\newblock In S.~Barber{\`a}, P.~Hammond, and C.~Seidl, editors, \emph{Handbook
  of Utility Theory}, volume~2, pages 839--892. Springer, Boston, MA, 1999.

\bibitem[Fleurbaey(2009)]{Fleu09a}
M.~Fleurbaey.
\newblock Two variants of {H}arsanyi's aggregation theorem.
\newblock \emph{Economics Letters}, 105\penalty0 (3):\penalty0 300--302, 2009.

\bibitem[Gajdos et~al.(2008)Gajdos, Tallon, and Vergnaud]{GMV08a}
T.~Gajdos, J.-M. Tallon, and J.-C. Vergnaud.
\newblock Representation and aggregation of preferences under uncertainty.
\newblock \emph{Journal of Economic Theory}, 141\penalty0 (1):\penalty0 68--99,
  2008.

\bibitem[Gayer et~al.(2014)Gayer, Gilboa, Samuelson, and Schmeidler]{GGSS14a}
G.~Gayer, I.~Gilboa, L.~Samuelson, and D.~Schmeidler.
\newblock Pareto efficiency with different beliefs.
\newblock \emph{The Journal of Legal Studies}, 43\penalty0 (2):\penalty0
  151--171, 2014.

\bibitem[Gilboa and Schmeidler(1989)]{GiSc89a}
I.~Gilboa and D.~Schmeidler.
\newblock Maxmin expected utility with non-unique prior.
\newblock \emph{Journal of Mathematical Economics}, 18\penalty0 (2):\penalty0
  141--153, 1989.

\bibitem[Gilboa et~al.(2004)Gilboa, Samet, and Schmeidler]{GSS04a}
I.~Gilboa, D.~Samet, and D.~Schmeidler.
\newblock Utilitarian aggregation of beliefs and tastes.
\newblock \emph{Journal of Political Economy}, 112\penalty0 (4):\penalty0
  932--938, 2004.

\bibitem[Gilboa et~al.(2014)Gilboa, Samuelson, and Schmeidler]{GSS14a}
I.~Gilboa, L.~Samuelson, and D.~Schmeidler.
\newblock No-betting-{P}areto dominance.
\newblock \emph{Econometrica}, 82\penalty0 (4):\penalty0 1405--1442, 2014.

\bibitem[Harsanyi(1955)]{Hars55a}
J.~C. Harsanyi.
\newblock Cardinal welfare, individualistic ethics, and interpersonal
  comparisons of utility.
\newblock \emph{Journal of Political Economy}, 63\penalty0 (4):\penalty0
  309--321, 1955.

\bibitem[Hylland and Zeckhauser(1979)]{HyZe79b}
A.~Hylland and R.~Zeckhauser.
\newblock The impossibility of {B}ayesian group decision making with separate
  aggregation of beliefs and values.
\newblock \emph{Econometrica}, 47\penalty0 (6):\penalty0 1321--1336, 1979.

\bibitem[Jackson and Yariv(2014)]{JaYa14a}
M.~O. Jackson and L.~Yariv.
\newblock Present bias and collective dynamic choice in the lab.
\newblock \emph{American Economic Review}, 104\penalty0 (12):\penalty0
  4184--4204, 2014.

\bibitem[Karni(1998)]{Karn98a}
E.~Karni.
\newblock Impartiality: Definition and representation.
\newblock \emph{Econometrica}, 66\penalty0 (6):\penalty0 1405--1415, 1998.

\bibitem[Karni et~al.(1983)Karni, Schmeidler, and Vind]{KSV83a}
E.~Karni, D.~Schmeidler, and K.~Vind.
\newblock On state dependent preferences and subjective probabilities.
\newblock \emph{Econometrica}, 51\penalty0 (4):\penalty0 1021--1031, 1983.

\bibitem[Liapounoff(1940)]{Liap40a}
A.~Liapounoff.
\newblock Sur les fonctions-vecteurs compl{\`e}tement additives.
\newblock \emph{Bulletin of the American Mathematical Solciety URSS. S{\'e}r.
  Math. [Izv. Akad. Nauk SSSR]}, 10\penalty0 (3):\penalty0 465--478, 1940.

\bibitem[Mongin(1995)]{Mong95a}
P.~Mongin.
\newblock Consistent {B}ayesian aggregation.
\newblock \emph{Journal of Economic Theory}, 66\penalty0 (2):\penalty0
  313--351, 1995.

\bibitem[Mongin(1997)]{Mong97a}
P.~Mongin.
\newblock Spurious unanimity and the {P}areto principle.
\newblock {THEMA}, Universit\'{e} de Cergy-Pontoise, 1997.

\bibitem[Mongin(1998)]{Mong98a}
P.~Mongin.
\newblock The paradox of the {B}ayesian experts and state-dependent utility
  theory.
\newblock \emph{Journal of Mathematical Economics}, 29\penalty0 (3):\penalty0
  331--361, 1998.

\bibitem[Mongin and Pivato(2015)]{MoPi15a}
P.~Mongin and M.~Pivato.
\newblock Ranking multidimensional alternatives and uncertain prospects.
\newblock \emph{Journal of Economic Theory}, 157:\penalty0 146--171, 2015.

\bibitem[Mongin and Pivato(2019)]{MoPi19a}
P.~Mongin and M.~Pivato.
\newblock Social preference under twofold uncertainty.
\newblock \emph{Economic Theory}, 70:\penalty0 633--663, 2019.

\bibitem[Savage(1954)]{Sava54a}
L.~J. Savage.
\newblock \emph{The Foundations of Statistics}.
\newblock Wiley Publications in Statistics. Wiley and Sons, 1954.

\bibitem[Segal(2000)]{Sega00a}
U.~Segal.
\newblock Let's agree that all dictatorships are equally bad.
\newblock \emph{Journal of Political Economy}, 108\penalty0 (3):\penalty0
  569--589, 2000.

\bibitem[Sprumont(2019)]{Spru19a}
Y.~Sprumont.
\newblock Relative utilitarianism under uncertainty.
\newblock \emph{Social Choice and Welfare}, 53:\penalty0 621--639, 2019.

\end{thebibliography}
\end{document}